\def\keywords{\vspace{.5em}
{\textit{Keywords}:\,\relax%
}}
\title{Estimation of Markovian-regime-switching models with independent regimes}
\author[1,2]{Nigel Bean}
\author[1,2]{Angus Lewis\thanks{This research was supported by the provision of an Australian Government Research Training Program Scholarship, a University of Adelaide postgraduate research scholarship, and an Australian Research Council Discover Project, Grant/Award Number: ARC DP180103106. \\ \texttt{email: angus.lewis@adelaide.edu.au}}}
\author[1,2]{Giang T. Nguyen}
\affil[1]{School of Mathematical Sciences, The University of Adelaide, Australia}
\affil[2]{Australian Research Council Centre of Excellence in Mathematical and Statistical Frontiers (ACEMS) affiliated authors.}
\newtheorem{theorem}{Theorem}
\newtheorem{lemma}[theorem]{Lemma}
\newtheorem{corollary}[theorem]{Corollary}
\newtheorem{example}[theorem]{Example}
\DeclareMathOperator*{\argmax}{arg\,max}
\begin{document}

\maketitle

\begin{abstract}
Markovian-regime-switching (MRS) models are commonly used for modelling economic time series, including electricity prices where \emph{independent regime} models are used, since they can more accurately and succinctly capture electricity price dynamics than \emph{dependent regime} MRS models can. We can think of these independent regime MRS models for electricity prices as a collection of independent AR(1) processes, of which only one process is observed at each time; which is observed is determined by a (hidden) Markov chain. Here we develop novel, computationally feasible methods for MRS models with independent regimes including forward, backward and EM algorithms. The key idea is to augment the hidden process with a counter which records the time since the hidden Markov chain last visited each state that corresponding to an AR(1) process. 
\end{abstract}
\keywords{Electricity price model, forward-backward algorithm, hidden Markov model, Markov-switching time series}

\section{Introduction} 
A commonly used model for economic time series is the Markovian-regime-switching (MRS) model whereby multiple stochastic processes are interweaved by a Markov chain. The general idea is that there exist multiple regimes underlying the observation process, and depending on which regime the system is in, different characteristics are displayed. For example, for stock prices we could suppose that there is a \emph{bull regime} where prices trend upward and are comparatively non-volatile, and a \emph{bear regime} where prices trend downward and are relatively volatile. Our motivating application is electricity prices where it is common to model prices with an MRS model. Due to the fact that electricity cannot currently be stored efficiently, electricity prices show characteristics not seen in typical commodity markets, for example mean reversion, prices spikes, drops and negative prices. MRS models are able to capture these behaviours, and have been popular tools for modelling randomness in electricity markets. Typically MRS models with two or three regimes are used \citep{ethier1998,deng2000,huisman2003,huisman2003regime}: a \emph{base regime} where prices are relatively non-volatile, a \emph{spike regime} where prices are volatile and high, and sometimes a \emph{drop regime} is included, where prices are volatile and low. More broadly, models with Markovian switching find application in biology \citep{albert1991}, weather modelling \citep{thyer2000,zucchini1991}, speech recognition \citep{levinson1983} and more.  

The earliest applications of MRS models for electricity prices are \cite{ethier1998} and \cite{deng2000}. %
Their models specify that prices decay back to base levels following a spike according to an autoregressive process of order 1 (AR(1)). However, this is not consistent with observations from the market where a more immediate return to base levels is observed \citep{janczura2010,escribano2002}. For this reason, \cite{huisman2003regime} introduce a three-regime MRS model which separates the behaviour of base and spike prices. They use one regime to capture base prices, one regime to capture spikes, and one regime to return prices to base levels following a spike. Motivated by the need to capture the distinct and abrupt price spikes in electricity markets, it has become popular to specify MRS models with \emph{independent} regimes, which are able to capture this behaviour without the addition of the extra regime required by \cite{huisman2003regime}. We may think of a dependent regime MRS models as a single process, where the dynamics of the process is governed by the hidden Markov chain, whereas we may think of an independent regime MRS model as a collection of independent AR(1) processes, and at each time \(t\), the hidden Markov chain chooses which process is observed. 

We say a model has independent regimes if, given the hidden regime sequence, the observations generated from each regime are independent of observations generated from any other regime; and we say a model has dependent regimes otherwise.  Independent regime MRS models were introduced in \cite{huisman2003} and have since been popular \citep{weron2004,janczura2012,weron2014}. In these models, typically at least one regime is specified as an AR(1) process. AR(1) processes have a dependence structure between values at successive times which, coupled with the independent regimes assumption in an MRS model, complicates the dependence structure between an observation at time \(t\) and all prior observations -- the dependence between prices is governed by the hidden regime process and is therefore random. It is for this reason that the \emph{forward}, \emph{backward} and \emph{expectation-maximisation} (EM) algorithms for traditional (dependent regime) MRS models or hidden Markov models do not apply. For each time \(t\), the forward algorithm evaluates the probabilities that the hidden process is in each regime given the observed values \emph{up to time} \(t\). The backward algorithm then uses the output of the forward algorithm to calculate the probabilities that the hidden process is in each regime given \emph{all} observations. The EM algorithm is an iterative optimisation algorithm, iterating between an E-step and an M-step, used to find the \emph{maximum likelihood estimates} of parameters  for models with missing data or latent variables -- the E-step is computed by the backward algorithm. 

For simplicity, in this work, we focus on independent regime MRS models with AR(1) and i.i.d.~regimes only, since these are the type of models used in the electricity price modelling literature. However, we believe the methods developed here are more general and apply to Markov-switching processes with regimes that are discrete-time Markov chains generally, and can also be extended to more general autoregressive processes. The most popular method of inference for the models used for electricity pricing is an approximation to the EM algorithm introduced by \cite{janczura2012}, which we show can be unreliable (see \cite{lewis2018} also). Here, a novel, computationally feasible, and exact likelihood-based framework to solve this problem is developed. This work is related to the forward, backward, and EM algorithms for traditional MRS models, and, more closely, to the same algorithms for hidden semi-Markov models, where the idea of augmenting the hidden process with a counter is also used 
\cite{yu2016}. The novel algorithms have complexity \(\mathcal O\left(M^2T^{k+1}k^k\right)\) where \(M<\infty\) is the total number of regimes in the model, \(T\) is the length of the observed data set, and \(k\) is the number of AR(1) regimes in the model. As \(\mathcal O\left(M^2T^{k+1}k^k\right)\) may be impractically large depending on the values of \(T\) and \(k\), we also present an approximation to our algorithm that is \(\mathcal O\left(M^2TD^{k}k^k\right)\) where \(D>0\) is the maximum length of the memory of the AR(1) processes, that is, we specify that the observations \(x_t\) may depend only on \(x_{t-\ell}\) for \(\ell\in\{1,2,...,D\}\), and is independent of any \(x_{t-\ell}\) for \(\ell>D\).

This paper is structured as follows. We formally introduce the MRS model, in particular the independent-regime type in Section~\ref{model intro}, and discuss the approximate parameter inference algorithm of \cite{janczura2012} in Section~\ref{sec:existingmethods}. A novel forward algorithm, which is used to evaluate the likelihood and \emph{filtered state probabilities} for these models, is presented in Section~\ref{myfwd}. Using the outputs of the forward algorithm we develop a novel backward algorithm in Section~\ref{mybkwd}. The backward algorithm is used to evaluate the \emph{smoothed state probabilities}, which are applied in Section~\ref{myEM} to construct an EM algorithm. We introduce the truncated approximations of our algorithms at the end of Section~\ref{myEM}. Section~\ref{simulations} provides simulation evidence that our EM algorithm is consistent and that the truncation approximations are reasonable, while in Section~\ref{application} we apply our algorithms to estimate MRS models for the South Australian wholesale electricity market. Finally, we make concluding remarks in Section~\ref{conclusion}.

\subsection{MRS models: A brief introduction}\label{model intro}
An MRS model is built from two pieces, an unobservable regime sequence, \(\{R_t\}_{t\in \mathbb N}\), which is a finite-state Markov Chain, and an observation sequence, \(\{X_t\}_{t\in \mathbb N}\). Let us denote the state space of the hidden regime sequence as \(\mathcal S = \{1,2,\dots,M<\infty\}\), and the transition matrix as \(P=[p_{ij}]_{i,j \in \mathcal S}\). 

The simplest MRS model is the hidden Markov model (HMM) where observations \(X_t\) take values in a discrete set, and \(X_t\) is independent of \(X_{t-1},\dots,X_0\) and \(X_{t+1},X_{t+2},\dots\) given the regime at time \(t\), \(R_t\). In general, MRS models are specified in terms of distributions that allow dependence on past observations, given the current regime. That is, the model defines distributions,  
\[X_t|\{R_t,X_{t-1},X_{t-2},\dots,X_0\}\sim F^{R_t},\]
for some distribution \(F^{R_t}\). The MRS model, as introduced by \cite{hamilton1989,hamilton1990}, specifies that 
\(X_t|\{R_t,X_{t-1},X_{t-2},\dots,X_0\}\) 
follows some time-series model (an autoregressive process of order \(p\), for example) with dependence on a finite number of past observations, but \emph{not on \(R_0,..,R_{t-1}\)}. That is, the dependence structure does not take into account which regime the past observations belong to. For example, the following is a dependent regime MRS model.
\begin{example}[An MRS model with dependent regimes]\label{ex: Dependent MRS model 3 regime sim}
    Let \(\mathcal S = \{1,2\}\), and \(p_{11}=p_{22}=0.9\),
    and specify 
    \begin{align*}
        &X_t | \{R_t=1, X_{t-1}, X_{t-2}, \dots,X_0\} = 0.6 X_{t-1}+\varepsilon_t^{\left(1\right)},\\
        &X_t | \{R_t=2, X_{t-1}, X_{t-2}, \dots,X_0\} = 1+0.9 X_{t-1} + \varepsilon_t^{\left(2\right)}.
    \end{align*}
    for \(\varepsilon_t^{\left(i\right)}\sim\) i.i.d.~N(0,1) for \(i=1,2\). So \(X_t\) follows AR(1) dynamics in both Regimes 1 and 2. This is a dependent-regime MRS model since \(X_t\) depends on \(X_{t-1}\) regardless of which regime the lagged observation, \(X_{t-1}\), came from. Figure~\ref{fig:simulated examples} (Left) shows a simulation of this model. 
\end{example}
\begin{figure}
\begin{center}
\includegraphics[width=0.5\textwidth]{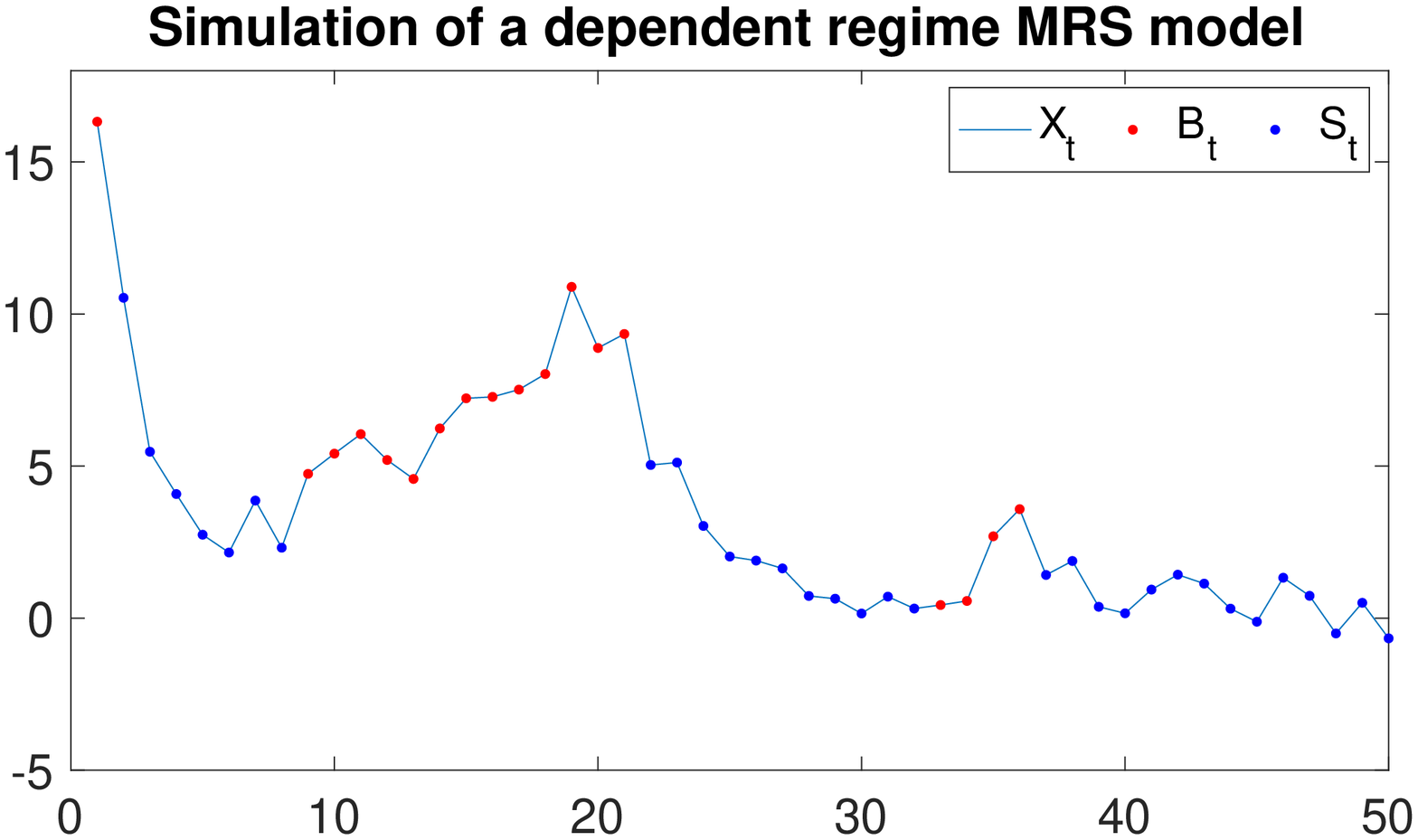}%
\includegraphics[width=0.5\textwidth]{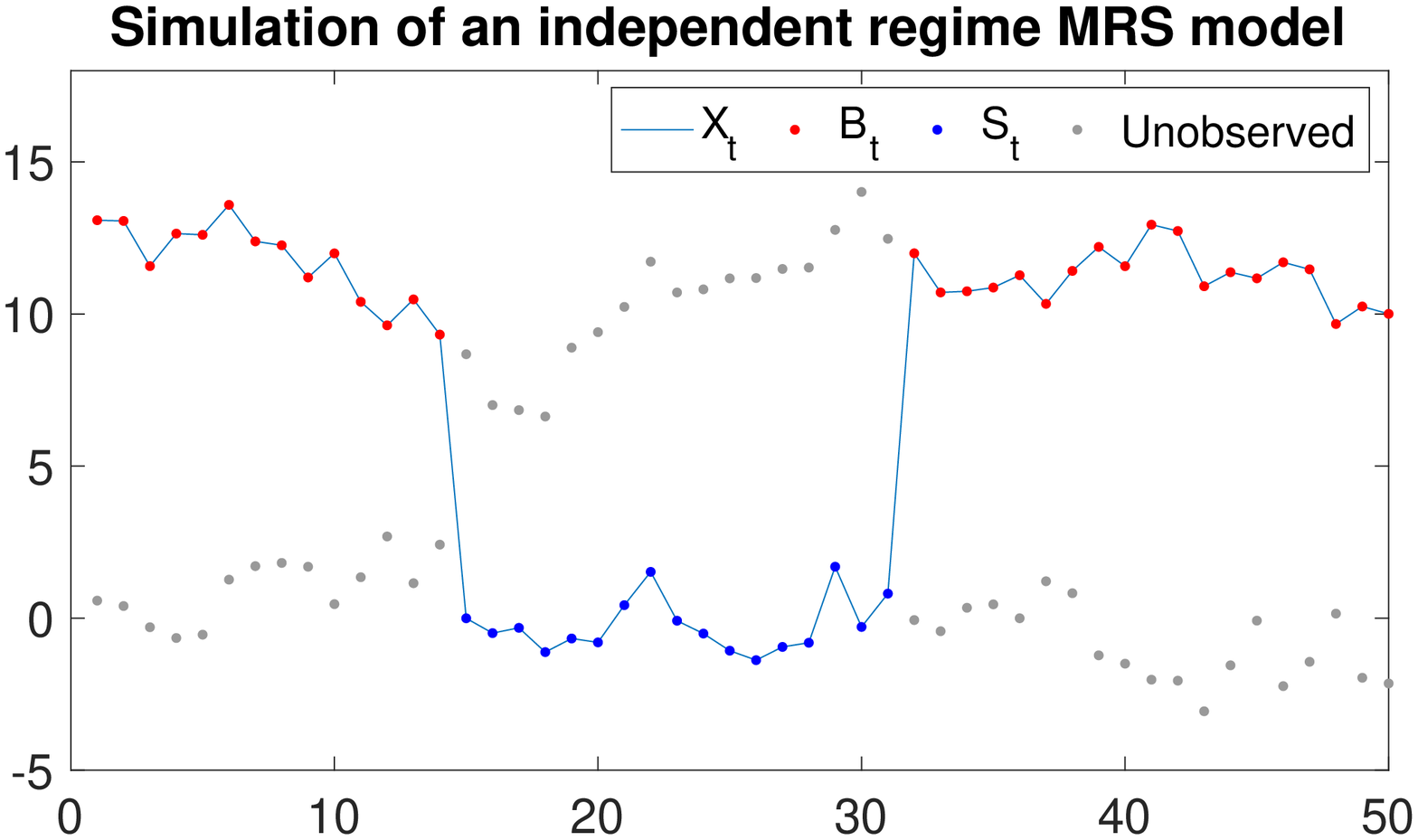}
\caption{(Left) Simulation of the dependent regime MRS model example. (Right) Simulation of the independent regime MRS model example. Notice that after a change of regime the independent regime model shows a distinct change in characteristics. }
\label{fig:simulated examples}
\end{center}
\end{figure}
In this paper we relax the assumption that \emph{the current observation \(X_t\) is conditionally independent of \(R_0,..,R_{t-1}\)}, in order to increase flexibility in these models. In particular, we consider models where, given \(R_t=i\), \(X_t\) depends only on lagged values from Regime \(i\), thus the dependence structure is random as it is a function of \(R_0,\dots,R_{t-1}\). The following is an example of an independent regime MRS model.
\begin{example}[An MRS model with independent regimes]\label{ex: Independent MRS model 2 regime sim}
    Let \(\mathcal S = \{1,2\}\), and \(p_{11}=p_{22}=0.9\), and define the following AR(1) processes 
    \begin{align*}
        B_t &=  0.6 B_{t-1} + \varepsilon_t^B,\\
	S_t &= 1 + 0.9S_{t-1} + \varepsilon_t^S,
    \end{align*}
    where \(\varepsilon_t^B\) and \(\varepsilon_t^S\) are sequences of i.i.d.~N(0,1) random variables. Then, construct the MRS model as follows
    \begin{equation*}
        X_t = \begin{cases} B_t, & \text{ if } R_t = 1, \\ S_t ,& \text{ if } R_t = 2. \end{cases}
    \end{equation*}
    Figure~\ref{fig:simulated examples} (Right) shows a simulation of this model. 
\end{example}

A precise definition of dependent and independent regime models is the following. Define the sets \(\mathcal A_i := \{t\in \mathbb N : R_t = i\}\), \(i \in \mathcal S\). We say that a model has \emph{independent regimes} if, given the regime sequence, the sets \(\{X_{t} : t \in \mathcal A_i\}\), \(i \in \mathcal S\), are independent. Otherwise, it is a \emph{dependent regime} model.

\subsection{Existing methods}\label{sec:existingmethods}
For the simplest form of MRS model, the HMM, likelihood evaluation and maximisation algorithms were first developed in a series of papers, \cite{baum1966}, \cite{baum1967}, and \cite{baum1970}, and subsequent work on MRS models is typically closely related to this. The first algorithms for the more general \emph{dependent regime} MRS model were presented by \cite{hamilton1989,hamilton1990}. The main issue for maximum likelihood estimation of models with hidden regimes is that the regime sequence is unobserved, thus to naively evaluate the likelihood requires calculation of the marginal distribution 
\begin{equation}\label{likelihoodeq}
L\left(\boldsymbol{\theta}\right) := f^{\boldsymbol \theta}_{\boldsymbol X}\left(\boldsymbol x\right) = \sum_{\boldsymbol R \in \mathcal S^{T+1}} f^{\boldsymbol \theta}_{\boldsymbol X,\boldsymbol R}\left(\boldsymbol x , \boldsymbol R \right) = \sum_{\boldsymbol R \in \mathcal S^{T+1}} f^{\boldsymbol \theta}_{\boldsymbol X|\boldsymbol R}\left(\boldsymbol x |\boldsymbol R\right)f^{\boldsymbol \theta}_{\boldsymbol R}\left(\boldsymbol R\right),
\end{equation} 
where \(f_{\boldsymbol Y}^{\boldsymbol \theta}\left(\cdot\right)\) denotes the distribution function of a random vector \(\boldsymbol Y\) with parameters \(\boldsymbol \theta\), \(\boldsymbol x=\left(x_0,\dots,x_T\right)\) is a sequence of observed values, and \(\mathcal S^{T+1}\) is the space of all possible regime sequences of length \(T+1\), \(\boldsymbol R=\left(R_0,...,R_T\right)\). The number of sequences in \(\mathcal S^{T+1} \) is \(M^{T+1}\) which, for most realistic datasets, is computationally infeasible to evaluate in this form. In the context of HMMs, the sum (\ref{likelihoodeq}) is made computationally feasible by the forward algorithm \citep{baum1970}, and the maximisation of the likelihood is commonly performed via the Baum-Welch algorithm \citep{baum1970}, which is a specific case of the EM algorithm \citep{dempster1977} and uses the backward algorithm \citep{baum1970}. 

The works of \cite{hamilton1989,hamilton1990} extend the methods for HMMs to MRS models with dependent regimes by adapting the forward algorithm, developing a new algorithm to replace the backward algorithm and constructing an EM algorithm. \cite{kim1994} refines the work of Hamilton, developing a more efficient implementation of Hamilton's smoothing algorithm. Kim's algorithm is similar to the backward algorithm for HMMs. 

Relevant to this paper, \cite{janczura2012} extend Hamilton's work and develop an \emph{approximate} algorithm for MRS models with \emph{independent} regimes, which we label the \emph{EM-like algorithm} since it resembles Hamilton's EM algorithm. However, it is not an example of the EM algorithm and so none of the EM theory holds. In Sections~\ref{sec: hamilton fwd}--\ref{sec:EM-like}, we briefly review the work of \cite{hamilton1990}, \cite{kim1994} and \cite{janczura2012} in order to provide the motivation and background for our work.

\subsubsection{Likelihood evaluation for dependent-regime models: The forward algorithm}\label{sec: hamilton fwd}

Define \(\boldsymbol{x}_{r:s} = \left(x_r,x_{r+1},\dots,x_s\right)\) for \(r \leq s\) and write the likelihood as 
\(L\left(\boldsymbol{\theta}\right) = f^{\boldsymbol \theta}_{X_0}\left(x_0\right) \prod\limits_{t=1}^T  f^{\boldsymbol \theta}_{X_t|\boldsymbol X_{0:t-1}}\left(x_t|\boldsymbol x_{0:t-1}\right).\) 
The forward algorithm \citep{hamilton1990} calculates \(f^{\boldsymbol \theta}_{X_0}\) and \(f^{\boldsymbol \theta}_{X_t|\boldsymbol X_{0:t-1}}\) for \(t=1,2,\dots,T,\) from which it is straightforward to calculate the likelihood or loglikelihood:
\paragraph{Algorithm 1: The forward algorithm \citep{hamilton1990}}
\begin{enumerate}
\item[Step 1.] Initialise the algorithm with values \(\mathbb P^{\boldsymbol\theta}\left(R_0=i\right)=\pi_i\), which may be assumed to be known \emph{a priori}, or left as parameters to be inferred.
\item[Step 2.]The term, \(f^{\boldsymbol \theta}_{X_0}\left(x_0\right)\), is calculated as 
\(f^{\boldsymbol \theta}_{X_0}\left(x_0\right) = \sum\limits_{i \in \mathcal S} f^{\boldsymbol \theta}_{X_0|R_0}\left(x_0| i\right)\mathbb  P^{\boldsymbol \theta}\left(R_0=i\right),\) where the density \(f^{\boldsymbol \theta}_{X_0|R_0}\) is known from the model specification. 
\item[Step 3.] For \(t =1 ,\dots, T\):
\begin{align*}&f^{\boldsymbol \theta}_{X_t|\boldsymbol X_{0:t-1}}\left(x_t|\boldsymbol x_{0:t-1}\right) \\&\quad=  \sum\limits_{i \in \mathcal S} f^{\boldsymbol \theta}_{X_t|R_t,\boldsymbol X_{0:t-1}}\left(x_t| i,\boldsymbol x_{0:t-1}\right) \sum\limits_{j \in \mathcal S}  \mathbb P^{\boldsymbol \theta}\left(R_{t-1} = j| \boldsymbol x_{0:t-1}\right)p_{ji},\end{align*}
where \(f^{\boldsymbol \theta}_{X_t|R_t,\boldsymbol X_{0:t-1}}\) is also known from the model specification. Furthermore, the probabilities \(\mathbb P^{\boldsymbol \theta}\left(R_{t-1} = j| \boldsymbol x_{0:t-1}\right)\) can be calculated using Bayes' Theorem, 
\begin{align}
&\mathbb P^{\boldsymbol \theta}\left(R_{t-1} = j| \boldsymbol x_{0:t-1}\right)\nonumber \\&= \cfrac{f^{\boldsymbol \theta}_{X_{t-1}|R_{t-1},\boldsymbol X_{0:t-2}}\left(x_{t-1}| j, \boldsymbol x_{0:t-2}\right)\sum\limits_{i\in \mathcal S} p_{ij}\mathbb P^{\boldsymbol \theta}\left( R_{t-2} = i|\boldsymbol x_{0:t-2}\right)}{f^{\boldsymbol \theta}_{X_{t-1}|\boldsymbol X_{0:t-2}}\left(x_{t-1}|\boldsymbol x_{0:t-2}\right)},\label{fwdprobs}
\end{align}
for \(t\geq 2\). These are known as the \emph{forward/filtered probabilities}. 
\end{enumerate}

The quantities \(\mathbb P^{\boldsymbol \theta}\left(R_t=i|\boldsymbol x_{0:t-1}\right)=\sum\limits_{j\in \mathcal S} p_{ji}\mathbb P^{\boldsymbol \theta}\left( R_{t-1} = j|\boldsymbol x_{0:t-1}\right)\) are known as the \emph{prediction probabilities}. 
In some applications, the forward and prediction probabilities may be quantities of interest in their own right, and they also appear as inputs to the backward algorithm (see Section \ref{sec: hamilton kim}).



\subsubsection{Maximum likelihood for dependent-regime models: The EM algorithm}\label{sec: hamilton kim}
Hamilton's forward algorithm \citep{hamilton1990} is a computationally feasible way to evaluate the loglikelihood, from which it is possible to use black-box optimisation methods to find the MLEs. However, it is common to use the EM algorithm \citep{dempster1977} instead, particularly when the E-step and M-step of the algorithm are available in closed form.  The EM algorithm proceeds by iterating between the E-step, constructing the function \(Q\left(\boldsymbol \theta , \boldsymbol \theta_n\right) = \mathbb E\left[ \log f^{\boldsymbol \theta}_{\boldsymbol X, \boldsymbol R}\left(\boldsymbol{x},\boldsymbol R\right)\mid\boldsymbol{x};\boldsymbol{\theta}_n\right]\), and the M-step, maximising \(Q\) with respect to \(\boldsymbol \theta \in  \Theta\), where \(\Theta\) is the parameter space. This results in a sequence \(\{\boldsymbol\theta_n\}_{n \in \mathbb N}\) that converges to a local maximiser of the loglikelihood. 

The EM algorithm for MRS models with \emph{dependent} regimes proceeds as follows \citep{hamilton1990}. Define the random variable \(\eta_{ij}\) as the number of transitions from state \(i\) to state \(j\) in the sequence \(\boldsymbol R = \left(R_0,\dots,R_T\right)\) and let \(\mathbb I\left(\cdot\right)\) be the indicator function. The joint log-density of \(\boldsymbol{x}\) and \(\boldsymbol{R}\) can be written as 
\begin{align*}
    &\log f^{\boldsymbol \theta}_{\boldsymbol X, \boldsymbol R}\left(\boldsymbol{x},\boldsymbol R\right) 
    \\&= \sum_{j \in \mathcal S}{\mathbb I\left(R_0 = j\right)} \log f^{\boldsymbol \theta}_{X_0|R_0}\left(x_0|j\right) \\&\quad{}+ \sum_{t=1}^T \sum_{j \in \mathcal S} {\mathbb I\left(R_t = j\right)}\log  f^{\boldsymbol \theta}_{X_t|R_t,\boldsymbol X_{0:t-1}}\left(x_t| j, \boldsymbol x_{0:t-1}\right)\\&\quad{}+ \sum_{i,j \in \mathcal S}{\eta_{ij}}\log p_{ij} + \sum_{j\in \mathcal S}\mathbb I\left(R_0=j\right) \log \pi_j,
\end{align*}
where \(\pi_j\) denotes \(\mathbb P^{\theta}\left(R_0=i\right)\). In the \(n\)th iteration, \(n\geq 0\), for the E-step, taking the conditional expectation given parameters \(\boldsymbol{\theta}_n\) and observed values \(\boldsymbol{x}_{0:T}\) yields
\begin{align*}
    Q\left(\boldsymbol{\theta},\boldsymbol{\theta}_n\right) &= \sum_{j \in \mathcal S}{\mathbb P^{\boldsymbol \theta_n}\left(R_0 = j|\boldsymbol x_{0:T}\right)} \log f^{\boldsymbol \theta}_{X_0|R_0}\left(x_0| j\right)\\&\quad{} + \sum_{t=1}^T \sum_{j \in \mathcal S} {\mathbb P^{\boldsymbol \theta_n}\left(R_t = j|\boldsymbol x_{0:T}\right)}\log  f^{\boldsymbol \theta}_{X_t|R_t,\boldsymbol X_{0:t-1}}\left(x_t| j,\boldsymbol x_{0:t-1}\right)\\&\qquad + \sum_{i,j \in \mathcal S}{\mathbb E [\eta_{ij}|\boldsymbol{x}_{0:T};{\boldsymbol \theta_n}]}\log p_{ij} + \sum_{j\in \mathcal S}\mathbb P^{\boldsymbol \theta_n}\left(R_0=j|\boldsymbol x_{0:T}\right) \log \pi_j,
\end{align*}
where the expectation \(\mathbb E[\eta_{ij}|\boldsymbol x_{0:T}; \boldsymbol \theta_n]=\sum\limits_{t=1}^T\mathbb P^{\boldsymbol \theta_n}\left(R_t=j,R_{t-1}=i|\boldsymbol x_{0:T}\right)\). The densities \(f^{\boldsymbol \theta}_{X_0|R_0}\) and \(f^{\boldsymbol \theta}_{X_t|R_t,\boldsymbol X_{0:t-1}}\), are given by the model specification. 

The \emph{smoothed probabilities}, \(\mathbb P^{\boldsymbol \theta_n}\left(R_t = j|\boldsymbol x_{0:T}\right)\) and \(\mathbb P^{\boldsymbol \theta_n}\left(R_t = j, R_{t-1} = i \mid \boldsymbol x_{0:T}\right)\), required to construct \(Q\) are obtained using a backward recursion after running the forward algorithm with parameters \(\boldsymbol{\theta}_n\), and storing the forward and prediction probabilities. Developed by \cite{kim1994}, this backward recursion is in Algorithm 2, below.
\paragraph{Algorithm 2: The backward algorithm \citep{kim1994}} 
\begin{enumerate} 
\item[Step 1.] Evaluate \(\mathbb P^{\boldsymbol \theta_n}\left(R_T = j|\boldsymbol x_{0:T}\right)\) using the forward algorithm (Algorithm 1). 
\item[Step 2.] For \(t = T-1,\dots,0,\)
\begin{align*}
\mathbb P^{\boldsymbol \theta_n}\left(R_t = j, R_{t+1} = i | \boldsymbol x_{0:T}\right) 
    &= p_{ji}^{\left(n\right)}\cfrac{\mathbb P^{\boldsymbol \theta_n}\left(R_t = j|\boldsymbol x_{0:t}\right)  \mathbb P^{\boldsymbol \theta_n}\left(R_{t+1} = i|\boldsymbol x_{0:T}\right)}{\mathbb P^{\boldsymbol \theta_n}\left(R_{t+1} = i|\boldsymbol x_{0:t}\right)},
\\\mathbb P^{\boldsymbol \theta_n}\left(R_t = j|\boldsymbol x_{0:T}\right) 
    &= \sum_{i \in \mathcal S}\mathbb P^{\boldsymbol \theta_n}\left(R_t = j, R_{t+1} = i | \boldsymbol x_{0:T}\right),
\end{align*}
where \(p_{ij}^{\left(n\right)}\) means the \(p_{ij}\) parameter under \(\boldsymbol{\theta}_n\). 
\end{enumerate}

After executing Kim's backward algorithm, we can construct the function \(Q\). In the M-step, the maximisers of \(Q\left(\cdot,\boldsymbol{\theta}_{n}\right)\) are found. In the \emph{dependent regime} model, if the process is in Regime \(j\) at time \(t\), then the observations evolve according to \(X_t=\alpha_j + \phi_j X_{t-1} + \sigma_j \varepsilon_t\) where \(\alpha_j,\) \(\phi_j\) and \(\sigma_j\) are parameters, and \(\{\varepsilon_t\}\sim\mbox{N}\left(0,1\right)\). Recall that, for the dependent regime model, \(X_{t-1}\) is the last observed value (regardless of which regime generated it). The maximiser of \(Q\) at the \(\left(n+1\right)\)th iteration of the EM algorithm, \(\boldsymbol \theta ^{n+1}\), for \( n\geq 0\), is the following system of equations \citep{hamilton1990,janczura2012}:
\begin{align*}
\phi^{\left(n+1\right)}_{j} &= \cfrac{\sum\limits_{t=1}^T\mathbb P^{\boldsymbol \theta_n}\left(R_t = j|\boldsymbol x_{0:T}\right)x_{t-1}B_{1,t}^{\left(i\right)}}{\sum\limits_{t=1}^T\mathbb P^{\boldsymbol \theta_n}\left(R_t = j|\boldsymbol x_{0:T}\right)x_{t-1}B_{2,t}^{\left(i\right)}},
\\
\alpha^{\left(n+1\right)}_{j}& = \cfrac{\sum\limits_{t=1}^T\mathbb P^{\boldsymbol \theta_n}\left(R_t = j|\boldsymbol x_{0:T}\right)\left(x_t-\phi^{\left(n+1\right)}_{j}x_{t-1}\right)}{\sum\limits_{t=1}^T\mathbb P^{\boldsymbol \theta_n}\left(R_t = j|\boldsymbol x_{0:T}\right)} ,
\\
\left(\sigma^2_j\right)^{\left(n+1\right)} &= \cfrac{\sum\limits_{t=1}^T\mathbb P^{\boldsymbol \theta_n}\left(R_t = j|\boldsymbol x_{0:T}\right)\left(x_t-\alpha^{\left(n+1\right)}_{j}-\phi^{\left(n+1\right)}_{j} x_{t-1}\right)^2}{\sum\limits_{t=1}^T\mathbb P^{\boldsymbol \theta_n}\left(R_t = j|\boldsymbol x_{0:T}\right)}, \qquad 
\\ 
\mbox{where }&
\\B_{1,t}^{\left(i\right)} &= x_t - x_{t-1}-\cfrac{\sum\limits_{s=1}^T\mathbb P^{\boldsymbol \theta_n}\left(R_s = j|\boldsymbol x_{0:T}\right)\left(x_s-x_{s-1}\right)}{\sum\limits_{s=1}^T\mathbb P^{\boldsymbol \theta_n}\left(R_s = j|\boldsymbol x_{0:T}\right)},\quad
\\\mbox{ and } \quad
B_{2,t}^{\left(i\right)} &= \cfrac{\sum\limits_{s=1}^T\mathbb P^{\boldsymbol \theta_n}\left(R_s = j|\boldsymbol x_{0:T}\right)x_{s-1}}{\sum\limits_{s=1}^T\mathbb P^{\boldsymbol \theta_n}\left(R_s = j|\boldsymbol x_{0:T}\right)}-x_{t-1}.
\end{align*}
In general, the switching probabilities are updated using the following \citep{kim1994}
\begin{align}\label{eqn: simple EM updates pij}
    p_{ij}^{\left(n+1\right)} = \cfrac{\sum\limits_{t=1}^T\mathbb P^{\boldsymbol \theta_n}\left(R_t = j|\boldsymbol x_{0:T}\right)\cfrac{p_{ij}^{\left(n\right)}\mathbb P^{\boldsymbol \theta_n}\left(R_{t-1} = i|\boldsymbol x_{0:t-1}\right)}{\mathbb P^{\boldsymbol \theta_n}\left(R_t = j|\boldsymbol x_{0:t-1}\right)}}{\sum\limits_{t=1}^T\mathbb P^{\boldsymbol \theta_n}\left(R_{t-1} = i|\boldsymbol x_{0:T}\right)},
\end{align}
which rely on the smoothed, forward and prediction probabilities. For i.i.d.~regimes the M-step can often be derived analytically; however, such expressions are not required for our discussion since there is no dependence on lagged values in these regimes and so they are omitted. 

Thus, we implement the EM algorithm by initialising it with a guess of the true parameters, then alternating between the forward and backward algorithms (the E-step) and calculating the maximisers of \(Q\) (the M-step). The algorithm terminates when the step size is below a prespecified tolerance, i.e.~\linebreak\mbox{\(|\boldsymbol{\theta}_{n+1} - \boldsymbol{\theta}_{n}|_\infty < e\)} where \(e\) is some small tolerance.

\subsubsection{Approximate maximum likelihood for independent-regime models: The EM-like algorithm}\label{sec:EM-like}
For independent-regime MRS models the EM algorithm is computationally infeasible if the densities \(f^{\boldsymbol \theta}_{X_t|R_t,\boldsymbol X_{0:t-1}}\), \(R_t \in \mathcal S\), \(t = 1,\dots,T\), are computed naively, as this is a \(\mathcal O\left(M^t\right)\) calculation for each \(t=1,...,T\), where \(M<\infty\) is the number of regimes in the model. On the other hand, if we use our proposed foward algorithm, introduced in Section \ref{myfwd}), calculation of these densities is \(\mathcal O\left(t^{k+1}\right)\) for each \(t=1,...,T\), which may be computationally feasible when \(T\) and \(k\) are not too large.

Developed by \cite{janczura2012}, the EM-like algorithm is an approximation to the EM algorithm. For independent regime MRS models, the EM-like algorithm overcomes the problem of computational infeasibility by replacing lagged values for Regime \(i\) (assuming this is an AR(1) regime) with approximations, \(\Tilde{b}_{t-1,i}^{\left(n\right)}\). These are described as the expectations \citep{janczura2012}, 
\(
    \mathbb E[B^i_t|\boldsymbol x_{0:t};\boldsymbol{\theta}_n],
\)
where \[B_t^i := \mathbb I \left(R_t = i\right)x_t + \mathbb I \left(R_t \neq i\right) \left(\alpha_i+\phi_i B_{t-1}^i+\sigma_i \varepsilon_t^i\right).\] Simply put, wherever \(x_{t-1}\) appears verbatim in an expression related to Regime \(i\) in the EM algorithm in Sections \ref{sec: hamilton fwd}-\ref{sec: hamilton kim}, it is replaced with \(\Tilde{b}_{t-1,i}^{\left(n\right)}\) at the \(n^{\text{th}}\) iteration. 
The \(\Tilde{b}_{t,i}^{\left(n\right)}\) are calculated recursively as 
\begin{align}\label{eqn: b_t-1^i}
    \Tilde{b}_{t,i}^{\left(n\right)} 
    &= \Tilde{\mathbb P}^{\boldsymbol{\theta}_n}\left(R_t=i|\boldsymbol x_{0:t}\right)x_t + \Tilde{\mathbb P}^{ \boldsymbol{\theta}_n}\left(R_t \neq i|\boldsymbol x_{0:t-1}\right)\left(\alpha_i + \phi_i \Tilde{b}_{t-1,i}^{\left(n\right)} \right),
\end{align}
where \(\Tilde{\mathbb P}^{\boldsymbol{\theta}_n}\left(R_t=i|\boldsymbol x_{0:t}\right)\) and \(\Tilde{\mathbb P}^{\boldsymbol{\theta}_n}\left(R_t\neq i|\boldsymbol x_{0:t}\right)\) are given by the forward algorithm which is part of the EM-like procedure.
Janczura and Weron, \cite{janczura2012}, conduct simulation studies and show that this algorithm seems to work well for the datasets they generate. However, no theoretical results are available that show convergence of, or error bounds for, the EM-like algorithm; in particular, there is no guarantee that the parameter estimates produced by the EM-like algorithm are consistent. In contrasts, our algorithms rest on the theory of the EM algorithm.

We can construct examples of independent-regime MRS models where the EM-like algorithm fails to get close to the true parameter values. 
\begin{example}\label{emlikefaileg}
Consider the following independent-regime MRS model,
\begin{align}\label{ex:1}
    X_t = \begin{cases}
            B_t, & \text{ if } R_t = 1,
            \\  Y_t, & \text{ if } R_t = 2, 
        \end{cases} 
\end{align}
where \(B_t\) is an AR(1) process, \(B_t = 0.95 B_{t-1} + \sqrt{0.2} \varepsilon_t,\) 
with \(\{\varepsilon_t\}\) being a sequence of i.i.d.~N\(\left(0,1\right)\) random variables, \(Y_t\) is an i.i.d.~sequence of N\(\left(2,1\right)\) random variables, and \(\{R_t\}_{t \in \mathbb N}\) is a Markov chain with state space \(\mathcal S = \{1,2\}\), transition matrix entries \(p_{11}=0.5\) and \(p_{22}=0.8\), and initial probability distribution \(\left(1,0\right)\), so the process always starts in Regime 1.
We simulated 20 realisations of length \(T=2000\) from this model and used the EM-like algorithm to try to recover the true parameters. 
To give the algorithm the best chance of converging to the true parameters, we initialise the EM-like algorithm at the true parameter values. The parameters recovered by the EM-like algorithm are summarised in Figure~\ref{emlikeexamplefig}. %
\begin{figure}
    \centering
    \includegraphics[width = \textwidth, trim = {75 0 75 0}, clip]{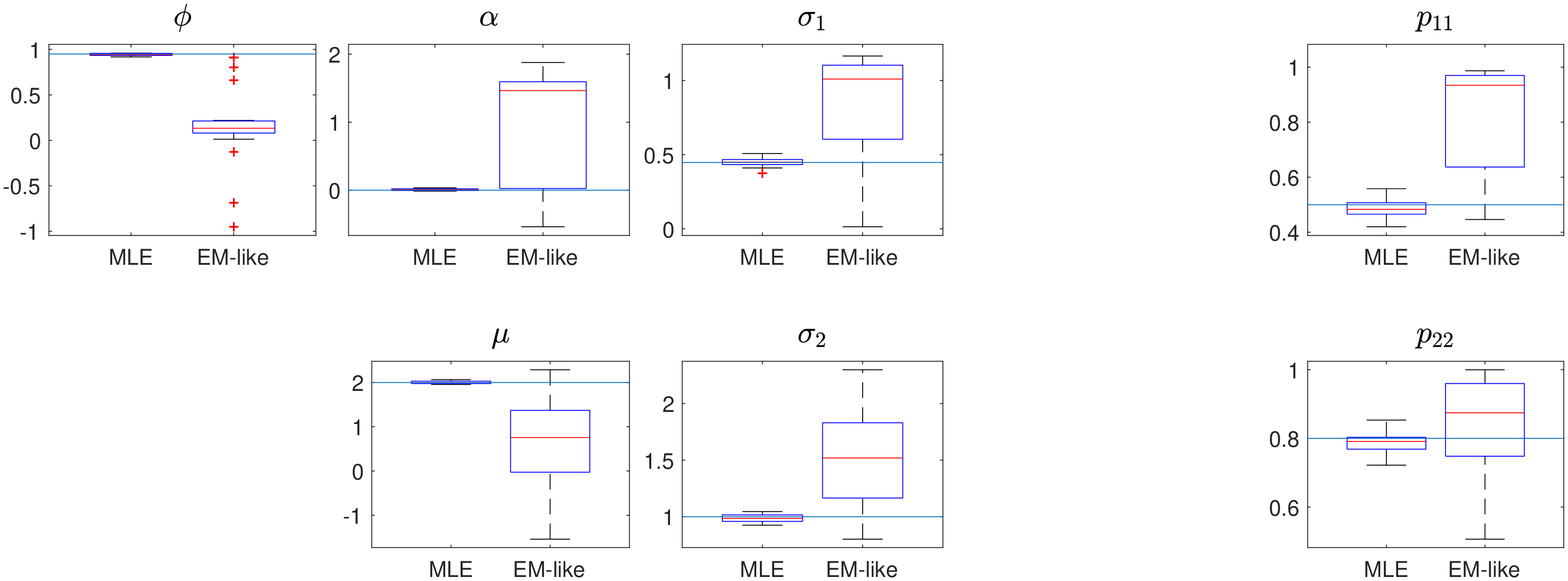}
    \caption{Boxplots of the parameters recovered by the EM-like algorithm (Right) and the MLEs recovered by our EM algorithm (Left) for Example \ref{emlikefaileg}. The blue line represents the true parameter value. Notice that the EM-like algorithm is not able to recover the parameters, while the EM algorithm performs relatively well.}
    \label{emlikeexamplefig}
\end{figure} %
For comparison, the MLEs obtained using our EM algorithm (Section \ref{myEM}) are also shown. Notice in Figure~\ref{emlikeexamplefig} that the EM-like algorithm performs poorly, while our exact method performs much better.
\end{example}

\section{A novel forward algorithm}\label{myfwd}
The general idea of the novel algorithms presented in this paper is to augment the hidden Markov chain with counters that record the last time each AR(1) regime was visited. This augmented process is a Markov chain, and similar arguments to those used to construct the forward-backward algorithm for MRS models with dependent regimes can be used to construct a forward and backward algorithms for these models. Our methods are related to the forward and backward algorithms for hidden semi-Markov models, where the hidden process is also augmented with a counter and the augmented hidden process is a Markov chain \citep{yu2016}. Though similar, the algorithms for hidden semi-Markov models (HSMMs) are not applicable to the models considered here since for HSMMs the counter counts the number of time steps since the last change of regime, whereas here the counters count the number of transitions since the last visit to a regime. Furthermore, in HSMMs, the counters do not appear in the conditional densities of the observations, as they do here.    

\subsection{The augmented hidden Markov chain}
For the following, suppose that the first \(k\) states, \(\mathcal S_{AR} = \{1,\dotsc,k<M\}\), correspond to AR(1) processes and all other regimes are i.i.d. That is, we have the following independent regime MRS model 
    \[ 
        X_t = \begin{cases}
        B_{t}^1, & \mbox{if } R_t=1, \\
        \vdots & \\
        B_{t}^k, & \mbox{if } R_t=k, \\
        S_t^{k+1}, & \mbox{if } R_t = k+1, \\
        \vdots & \\ 
        S_t^{M}, & \mbox{if } R_t = M, \\
        \end{cases}
    \]
    where 
    \(B_t^i = \alpha_i + \phi_i B^i_{t-1} + \sigma_i \varepsilon_t^i\)
    are AR(1) and 
    \(S_t^{j}\) are i.i.d. Our arguments also hold for \(k=M\) with only slight modification, but here we treat the case \(k<M\) only, since these are the types of models relevant to our application. 
    
    Now, define another Markov chain 
\[\{\boldsymbol{H}_t\}_{t \in \mathbb N}:=\{\left(\boldsymbol{N}_t,R_t\right)\}_{t\in \mathbb N}=\{\left(N_{t,1},\dotsc,N_{t,k},R_t\right)\}_{t\in \mathbb N},\] where \(N_{t,j}\in \mathbb N_+\) counts the number of time steps since the process \(\{R_t\}\) was last in Regime \(j\) before time \(t\), for each AR(1) regime \(j=1,\dots,k\). When there is no time \(\tau \in \{0,1,\dots,t-1\}\) with \(R_\tau = j\), then we set \(N_{t,j}=t+1\). Thus the augmented Markov chain \(\{\boldsymbol{H}_t\}_{t \in \mathbb N}\) lives on the state space \(\mathbb N_+^k \times \mathcal S.\) 

To describe the transitions of the Markov chain \(\{\boldsymbol{H}_t\}\), let \linebreak{\(\boldsymbol{n} := \left(n_1,\dots,n_k\right)\in~\mathbb N_+ ^k \)} be an arbitrary vector of counters, with \(n_r \neq n_s\) for \(r\neq s\) or, at time \(t\), \(n_r=n_s=t+1\) (it is possible that neither state \(r\) nor state \(s\) have been visited by time \(t\)). Also, define \(\boldsymbol{1}\) to be a row vector of ones of length \(k\), \(\boldsymbol e_i \) to be a row vector of length \(k\) with all entries being 0 except the \(i^{\text{th}}\) entry which is 1, and 
\(\boldsymbol{n}^{\left(-i\right)} := \boldsymbol{n} - n_{i}\boldsymbol{e}_i = \left(n_{1},\dots,n_{i-1},0,n_{i+1},\dots,n_{k}\right).\)

The transition probabilities of \(\{\boldsymbol{H}_t\}\) are 
\begin{align}
    &\mathbb P^{\boldsymbol{\theta}}\left(\boldsymbol{H}_{t+1} = \left(\boldsymbol{n}_{t+1},j\right) | \boldsymbol{H}_{t}= \left(\boldsymbol{n}_{t},i\right)  \right)\nonumber
    \\&\quad{}= 
    \begin{cases} 
        p_{ij} & \text{for } i \in \mathcal S_{AR}^c,  j\in \mathcal S,  \boldsymbol n_{t+1} = \boldsymbol{n}_t+\boldsymbol{1},\\
        p_{ij} & \text{for } i \in \mathcal S_{AR},  j\in \mathcal S,   \boldsymbol n_{t+1} = \boldsymbol{n}^{\left(-i\right)}_t+\boldsymbol{1},\\
        0 & \text{otherwise}.
    \end{cases}\label{eqn: transitions of H_t}
\end{align}
In words, when the current state is \(\boldsymbol{H}_{t}=\left(\boldsymbol{n}_t,i\right)\) and \(i\) is not an AR(1) regime (so there is no counter associated with state \(i\)), then, at time \({t+1}\), \(R_t\) transitions to state \(j\) with probability \(p_{ij}\) and all the counters are advanced by \(1\) to \(\boldsymbol{n}_{t+1} = \boldsymbol{n}_t+\boldsymbol{1}\), since there has been one more time step since \(\{R_{t}\}\) was last in any state with a counter (any state in \(\mathcal S_{AR}\)). When the current state is \(\boldsymbol{H}_{t}=\left(\boldsymbol{n}_t,i\right)\), where \(i\) is an AR(1) regime, then \(R_t\) transitions to any state \(j \in \mathcal S\) with probability \(p_{ij}\), the counter for Regime \(i\), \(n_{{t+1},i}\), is set to 1, since the last time in state \(i\) was \(t\), and all other counters are advanced by \(1\). All other transition probabilities for \(\{\boldsymbol{H}_t\}\) are 0.

The state space of \(\{\boldsymbol{H}_t\}\) is countably infinite. However, due to the way \(\{\boldsymbol{H}_t\}_{t \in \mathbb N}\) is initialised and evolves, many states are inaccessible for \(\{\boldsymbol{H}_t\}\), \linebreak\(t \in \{0,1,\dots,T<\infty\}\), and this makes our algorithm computationally feasible. 
Specifically, we suppose that the Markov chain \(\{\boldsymbol{H}_t\}\) is initialised with the probability distribution
\begin{align}\label{eqn: init distn of H_t}
    \mathbb P\left(\boldsymbol{H}_0 = \left(n_{0,1},\dots,n_{0,k},j\right)\right) = \begin{cases} \pi_j,& \text{for }j \in \mathcal S, \text{ } n_{0,i}=1, \text{ for all } i \in \mathcal S_{AR}, \\
    0, & \text{otherwise}. \end{cases}
\end{align}
The distribution \(\boldsymbol{\pi}:=\left(\pi_1,\dots,\pi_M\right)\) can be any proper probability distribution. However, in line with existing algorithms for dependent regime MRS models, it can be either the stationary distribution of \(\{R_t\}\), or a point mass on a single state, or, when used as part of the EM algorithm, the probabilities \(\mathbb P^{\boldsymbol{\theta}_n}\left(\boldsymbol{H}_0|\boldsymbol{x}_{0:T}\right)\) calculated at the previous iteration of the EM algorithm. The following lemma gives all the states that \(\{\boldsymbol{H}_t\}\) can be in at time \(t>0\).
\begin{lemma}\label{lemma: S^t}
    Define \(\mathcal S^{\left(0\right)} := \{\boldsymbol 1\}\) as a vector of \(1\)'s of length \(k\) where \(k\) is the number of AR(1) regimes. For each \(t=1,2,\dots,T\), let \(\mathcal S^{\left(t\right)}\) be the set of all vectors \(\boldsymbol n_t := \left(n_{t,1},\dots,n_{t,k}\right)\) such that, for \(j,m\in \mathcal S_{AR}\), 
    \begin{enumerate}
        \item[(i)] \(n_{t,j} \in \{1,2,\dots,t+1\} \),
        \item[(ii)] there are at most \(\min\left(t,k\right)\) elements of \(\boldsymbol{n}_t\) with \(n_{t,j} \neq t+1\),
        \item[(iii)] \(n_{t,j}\neq n_{t,m}\) for all \(j\neq m\), unless \(n_{t,j}=n_{t,m}=t+1\).
    \end{enumerate}
    Given \(\{\boldsymbol{H}_t\}\) is initialised with the distribution in Equation (\ref{eqn: init distn of H_t}), it is possible for \(\boldsymbol{H}_t\) to reach states \(\left(\boldsymbol n_t,i\right)\) where \(\boldsymbol n_t \in \mathcal S^{\left(t\right)}\) and \(i \in \mathcal S\) only.
    The cardinality of \(\mathcal S^{\left(t\right)}\) is \(|\mathcal S^{\left(t\right)}| = \sum\limits_{m = 0}^{\min\left(t,k\right)}\binom{t}{m}\binom{k}{m}m!.\)
\end{lemma}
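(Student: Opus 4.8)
The plan is to establish the two assertions separately: first the reachability characterisation, by induction on $t$, and then the cardinality formula, by a direct combinatorial count on the set $\mathcal{S}^{(t)}$ defined by conditions (i)--(iii). Throughout, I would keep in mind that an entry $n_{t,j}=t+1$ marks an AR(1) regime $j$ that has not been visited before time $t$ (an \emph{inactive} counter), whereas $n_{t,j}\in\{1,\dots,t\}$ records a genuine last-visit time (an \emph{active} counter).

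For the reachability claim I would induct on $t$, showing that every state $(\boldsymbol{n}_t,i)$ accessible to $\{\boldsymbol{H}_t\}$ has $\boldsymbol{n}_t\in\mathcal{S}^{(t)}$. The base case $t=0$ is immediate from the initial distribution (\ref{eqn: init distn of H_t}), which places all mass on $\boldsymbol{n}_0=\boldsymbol{1}$, and $\mathcal{S}^{(0)}=\{\boldsymbol{1}\}$, with (i)--(iii) holding trivially since $t+1=1$. For the inductive step I would take a reachable $(\boldsymbol{n}_t,i)$ with $\boldsymbol{n}_t\in\mathcal{S}^{(t)}$ and apply the two nonzero transition rules of (\ref{eqn: transitions of H_t}). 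If $i\in\mathcal{S}_{AR}^c$ then $\boldsymbol{n}_{t+1}=\boldsymbol{n}_t+\boldsymbol{1}$, and each property transfers by shifting every counter up by one: values $\neq t+1$ become values $\neq t+2$, distinctness is preserved, and the number of active counters is unchanged. If $i\in\mathcal{S}_{AR}$ then $n_{t+1,i}=1$ while the remaining counters increment, and I would again check (i) and (iii) directly, noting that $n_{t+1,i}=1$ cannot collide with any incremented counter $n_{t+1,m}=n_{t,m}+1\geq 2$.

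The delicate point, and the step I expect to be the main obstacle, is verifying property (ii) in the case $i\in\mathcal{S}_{AR}$, where the active-counter count can increase. This happens precisely when regime $i$ is visited for the first time, i.e.\ when $n_{t,i}=t+1$. A naive bound gives at most $\min(t,k)+1$ active counters at time $t+1$, which is too weak once $t\geq k$. The resolution is to observe that if $n_{t,i}=t+1$ then counter $i$ was inactive at time $t$, so the active counters at time $t$ all lie among the $k-1$ regimes other than $i$, whence there are at most $\min(t,k-1)$ of them; incrementing this count by one gives $\min(t,k-1)+1=\min(t+1,k)$, exactly the bound required by (ii) at time $t+1$. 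When instead $n_{t,i}\neq t+1$, counter $i$ is already active and stays active, so the active count is unchanged and (ii) transfers immediately.

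For the cardinality I would read the combinatorial structure of $\mathcal{S}^{(t)}$ straight off (i)--(iii): a vector $\boldsymbol{n}_t\in\mathcal{S}^{(t)}$ is determined by choosing which positions are active (entries $\neq t+1$) and assigning them \emph{distinct} values from $\{1,\dots,t\}$, with all remaining positions fixed at $t+1$. Conditioning on the number $m$ of active positions, which ranges over $0\leq m\leq\min(t,k)$ since there are only $k$ positions and only $t$ admissible active values, there are $\binom{k}{m}$ ways to choose the active positions and $\binom{t}{m}m!$ injective assignments of distinct values to them. Summing over $m$ gives $|\mathcal{S}^{(t)}|=\sum_{m=0}^{\min(t,k)}\binom{t}{m}\binom{k}{m}m!$, as claimed. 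The only routine checks here are that ``distinct values on active positions'' is exactly condition (iii) and that the range of $m$ is forced by the two constraints just noted.
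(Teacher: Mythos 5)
Your proposal is correct, and its second half (the cardinality count) is exactly the paper's argument: condition on the number $m$ of active counters, choose the $\binom{k}{m}$ positions, and make the $\binom{t}{m}m!$ injective assignments of distinct values from $\{1,\dots,t\}$. The reachability half, however, takes a genuinely different route. The paper argues directly from the \emph{semantics} of the counters: $n_{t,j}$ is either $t-t_j$ where $t_j$ is the last visit time to regime $j$, giving (i); the chain occupies one regime at a time, giving (iii); and at most $\min(t,k)$ AR(1) regimes can have been visited before time $t$, giving (ii). You instead induct on $t$ using only the formal transition rules (\ref{eqn: transitions of H_t}) and the initial distribution (\ref{eqn: init distn of H_t}), verifying that each of (i)--(iii) is preserved under both transition types. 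Your route is longer but more self-contained: it never appeals to the verbal interpretation of $N_{t,j}$ as a last-visit counter, and it forces you to confront the one step that the paper's semantic argument silently absorbs, namely preservation of (ii) when an AR(1) regime is visited for the first time; your resolution via the inactive-counter observation and the identity $\min(t,k-1)+1=\min(t+1,k)$ is correct and is precisely the right fix to the naive bound $\min(t,k)+1$. The paper's argument buys brevity and transparency about \emph{why} the conditions hold; yours buys a mechanical verification that would survive even if the counters were defined only by their kernel. Both, like the lemma itself, establish only the containment direction (reachable states lie in $\mathcal S^{(t)}\times\mathcal S$), so nothing is missing.
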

\begin{proof}
    First, we explain why \(\mathcal S^{\left(t\right)}\) contains all possible values of the counters of \(\boldsymbol{H}_t\) for \(t\geq 0\). 
    
    At time \(t=0\) the chain, \(\{\boldsymbol{H}_t\} = \{\left(\boldsymbol{N}_t,R_t\right)\}\), is initialised with the distribution in Equation (\ref{eqn: init distn of H_t}), so 
    \(\mathcal S^{\left(0\right)} := \{\boldsymbol 1\}.\)

	At time \(t>0\), either \(\{R_t\}\) has never visited state \(j\in \mathcal S_{AR}\), in which case \(n_{t,j}=t+1\), or \(\{R_t\}\) \emph{last} visited \(j\) at time \(t_j\), in which case \(n_{t,j} = t - t_j \in \{1,2,\dots,t\}\); this is part \emph{(i)}. Since the process \(\{R_t\}\) can only be in one regime at a time, it follows that \(n_{t,j}\neq n_{t,m}\) for \(j\neq m\) (unless \(n_{t,j}= n_{t,m}=t+1\)), which is part \emph{(iii)} of the definition. Also, at time \(t\), the regime chain \(\{R_t\}\) could only possibly have visited \(\min\left(t,k\right)\) possible states; this is part \emph{(ii)} of the definition. 
    
    Now, to prove the cardinality of \(\mathcal S^{\left(t\right)}\). The elements of \(\mathcal S^{\left(t\right)}\) are of the form \(\left(n_1,\dots,n_k\right)\). At time \(t\), let \(m\) be the possible number of counters that are not equal to \(t+1\), so \(m\) is an element of \(\{0,1,\dots,\min\left(t,k\right)\}\). For each\linebreak \(m \in \{0,1,\dots,\min\left(t,k\right)\}\), there are \(\binom{k}{m}\) ways of choosing which \(m\) of the \(k\) counters are not equal to \(t+1\). Next, each counter takes a distinct value in \(\{1,\dots,t\}\), so there are \(\binom{t}{m}\) ways of choosing the value of the \(m\) counters. There are \(m!\) possible permutations to allocate the chosen values to the counters. So, in total there are \(\sum\limits_{m = 0}^{\min\left(t,k\right)}\binom{t}{m}\binom{k}{m}m!\) elements in \(\mathcal S^{\left(t\right)}\).
 \end{proof}

Lemma \ref{lemma: S^t} says that if \(\boldsymbol n_t \notin \mathcal S^{\left(t\right)}\) then \(\mathbb P^{\boldsymbol{\theta}}\left(\boldsymbol{H}_t = \left(\boldsymbol{n}_t,j\right)\right) = 0\) for any \(j \in \mathcal S\). Therefore the elements of the set \(\mathcal S^{\left(t\right)}\) partition the space of all counters that the process \(\{\boldsymbol{H}_t\}\) has positive probability of reaching. Thus, for any (measurable) set \(A\) and any \(t\), the law of total probability can be applied as \linebreak
\(\mathbb P^{\boldsymbol{\theta}}\left(A\right) = \sum\limits_{\boldsymbol{n}_t\in \mathcal S^{\left(t\right)}} \sum\limits_{j \in \mathcal S} \mathbb P^{\boldsymbol{\theta}}\left(\boldsymbol{H}_t = \left(\boldsymbol{n}_t,j\right),A\right).\) We will use this fact multiple times to construct the forward algorithm. 

\subsection{Constructing the forward algorithm}
The forward algorithm is multi-purpose. It can be used to evaluate the likelihood, and also to evaluate the filtered and prediction probabilities, which in turn are inputs to the backward algorithm. 
For clarity of exposition, we first present a simple, but impractical due to underflow, algorithm (Lemma \ref{lemma: lemma 2}) to calculate the likelihood for independent regime MRS models, then address the underflow issue later with a normalised version of the algorithm (Lemma \ref{lemma: lemma 4}). Define \[\alpha_{\boldsymbol{n}_t}^{\left(t\right)}\left(j\right) := f^{\boldsymbol{\theta}}_{\boldsymbol H_t,\boldsymbol{X}_{0:t}} \left(\left(\boldsymbol n_t,j\right),\boldsymbol{x}_{0:t}\right)\] for \(t=0,1,\dots,T\), \(\boldsymbol n_t \in \mathcal S^{\left(t\right)}\), and \(j \in \mathcal S\).
\begin{lemma}[A simple forward algorithm]\label{lemma: lemma 2}
First, for \(j\in \mathcal S\) calculate  
\begin{equation}\label{eqn: hash}
\alpha^{\left(0\right)}_{\boldsymbol n_0}\left(j\right)=f^{\boldsymbol \theta}_{X_0|\boldsymbol H_0}\left(x_0|\left(\boldsymbol n_0,j\right)\right)\mathbb P^{\boldsymbol \theta}\left(\boldsymbol H_0=\left(\boldsymbol n_0,j\right)\right).
\end{equation}
Then for \(t=1,2,\dots,T\), \(\boldsymbol{n}_t\in \mathcal S^{\left(t\right)}\), \(j\in \mathcal S\), calculate
    \begin{align} \label{eqn: lemma 1}
 \alpha_{\boldsymbol{n}_t}^{\left(t\right)}\left(j\right) &=\begin{cases} 
       f^{\boldsymbol \theta}_{X_t|\boldsymbol{H}_{t},\boldsymbol{X}_{0:t-1}}\left(x_t|\left(\boldsymbol{n}_{t},j\right),\boldsymbol{x}_{0:t-1}\right) \sum\limits_{i\in \mathcal S_{AR}^c}p_{ij}\alpha_{\boldsymbol{n}_{t}-\boldsymbol 1}^{\left(t-1\right)}\left(i\right)& \\\qquad\qquad\qquad\qquad\qquad\qquad\text{if } n_{t,\ell}\neq1, \text{ for all } \ell\in \mathcal S_{AR},
        \\ f^{\boldsymbol \theta}_{X_t|\boldsymbol{H}_{t},\boldsymbol{X}_{0:t-1}}\left(x_t| \left(\boldsymbol{n}_{t},j\right),\boldsymbol{x}_{0:t-1}\right)p_{\ell j}\sum\limits_{m=1}^{t}\alpha_{\boldsymbol{n}_{t}-\boldsymbol 1 + m\boldsymbol e_\ell }^{\left(t-1\right)}\left(\ell\right) &\\\qquad\qquad\qquad\qquad\qquad\qquad \text{if }n_{t,\ell}=1, \text{ for some } \ell\in \mathcal S_{AR}. \end{cases}
    \end{align}
    Then the likelihood is given by 
    \begin{equation}\label{eqn: lemma 2}L\left(\boldsymbol{\theta}\right) = \sum\limits_{j\in \mathcal S}\sum\limits_{\boldsymbol{n}_T\in \mathcal S^{\left(T\right)}}\alpha_{\boldsymbol{n}_T}^{\left(T\right)}\left(j\right).
    \end{equation}
\end{lemma}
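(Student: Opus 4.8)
The plan is to prove all three parts of the lemma directly from the definition of $\alpha_{\boldsymbol n_t}^{(t)}(j)$ as a joint density, using only the chain rule for densities, the Markov property of $\{\boldsymbol H_t\}$, and the law of total probability over the partition $\mathcal S^{(t)}$ established in Lemma~\ref{lemma: S^t}. The initialisation~(\ref{eqn: hash}) and the likelihood formula~(\ref{eqn: lemma 2}) are immediate; the recursion~(\ref{eqn: lemma 1}) is where essentially all the work lies, and within it the book-keeping of the counters is the only genuinely delicate point.

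For the initialisation, I would simply factor the joint density at $t=0$ as $f^{\boldsymbol\theta}_{\boldsymbol H_0,X_0}((\boldsymbol n_0,j),x_0)=f^{\boldsymbol\theta}_{X_0|\boldsymbol H_0}(x_0|(\boldsymbol n_0,j))\,\mathbb P^{\boldsymbol\theta}(\boldsymbol H_0=(\boldsymbol n_0,j))$, which is~(\ref{eqn: hash}) by definition of the conditional density. For the likelihood, since Lemma~\ref{lemma: S^t} guarantees that the states $\{(\boldsymbol n_T,j):\boldsymbol n_T\in\mathcal S^{(T)},\,j\in\mathcal S\}$ exhaust all states of positive probability at time $T$, the law of total probability gives $L(\boldsymbol\theta)=f^{\boldsymbol\theta}_{\boldsymbol X_{0:T}}(\boldsymbol x_{0:T})=\sum_{j}\sum_{\boldsymbol n_T}f^{\boldsymbol\theta}_{\boldsymbol H_T,\boldsymbol X_{0:T}}((\boldsymbol n_T,j),\boldsymbol x_{0:T})$, which is~(\ref{eqn: lemma 2}).

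The recursion I would derive in two steps. First, peel off the final observation: $\alpha_{\boldsymbol n_t}^{(t)}(j)=f^{\boldsymbol\theta}_{X_t|\boldsymbol H_t,\boldsymbol X_{0:t-1}}(x_t|(\boldsymbol n_t,j),\boldsymbol x_{0:t-1})\,f^{\boldsymbol\theta}_{\boldsymbol H_t,\boldsymbol X_{0:t-1}}((\boldsymbol n_t,j),\boldsymbol x_{0:t-1})$, where the first factor is supplied by the model specification — the key being that the counter $n_{t,j}$ pins down exactly which past observation serves as the lag when $j$ is an AR(1) regime. Second, expand the second factor over the predecessor state $\boldsymbol H_{t-1}$ using the law of total probability on $\mathcal S^{(t-1)}$, and use that $\{\boldsymbol H_t\}$ is an autonomous Markov chain (its transitions, being the $p_{ij}$ together with a deterministic counter update, do not depend on the observations) to replace the conditional transition density by the transition probabilities in~(\ref{eqn: transitions of H_t}); this yields $f^{\boldsymbol\theta}_{\boldsymbol H_t,\boldsymbol X_{0:t-1}}((\boldsymbol n_t,j),\boldsymbol x_{0:t-1})=\sum_{\boldsymbol n_{t-1},i}\mathbb P^{\boldsymbol\theta}(\boldsymbol H_t=(\boldsymbol n_t,j)\mid\boldsymbol H_{t-1}=(\boldsymbol n_{t-1},i))\,\alpha_{\boldsymbol n_{t-1}}^{(t-1)}(i)$.

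The remaining and central task is to read off, for a fixed target $(\boldsymbol n_t,j)$, exactly which predecessors $(\boldsymbol n_{t-1},i)$ carry nonzero transition probability, and here the two cases of~(\ref{eqn: lemma 1}) emerge from the counter arithmetic. If no coordinate of $\boldsymbol n_t$ equals $1$, then by~(\ref{eqn: transitions of H_t}) no counter was reset at the step into time $t$, forcing the previous regime $i\in\mathcal S_{AR}^c$ and $\boldsymbol n_{t-1}=\boldsymbol n_t-\boldsymbol 1$, which gives the first branch. If instead $n_{t,\ell}=1$ for some $\ell\in\mathcal S_{AR}$ — necessarily unique by the distinctness in Lemma~\ref{lemma: S^t}(iii) — then the previous regime must have been $i=\ell$, its counter having been reset from some earlier value $m\in\{1,\dots,t\}$, so the admissible predecessors are exactly $\boldsymbol n_{t-1}=\boldsymbol n_t-\boldsymbol 1+m\boldsymbol e_\ell$, summed over $m$, giving the second branch. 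I expect this predecessor-reconstruction — in particular verifying that each such $\boldsymbol n_{t-1}$ indeed lies in $\mathcal S^{(t-1)}$ and that these are the only contributors — to be the main obstacle; once it is settled, substituting back completes the recursion.
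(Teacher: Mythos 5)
Your proposal is correct and follows essentially the same route as the paper's proof: both rest on the law of total probability over $\mathcal S^{(t-1)}$, the chain-rule factorisation isolating $f^{\boldsymbol\theta}_{X_t|\boldsymbol H_t,\boldsymbol X_{0:t-1}}$, the identification $\mathbb P^{\boldsymbol\theta}(\boldsymbol H_t=(\boldsymbol n_t,j)\mid\boldsymbol H_{t-1}=(\boldsymbol n_{t-1},i),\boldsymbol x_{0:t-1})=p_{ij}$, and the same counter arithmetic singling out the admissible predecessors in the two cases. The only difference is cosmetic ordering --- you peel off $x_t$ before expanding over predecessors, whereas the paper expands first and factors afterwards (which additionally requires dropping $\boldsymbol H_{t-1}$ from the conditioning in the $x_t$ density, a step your ordering avoids having to state).
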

\begin{proof}
First, from the law of total probability, we have 
\begin{align}
	\alpha_{\boldsymbol{n}_t}^{\left(t\right)}\left(j\right) &:= f^{\boldsymbol{\theta}}_{\boldsymbol H_t,\boldsymbol{X}_{0:t}} \left(\left(\boldsymbol n_t,j\right),\boldsymbol{x}_{0:t}\right)\nonumber
	\\&= \sum_{\boldsymbol n_{t-1} \in \mathcal S^{\left(t-1\right)}}\sum_{i\in \mathcal S}f^{\boldsymbol{\theta}}_{\boldsymbol H_{t-1},\boldsymbol H_t,\boldsymbol{X}_{0:t}} \left(\left(\boldsymbol n_{t-1},i\right),\left(\boldsymbol n_t,j\right),\boldsymbol{x}_{0:t}\right).\label{eqn: proof 1}
	%
\end{align}
Now, if any \(n_{t,\ell}=1\) for \(\ell\in \{1,\dots,k\}\), then it must be that \(R_{t-1}=\ell\) and \linebreak\(\boldsymbol n_{t-1}=\boldsymbol n_{t}-\boldsymbol 1 +m\boldsymbol e_{\ell}\) for some \(m\in \{1,\dots,t\}\). Thus, in this case, the double sum in Equation (\ref{eqn: proof 1}) simplifies to 
\(\sum\limits_{m=1}^{t}f^{\boldsymbol{\theta}}_{\boldsymbol H_{t-1},\boldsymbol H_t,\boldsymbol{X}_{0:t}} \left(\left(\boldsymbol n_{t}-\boldsymbol 1 +m\boldsymbol e_{\ell},\ell\right),\left(\boldsymbol n_t,j\right),\boldsymbol{x}_{0:t}\right).\)
Otherwise, all elements of \(\boldsymbol n_{t}\) are greater than 1, in which case \(R_{t-1}\notin \mathcal S_{AR}\) and \(\boldsymbol n_{t-1}=\boldsymbol n_t-\boldsymbol 1\), so the double sum in Equation (\ref{eqn: proof 1}) simplifies to 
\[\sum\limits_{i\in \mathcal S_{AR}^c} f^{\boldsymbol{\theta}}_{\boldsymbol H_{t-1},\boldsymbol H_t,\boldsymbol{X}_{0:t}} \left(\left(\boldsymbol n_{t}-\boldsymbol 1,i\right),\left(\boldsymbol n_t,j\right),\boldsymbol{x}_{0:t}\right).\]
For both cases the following arguments are the same, so for notational convenience we will use \(\boldsymbol n\) to be either \(\boldsymbol n_{t}-\boldsymbol 1\) when \(n_{t,\ell}>1\) for all \(\ell \in \mathcal S_{AR}\), or \(\boldsymbol n_{t}-\boldsymbol 1 +m\boldsymbol e_{\ell}\) when \(n_{t,\ell}=1\) for some \(\ell \in \mathcal S_{AR}\). 

We can write the summands as 
\begin{align*}
	&f^{\boldsymbol{\theta}}_{\boldsymbol H_{t-1},\boldsymbol H_t,\boldsymbol{X}_{0:t}} \left(\left(\boldsymbol n,i\right),\left(\boldsymbol n_t,j\right),\boldsymbol{x}_{0:t}\right) 
	\\&= f^{\boldsymbol{\theta}}_{\boldsymbol H_{t-1},\boldsymbol H_t,\boldsymbol{X}_{0:t-1}} \left(\boldsymbol H_{t-1}=\left(\boldsymbol n,i\right),\boldsymbol H_{t}=\left(\boldsymbol n_t,j\right),\boldsymbol{x}_{0:t-1}\right)
	\\&\qquad{}\times f^{\boldsymbol{\theta}}_{X_t|\boldsymbol H_{t-1},\boldsymbol H_t,\boldsymbol{X}_{0:t-1}} \left(x_t|\left(\boldsymbol n,i\right),\left(\boldsymbol n_t,j\right),\boldsymbol{x}_{0:t-1}\right)
	\\&= \mathbb P^{\boldsymbol{\theta}} \left(\boldsymbol H_{t}=\left(\boldsymbol n_t,j\right)|\boldsymbol H_{t-1}=\left(\boldsymbol n,i\right),\boldsymbol{x}_{0:t-1}\right) 
	\\&\qquad{}\times f^{\boldsymbol{\theta}}_{\boldsymbol H_{t-1},\boldsymbol{X}_{0:t-1}} \left(\boldsymbol H_{t-1}=\left(\boldsymbol n,i\right),\boldsymbol{x}_{0:t-1}\right)
	\\&\qquad{}\times f^{\boldsymbol{\theta}}_{X_t|\boldsymbol H_t,\boldsymbol{X}_{0:t-1}} \left(x_t|\left(\boldsymbol n_t,j\right),\boldsymbol{x}_{0:t-1}\right),
\end{align*}
and Equation (\ref{eqn: lemma 1}) follows after noting that \[\mathbb P^{\boldsymbol{\theta}} \left(\boldsymbol H_{t}=\left(\boldsymbol n_t,j\right)|\boldsymbol H_{t-1}=\left(\boldsymbol n,i\right),\boldsymbol{x}_{0:t-1}\right) =p_{ij},\] and from the definition of \(\alpha_{\boldsymbol n}^{\left(t-1\right)}\left(i\right)\). Equation (\ref{eqn: lemma 2}) is just an application of the law of total probability. 
\end{proof}

\begin{lemma}\label{lemma: complexity for simple fwd}
	The complexity of the simple forward algorithm, as given by the total number of multiplications, is less than \( \mathcal O \left(M^2T^{k+1}k^{k}\right)\).
\end{lemma}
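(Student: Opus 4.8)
The plan is to split the count into two independent pieces: the number of quantities $\alpha_{\boldsymbol n_t}^{\left(t\right)}\left(j\right)$ that the recursion produces, and the number of multiplications each one costs, and then multiply and sum over $t$. For each $t$, the recursion in Equation (\ref{eqn: lemma 1}) returns exactly one value $\alpha_{\boldsymbol n_t}^{\left(t\right)}\left(j\right)$ for every pair $\left(\boldsymbol n_t,j\right)$ with $\boldsymbol n_t\in\mathcal S^{\left(t\right)}$ and $j\in\mathcal S$, so there are $M\lvert\mathcal S^{\left(t\right)}\rvert$ such values at time $t$, and $\lvert\mathcal S^{\left(t\right)}\rvert$ is supplied by Lemma \ref{lemma: S^t}.

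First I would bound the per-value cost by inspecting the two branches of Equation (\ref{eqn: lemma 1}). When $n_{t,\ell}=1$ for some $\ell\in\mathcal S_{AR}$, the inner sum over $m$ is purely additive, so only the two factors $p_{\ell j}$ and the conditional density contribute multiplications, giving cost $\mathcal O\left(1\right)$. When $n_{t,\ell}\neq1$ for all $\ell\in\mathcal S_{AR}$, the sum ranges over $i\in\mathcal S_{AR}^c$, producing $\lvert\mathcal S_{AR}^c\rvert=M-k$ products $p_{ij}\alpha_{\boldsymbol n_t-\boldsymbol 1}^{\left(t-1\right)}\left(i\right)$ plus one density factor. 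Because each conditional density $f^{\boldsymbol\theta}_{X_t\mid\boldsymbol H_t,\boldsymbol X_{0:t-1}}$ is Gaussian (or an i.i.d.\ density), and the counter $n_{t,j}$ directly locates the single relevant lagged value $x_{t-n_{t,j}}$, it can be evaluated in a bounded number of multiplications. Hence any single $\alpha_{\boldsymbol n_t}^{\left(t\right)}\left(j\right)$ costs at most of order $M$, the cost at time $t$ is $\mathcal O\left(M^2\lvert\mathcal S^{\left(t\right)}\rvert\right)$, and the total (the $t=0$ term being negligible) is $\mathcal O\!\left(M^2\sum_{t=1}^T\lvert\mathcal S^{\left(t\right)}\rvert\right)$.

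The remaining and principal task is the combinatorial bound on $\sum_{t=1}^T\lvert\mathcal S^{\left(t\right)}\rvert$, for which I would feed the exact formula from Lemma \ref{lemma: S^t} into elementary estimates. For $0\le m\le\min\left(t,k\right)$ and $t\ge1$ I would write $\binom{t}{m}\binom{k}{m}m!=\frac{t!}{\left(t-m\right)!}\binom{k}{m}\le t^m\binom{k}{m}\le t^k\binom{k}{m}$, so that the binomial theorem collapses the sum to $\lvert\mathcal S^{\left(t\right)}\rvert\le t^k\sum_{m=0}^{k}\binom{k}{m}=2^k t^k$. Summing then gives $\sum_{t=1}^T\lvert\mathcal S^{\left(t\right)}\rvert\le 2^k\sum_{t=1}^T t^k\le 2^k T^{k+1}$.

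Combining the two bounds yields a total multiplication count of order $M^2 2^k T^{k+1}$; since $2^k\le k^k$ for all $k\ge2$ and $2^k=\mathcal O\left(k^k\right)$ in general, this is $\mathcal O\left(M^2 T^{k+1}k^k\right)$, as claimed. I expect the combinatorial estimate of $\sum_{t=1}^T\lvert\mathcal S^{\left(t\right)}\rvert$ to be the main obstacle, since it requires turning the exact count of Lemma \ref{lemma: S^t} into a clean polynomial-times-$k^k$ form; the per-value multiplication count and the tallying of values are routine. A secondary point to handle with care is justifying that each conditional-density evaluation costs only a bounded number of multiplications, which rests on the AR(1)/i.i.d.\ structure of the regimes together with the role of the counters in pointing to the correct lagged observation.
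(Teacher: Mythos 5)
Your proof is correct and follows essentially the same route as the paper's: bound the multiplication count at each time step by \(\mathcal O\left(M^2\lvert\mathcal S^{\left(t\right)}\rvert\right)\) (observing that the branch with the sum over \(m\) is purely additive, so each value there costs \(\mathcal O\left(1\right)\) multiplications, while the other branch costs \(\mathcal O\left(M\right)\) per value), and then control \(\sum_{t}\lvert\mathcal S^{\left(t\right)}\rvert\) using the cardinality formula of Lemma \ref{lemma: S^t}. The only substantive difference is the combinatorial step: the paper uses \(\binom{t}{m}\le t^m/m!\) and bounds the sum by \(k+1\) times its largest term, producing a factor of order \(k^k/\left(k-1\right)!\), whereas you use \(\binom{t}{m}m!\le t^m\) together with the binomial theorem to get \(\lvert\mathcal S^{\left(t\right)}\rvert\le 2^kt^k\) — a cleaner and in fact slightly tighter estimate, with both comfortably inside the claimed \(\mathcal O\left(M^2T^{k+1}k^k\right)\). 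Your per-value accounting is coarser than the paper's exact branch-by-branch count (which separates the \(\lvert\mathcal S^{\left(t-1\right)}\rvert\) states whose counters all exceed one from the remaining \(\lvert\mathcal S^{\left(t\right)}\rvert-\lvert\mathcal S^{\left(t-1\right)}\rvert\) states), but it yields the same order; and your remark that each conditional-density evaluation costs only \(\mathcal O\left(1\right)\) multiplications makes explicit a point the paper leaves implicit by not counting those evaluations at all.
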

\begin{proof}
For \(t=0\), calculating \(\alpha^{\left(0\right)}_{\boldsymbol n_0}\left(j\right)\) by (\ref{eqn: hash}) for all \(j\in \mathcal S\) requires \(M\) multiplications in total. For each \(t\in\{1,\dots,T\}\), first consider the case where \(n_{t,\ell}\neq1\) for all \(\ell \in \mathcal S_{AR}\). Fix \(t\). Noting that \[\{\boldsymbol n_t-\boldsymbol 1: \boldsymbol n_t \in \mathcal S^{\left(t\right)}, n_{t,\ell}>1 \text{ for all } \ell \in \mathcal S\} = \mathcal S^{\left(t-1\right)},\] then there are \(|\mathcal S^{\left(t-1\right)}| \left(M-k\right) M\) multiplications to calculate all the necessary \(p_{ij}\alpha_{\boldsymbol{n}_{t}-\boldsymbol 1}^{\left(t-1\right)}\left(i\right)\) terms since \(i \in \mathcal S_{AR}^c\), \(j \in \mathcal S\) and \(\boldsymbol n_t-\boldsymbol 1\in \mathcal S^{\left(t-1\right)}\). The sum then collapses this to \(|\mathcal S^{\left(t-1\right)}| M\) terms, each of which is then multiplied by \(f^{\boldsymbol \theta}_{X_t|\boldsymbol{H}_{t},\boldsymbol{X}_{0:t-1}}\left(x_t|\left(\boldsymbol{n}_{t},j\right),\boldsymbol{x}_{0:t-1}\right)\) which requires \(|\mathcal S^{\left(t-1\right)}| M\) multiplications. 

Now consider the case where \(n_{t,\ell}=1\) for some \(\ell \in \mathcal S_{AR}\). Fix \(t\) and \(\ell\). Compute the sums \(\sum\limits_{m=1}^{t}\alpha_{\boldsymbol{n}_{t}-\boldsymbol 1+m\boldsymbol e_\ell}^{\left(t-1\right)}\left(\ell\right)\) for each \(\boldsymbol n_t\) where \(n_{t,\ell}=1\) and store them. After computing the sums, there are \({\left(|\mathcal S^{\left(t\right)}| - |\mathcal S^{\left(t-1\right)}|\right)}\) stored terms. Keep \(t\) fixed, but allow \(\ell\) to vary. Each of the stored sums is multiplied by \(p_{\ell j}\) and \(f^{\boldsymbol \theta}_{X_t|\boldsymbol{H}_{t},\boldsymbol{X}_{0:t-1}}\left(x_t| \left(\boldsymbol{n}_{t},j\right),\boldsymbol{x}_{0:t-1}\right)\) for \(\ell \in \mathcal S_{AR}\) and \(j \in \mathcal S\) which gives a total of \(2 Mk  \left(|\mathcal S^{\left(t\right)}| - |\mathcal S^{\left(t-1\right)}|\right)\) multiplications. Thus, the total number of multiplications required is 
\begin{align*}
C & = M + \sum_{t=1}^T\left[|\mathcal S^{\left(t-1\right)}| \left(M-k\right) M + |\mathcal S^{\left(t-1\right)}| M + 2 M k  \left(|\mathcal S^{\left(t\right)}| - |\mathcal S^{\left(t-1\right)}|\right)\right]
\\& \leq  M+\sum_{t=1}^T\left[\left(M^2-3Mk+M\right)|\mathcal S^{\left(t-1\right)}| + 2Mk|\mathcal S^{\left(t\right)}|\right]
\\& = M+\sum_{t=1}^T\left[\left(M^2-3Mk+M\right)\sum_{m = 0}^{\min\left(t-1,k\right)}\binom{t-1}{m}\binom{k}{m}m! \right. \\&\qquad\qquad{}\left. {}+ 2Mk\sum_{m = 0}^{\min\left(t,k\right)}\binom{t}{m}\binom{k}{m}m!\right] 
\\& \leq M+\sum_{t=1}^T\left[\left(M^2-3Mk+M\right)\sum_{m = 0}^{\min\left(t-1,k\right)}\cfrac{\left(t-1\right)^{m}}{m!}\cfrac{k^m}{m!}m! \right. \\&\qquad\qquad{}\left.{}+ 2Mk\sum_{m = 0}^{\min\left(t,k\right)}\cfrac{t^{m}}{m!}\cfrac{k^m}{m!}m!\right] 
\end{align*}
by the result \(\binom{t}{m}\leq \frac{t^m}{m!}\). This can then be bounded by 
\begin{align*}
& M+ T\left(M^2-3Mk+M\right)\sum_{m = 0}^{\min\left(T-1,k\right)}\cfrac{\left(T-1\right)^{m}}{m!}\cfrac{k^m}{m!}m! + 2TMk\sum_{m = 0}^{\min\left(T,k\right)}\cfrac{T^{m}}{m!}\cfrac{k^m}{m!}m! 
\\& \leq M+ T\left(k+1\right)\left(M^2-3Mk+M\right)\cfrac{\left(T-1\right)^{k}}{k!}\cfrac{k^k}{k!}k!  + 2TMk(k+1) \cfrac{T^{k}}{k!}\cfrac{k^k}{k!}k! 
\\& \leq  M+  \left(k+1\right)\left(M^2-3Mk+M\right)\cfrac{T\left(T-1\right)^{k}}{\left(k-1\right)!}{k^{k-1}}  + 2M\left(k+1\right)\cfrac{T^{k+1}}{\left(k-1\right)!}{k^{k}}.
\end{align*}
From which we see the complexity is bounded by \( \mathcal O \left(M^2T^{k+1}{k^k}\right)\). 
\end{proof}

To overcome possible underflow issues, we consider a normalised version of the algorithm. Define 
\[\widetilde{\alpha}_{\boldsymbol{n}_t}^{\left(t\right)}\left(j\right) := \begin{cases} 
f_{\boldsymbol{H}_0,X_0 }^{\boldsymbol\theta}\left( \left(\boldsymbol n_{0},j\right), x_0 \right) & \text{for } t=0,
\\f_{\boldsymbol{H}_t,X_t|\boldsymbol{X}_{0:t-1} }^{\boldsymbol\theta}\left( \left(\boldsymbol n_{t},j\right),x_t | \boldsymbol x_{0:t-1} \right) & \text{for }t=1,\dots T.
\end{cases}\]
\begin{lemma}[A normalised algorithm]\label{lemma: lemma 4}
	Set \(\widetilde a_{\boldsymbol n_0}^{\left(0\right)} = {\alpha}_{\boldsymbol{n}_0}^{\left(0\right)}\left(j\right)\) from the simple algorithm. Then, for \(t=1,\dots,T\) calculate 
	\begin{align*}
		\widetilde{\alpha}_{\boldsymbol{n}_t}^{\left(t\right)}\left(j\right)&:=
		\begin{cases} 
		f_{X_t|\boldsymbol{H}_t,\boldsymbol{X}_{0:t-1} }^{\boldsymbol\theta}\left( x_t | \left(\boldsymbol n_{t},j\right),\boldsymbol x_{0:t-1} \right) \sum\limits_{i\in \mathcal S_{AR}^c} p_{ij}
		\cfrac{\widetilde{\alpha}_{\boldsymbol{n}_{t}-\boldsymbol{1}}^{\left(t-1\right)}\left(i\right)}{\sum\limits_{\ell\in\mathcal S}\sum\limits_{\boldsymbol n \in \mathcal S^{\left(t-1\right)}} \widetilde{\alpha}_{\boldsymbol{n}}^{\left(t-1\right)}\left(\ell\right)} &\\\qquad\qquad\qquad\qquad\qquad\qquad\qquad\qquad \text{ if } n_{t,k}\neq1, \text{ for all } k \in \mathcal S_{AR},
		\\ f_{X_t|\boldsymbol{H}_t,\boldsymbol{X}_{0:t-1} }^{\boldsymbol\theta}\left( x_t | \left(\boldsymbol n_{t},j\right),\boldsymbol x_{0:t-1} \right) p_{ij}\sum\limits_{m=1}^t 
		\cfrac{\widetilde{\alpha}_{\boldsymbol n_{t}-\boldsymbol 1 +m\boldsymbol e_{i}}^{\left(t-1\right)}\left(i\right)}{\sum\limits_{\ell\in\mathcal S}\sum\limits_{\boldsymbol n \in \mathcal S^{\left(t-1\right)}} \widetilde{\alpha}_{\boldsymbol{n}}^{\left(t-1\right)}\left(\ell\right)} &\\\qquad\qquad\qquad\qquad\qquad\qquad\qquad\qquad \text{ otherwise}.
		\end{cases}
	\end{align*}
	Then the loglikelihood is given by 
	\begin{equation}\label{eqn: lemma 4}L(\boldsymbol \theta)=\sum_{t=0}^T\log\left(\sum_{\boldsymbol n_t\in \mathcal S^{\left(t\right)}}\sum_{i\in \mathcal S} \widetilde{\alpha}_{\boldsymbol{n}_{t}}^{\left(t\right)}\left(i\right) \right). 	\end{equation}
\end{lemma}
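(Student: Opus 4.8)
The plan is to deduce the normalised algorithm from the simple algorithm of Lemma~\ref{lemma: lemma 2} by keeping track of a single running normaliser. For \(t=-1,0,1,\dots,T\) write \(c_t := f^{\boldsymbol\theta}_{\boldsymbol X_{0:t}}(\boldsymbol x_{0:t})\) for the marginal likelihood of the first \(t+1\) observations, with the convention \(c_{-1}:=1\); in particular \(c_T = L(\boldsymbol\theta)\). The first observation, which is purely definitional and requires no induction, is that \(\widetilde{\alpha}\) and the unnormalised \(\alpha_{\boldsymbol n_t}^{(t)}(j) = f^{\boldsymbol\theta}_{\boldsymbol H_t, \boldsymbol X_{0:t}}((\boldsymbol n_t,j),\boldsymbol x_{0:t})\) of Lemma~\ref{lemma: lemma 2} differ only by this normaliser: by the chain rule for densities, \(f^{\boldsymbol\theta}_{\boldsymbol H_t,\boldsymbol X_{0:t}} = f^{\boldsymbol\theta}_{\boldsymbol H_t, X_t \mid \boldsymbol X_{0:t-1}}\, f^{\boldsymbol\theta}_{\boldsymbol X_{0:t-1}}\), so that \(\widetilde{\alpha}_{\boldsymbol n_t}^{(t)}(j) = \alpha_{\boldsymbol n_t}^{(t)}(j)/c_{t-1}\) for every \(t\) (the case \(t=0\) giving \(\widetilde{\alpha}^{(0)} = \alpha^{(0)}\), which matches the stated initialisation).

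Second, I would identify the denominator that appears in the recursion. Summing the rescaling identity over all states and marginalising out \(\boldsymbol H_{t-1}\) (legitimate because, by Lemma~\ref{lemma: S^t}, the counters in \(\mathcal S^{(t-1)}\) together with the regime in \(\mathcal S\) partition the support of \(\boldsymbol H_{t-1}\)) gives \(\sum_{\ell\in\mathcal S}\sum_{\boldsymbol n\in\mathcal S^{(t-1)}}\widetilde{\alpha}_{\boldsymbol n}^{(t-1)}(\ell) = c_{t-2}^{-1}\sum_{\ell,\boldsymbol n}\alpha_{\boldsymbol n}^{(t-1)}(\ell) = c_{t-1}/c_{t-2}\), the one-step predictive density of \(x_{t-1}\). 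Thus the normalising sum in the recursion is exactly the factor needed to rescale level \(t-1\) to level \(t\).

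Third, I would verify that the stated recursion is consistent with these two facts, which closes a short induction on \(t\). Assuming the recursion has produced the correct conditional densities \(\widetilde{\alpha}^{(t-1)}\) at level \(t-1\), I take the Lemma~\ref{lemma: lemma 2} recursion for \(\alpha_{\boldsymbol n_t}^{(t)}(j)\) and divide through by \(c_{t-1}\). Each lagged term becomes \(\alpha_{\boldsymbol n}^{(t-1)}(i)/c_{t-1} = \widetilde{\alpha}_{\boldsymbol n}^{(t-1)}(i)\,(c_{t-2}/c_{t-1})\), and by the second step \(c_{t-2}/c_{t-1}\) is precisely the reciprocal of \(\sum_{\ell,\boldsymbol n}\widetilde{\alpha}_{\boldsymbol n}^{(t-1)}(\ell)\). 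Carrying this through both branches (the \(\mathcal S_{AR}^c\) sum when no counter equals \(1\), and the \(\sum_{m=1}^t\) sum when some counter equals \(1\)) reproduces verbatim the two cases in the statement, so the recursion does compute \(\widetilde{\alpha}_{\boldsymbol n_t}^{(t)}(j)\).

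Finally, for the loglikelihood formula I would again sum the rescaling identity over all states, now at level \(t\), to obtain \(\sum_{\boldsymbol n_t\in\mathcal S^{(t)}}\sum_{i\in\mathcal S}\widetilde{\alpha}_{\boldsymbol n_t}^{(t)}(i) = c_t/c_{t-1}\), so the right-hand side of Equation~(\ref{eqn: lemma 4}) telescopes: \(\sum_{t=0}^T\log(c_t/c_{t-1}) = \log\prod_{t=0}^T (c_t/c_{t-1}) = \log(c_T/c_{-1}) = \log L(\boldsymbol\theta)\). I do not expect a genuine obstacle: the substantive step, that the forward recursion correctly propagates the joint densities, has already been discharged in Lemma~\ref{lemma: lemma 2}, and what remains is bookkeeping of the scalar \(c_t\). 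The only points needing care are the boundary convention \(c_{-1}=1\) (so that the initialisation and the \(t=0\) term of the telescoping sum are consistent) and checking that the denominators are nonzero, which holds wherever the data have positive predictive density.
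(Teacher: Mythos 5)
Your proof is correct, but it takes a genuinely different route from the paper's. The paper proves the normalised recursion from first principles, mirroring the proof of Lemma~\ref{lemma: lemma 2}: it factorises \(\widetilde{\alpha}_{\boldsymbol n_t}^{(t)}(j)\) into the emission density times the prediction probability \(\mathbb P^{\boldsymbol\theta}\left(\boldsymbol H_t = (\boldsymbol n_t,j)\mid\boldsymbol x_{0:t-1}\right)\), expands that probability over the admissible states of \(\boldsymbol H_{t-1}\), and identifies each filtered probability \(\mathbb P^{\boldsymbol\theta}\left(\boldsymbol H_{t-1}=(\boldsymbol n_{t-1},i)\mid\boldsymbol x_{0:t-1}\right)\) as the self-normalised ratio \(\widetilde{\alpha}_{\boldsymbol n_{t-1}}^{(t-1)}(i)/\sum_{\ell,\boldsymbol n}\widetilde{\alpha}_{\boldsymbol n}^{(t-1)}(\ell)\), with Equation~(\ref{eqn: lemma 4}) then dispatched by the law of total probability. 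You instead reduce the lemma to Lemma~\ref{lemma: lemma 2} outright: the definitional identity \(\widetilde{\alpha}^{(t)}=\alpha^{(t)}/c_{t-1}\), together with the observation \(\sum_{\ell,\boldsymbol n}\widetilde{\alpha}_{\boldsymbol n}^{(t-1)}(\ell)=c_{t-1}/c_{t-2}\), turns the unnormalised recursion into the normalised one by division, and the loglikelihood formula follows by telescoping \(\sum_{t=0}^T\log(c_t/c_{t-1})=\log c_T\). Your route is more modular -- the case analysis over which states of \(\boldsymbol H_{t-1}\) are compatible with \(\boldsymbol n_t\) is inherited from Lemma~\ref{lemma: lemma 2} rather than repeated -- and it makes transparent that the normalising sums are exactly the one-step predictive densities, so that Equation~(\ref{eqn: lemma 4}) is the standard prediction-error decomposition of the loglikelihood. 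What the paper's derivation buys in exchange is the explicit identification of the intermediate ratios as the filtered and prediction probabilities of \(\boldsymbol H_t\), which is precisely what the backward algorithm of Section~\ref{mybkwd} consumes as input; in your bookkeeping with the scalar \(c_t\) that interpretation stays implicit. Both arguments lean on the partition property of \(\mathcal S^{(t)}\) from Lemma~\ref{lemma: S^t} to justify the marginalisations, and your closing checks (the convention \(c_{-1}=1\) and nonvanishing denominators wherever the data have positive predictive density) are the right boundary conditions for the induction.
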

\begin{proof}
The definition of conditional densities gives
	\begin{align*}
		\widetilde{\alpha}_{\boldsymbol{n}_t}^{\left(t\right)}\left(j\right)&:= f_{\boldsymbol{H}_t,X_t|\boldsymbol{X}_{0:t-1} }^{\boldsymbol\theta}\left( \left(\boldsymbol n_{t},j\right),x_t | \boldsymbol x_{0:t-1} \right)
		\\&= f_{X_t|\boldsymbol{H}_t,\boldsymbol{X}_{0:t-1} }^{\boldsymbol\theta}\left( x_t | \left(\boldsymbol n_{t},j\right),\boldsymbol x_{0:t-1} \right)
		\mathbb P^{\boldsymbol\theta}\left( \boldsymbol H_t=\left(\boldsymbol n_{t},j\right)|\boldsymbol x_{0:t-1} \right),
	\end{align*}
	where 
	\begin{align}
		&\mathbb P^{\boldsymbol\theta}\left( \boldsymbol H_t=\left(\boldsymbol n_{t},j\right)|\boldsymbol x_{0:t-1} \right) \nonumber
		\\&\qquad= \sum_{\boldsymbol n_{t-1}\in \mathcal S^{\left(t-1\right)}}\sum_{i\in \mathcal S} \mathbb P^{\boldsymbol\theta}\left( \boldsymbol H_{t-1} = \left(\boldsymbol n_{t-1},i\right),\boldsymbol H_t = \left(\boldsymbol n_{t},j\right)|\boldsymbol x_{0:t-1} \right).\label{eqn: sums}
	\end{align}
	Using the same arguments as in the proof of Lemma \ref{lemma: lemma 2}, the right-hand side of (\ref{eqn: sums}) simplifies to 
	\begin{align}&\sum_{m=1}^t \mathbb P^{\boldsymbol\theta}\left( \boldsymbol H_{t-1} = \left(\boldsymbol n_{t}-\boldsymbol 1 +m\boldsymbol e_{i},i\right),\boldsymbol H_t = \left(\boldsymbol n_{t},j\right)|\boldsymbol x_{0:t-1} \right) \label{eqn: proof 3}\end{align}\(\text{when }n_{t,i}=1 \text{ for some } i \in \mathcal S_{AR}, \)
	\begin{align}&\sum_{i\in \mathcal S_{AR}^c} \mathbb P^{\boldsymbol\theta}\left( \boldsymbol H_{t-1} = \left(\boldsymbol n_{t}-\boldsymbol 1,i\right),\boldsymbol H_t = \left(\boldsymbol n_{t},j\right)|\boldsymbol x_{0:t-1} \right) \label{eqn: proof 4}\end{align}
	\( \text{when }n_{t,\ell}\neq 1 \text{ for all } \ell\in\mathcal S_{AR}.\)
	The summands in (\ref{eqn: proof 3}) and (\ref{eqn: proof 4}) can be written in the form 
	\begin{align*}
		&\mathbb P^{\boldsymbol\theta}\left(\boldsymbol H_t = \left(\boldsymbol n_{t},j\right)|\boldsymbol H_{t-1} = \left(\boldsymbol n_{t-1},i\right),\boldsymbol x_{0:t-1} \right) 
		\mathbb P^{\boldsymbol\theta}\left(\boldsymbol H_{t-1} = \left(\boldsymbol n_{t-1},i\right)|\boldsymbol x_{0:t-1} \right)
		\\& = p_{ij}				\cfrac{f_{\boldsymbol H_{t-1},X_{t-1}|\boldsymbol X_{0:t-2}}^{\boldsymbol\theta}\left( \left(\boldsymbol n_{t-1},i\right),x_{t-1}|\boldsymbol x_{0:t-2} \right)}{\sum\limits_{\ell\in\mathcal S}\sum\limits_{\boldsymbol n \in \mathcal S^{\left(t-1\right)}} f_{\boldsymbol H_{t-1},X_{t-1}|\boldsymbol X_{0:t-2}}^{\boldsymbol\theta}\left( \left(\boldsymbol n,\ell\right),x_{t-1}|\boldsymbol x_{0:t-2} \right)}
		\\& = p_{ij}
		\cfrac{\widetilde{\alpha}_{\boldsymbol{n}_{t-1}}^{\left(t-1\right)}\left(i\right)}{\sum\limits_{\ell\in\mathcal S}\sum\limits_{\boldsymbol n \in \mathcal S^{\left(t-1\right)}} \widetilde{\alpha}_{\boldsymbol{n}}^{\left(t-1\right)}\left(\ell\right)},
	\end{align*}
	which proves the result for the iterations. Equation (\ref{eqn: lemma 4}) holds from the law of total probability. 
\end{proof}
	
\begin{lemma}
	The complexity of the normalised forward algorithms, as given by the total number of multiplications, is less than \( \mathcal O \left(M^2T^{k+1}k^{k}\right)\).
\end{lemma}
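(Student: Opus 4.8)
The plan is to observe that the normalised recursion of Lemma~\ref{lemma: lemma 4} performs exactly the same multiplications as the simple recursion of Lemma~\ref{lemma: lemma 2}, together with only a bounded amount of extra work for the normalisation, and then to invoke the bound already established in Lemma~\ref{lemma: complexity for simple fwd}. First I would match the two recursions term by term. Comparing Equation~(\ref{eqn: lemma 1}) with the display in Lemma~\ref{lemma: lemma 4}, the only structural difference is that each summand \(p_{ij}\widetilde{\alpha}_{\boldsymbol n}^{\left(t-1\right)}\left(i\right)\) is divided by the normalising constant
\[
C_{t-1} := \sum_{\ell\in\mathcal S}\sum_{\boldsymbol n\in\mathcal S^{\left(t-1\right)}}\widetilde{\alpha}_{\boldsymbol n}^{\left(t-1\right)}\left(\ell\right).
\]
The products \(p_{ij}\widetilde{\alpha}^{\left(t-1\right)}\) and the final multiplications by the conditional density \(f^{\boldsymbol\theta}_{X_t|\boldsymbol H_t,\boldsymbol X_{0:t-1}}\) occur in identical number to those in the simple algorithm, so they are already accounted for by the count \(C\) derived in the proof of Lemma~\ref{lemma: complexity for simple fwd}.

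Next I would account for the extra cost of normalisation at each time step \(t\in\{1,\dots,T\}\). The constant \(C_{t-1}\) is independent of \(\left(\boldsymbol n_t,j\right)\), so it is computed once per time step as a sum of the \(|\mathcal S^{\left(t-1\right)}|M\) previously stored values \(\widetilde{\alpha}^{\left(t-1\right)}\); summation contributes no multiplications. Computing the single reciprocal \(1/C_{t-1}\) and then rescaling each of the \(|\mathcal S^{\left(t-1\right)}|M\) stored values by it adds at most \(|\mathcal S^{\left(t-1\right)}|M+1\) multiplications. Summed over \(t\), this extra work is \(\sum_{t=1}^T\left(|\mathcal S^{\left(t-1\right)}|M+1\right)\), which is of exactly the same order as the term \(\sum_{t=1}^T|\mathcal S^{\left(t-1\right)}|M\) already present in the count \(C\) of Lemma~\ref{lemma: complexity for simple fwd}.

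It then follows that the total number of multiplications is at most \(C\) plus a term of the same order, so the asymptotic bound \(\mathcal O\left(M^2T^{k+1}k^k\right)\) is unchanged; the cardinality formula for \(|\mathcal S^{\left(t-1\right)}|\) from Lemma~\ref{lemma: S^t} and the inequality \(\binom{t}{m}\le t^m/m!\) used previously close the argument exactly as before. Since this is essentially a routine bookkeeping argument, I do not anticipate a genuine obstacle. The only point requiring care is to count the divisions correctly as multiplications: by computing a \emph{single} reciprocal per time step and rescaling, rather than dividing each summand separately, one ensures the normalisation contributes \(\mathcal O\left(|\mathcal S^{\left(t-1\right)}|M\right)\) per step and does not inadvertently introduce a higher-order term.
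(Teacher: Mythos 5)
Your proof is correct and takes essentially the same approach as the paper's: both reduce to the multiplication count of the simple algorithm (Lemma~\ref{lemma: complexity for simple fwd}) and show that the normalisation adds only work of the same or lower order. The single difference is bookkeeping: the paper folds the division into the transition probabilities, computing $p_{ij}$ divided by the normalising constant for all $i,j\in\mathcal S$ at a cost of only $TM^2$ extra multiplications in total, whereas your reciprocal-and-rescale scheme costs $\sum_{t=1}^{T}\left(M|\mathcal S^{(t-1)}|+1\right)$ extra multiplications --- a larger count, but, as you correctly argue, of the same order as terms already present in the count $C$, so the same bound follows.
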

\begin{proof}
The normalised algorithm is the same as the forward algorithm except that each term is divided by \(\sum\limits_{\ell\in\mathcal S}\sum\limits_{\boldsymbol n \in \mathcal S^{\left(t-1\right)}} \widetilde{\alpha}_{\boldsymbol{n}}^{\left(t-1\right)}\left(\ell\right)\). The most efficient way to do this extra step is to do the division \({p_{ij}}/{\sum\limits_{\ell\in\mathcal S}\sum\limits_{\boldsymbol n \in \mathcal S^{\left(t-1\right)}} \widetilde{\alpha}_{\boldsymbol{n}}^{\left(t-1\right)}\left(\ell\right)}\) for \(i,j \in \mathcal S\) first, which results in an additional \(TM^2\) multiplications in total. 
\end{proof}

\section{A novel backward algorithm}\label{mybkwd}
The goal of the backward algorithm is to calculate the \emph{smoothed probabilities} 
\[\gamma_{\boldsymbol{n}_t}^{\left(t\right)}\left(i\right) := \mathbb P^{\boldsymbol{\theta}}\left(\boldsymbol{H}_t = \left(\boldsymbol{n}_t,i\right)|\boldsymbol x_{0:T}\right),\]
for \(t=0,1,\dots,T\), \(\boldsymbol n_t =\left(n_{1,t},\dots,n_{k,t}\right)\in \mathcal S^{\left(t\right)}\) and \(i \in \mathcal S\). The smoothed probabilities are often of interest in their own right, but are also typically used to construct an EM algorithm. Recall that as a byproduct of the forward algorithm we obtain the filtered probabilities 
\[\widehat{\alpha}_{\boldsymbol n_t}^{\left(t\right)}\left(j\right):= \mathbb P^{\boldsymbol\theta}\left(\boldsymbol{H}_{t} = \left(\boldsymbol n_{t}, j\right) | \boldsymbol x_{0:t}\right)=\cfrac{\widetilde{\alpha}_{\boldsymbol{n}_{t}}^{\left(t\right)}\left(j\right)}{\sum\limits_{\boldsymbol n_t \in \mathcal S^{\left(t\right)}}\sum\limits_{\ell\in\mathcal S} \widetilde{\alpha}_{\boldsymbol{n}_{t}}^{\left(t\right)}\left(\ell\right)},\]
as well as the prediction probabilities 
\[\phi_{\boldsymbol n_{t}}^{\left(t\right)}\left(j\right)=\mathbb P^{\boldsymbol{\theta}}\left(\boldsymbol{H}_{t} = \left(\boldsymbol n_t,j\right) | \boldsymbol x_{0:t-1}\right),\]
for \(j\in\mathcal S\), \(\boldsymbol n_t\in \mathcal S^{\left(t\right)}\), and all \(t=0,\dots,T\). These are the inputs to the backward algorithm. 

\begin{lemma}[A backward algorithm]\label{lemma: mybkwd}
The smoothed probabilities can be calculated using the following procedure. Set
\(\gamma_{\boldsymbol{n}_T}^{\left(T\right)}\left(i\right)=\widehat{\alpha}_{\boldsymbol n_T}^{\left(T\right)}\left(i\right), \text{ for all } i \in \mathcal S,\text{ } \boldsymbol{n}_T \in \mathcal S^{\left(T\right)}.\)
Then, for \(t = T-1,T-2,\dots,0\) and for \(\boldsymbol n_t\in \mathcal S^{(t)}\) calculate
\begin{align*}
    \gamma_{\boldsymbol{n}_t}^{\left(t\right)}\left(i\right) =& \begin{cases} 
    \widehat{\alpha}_{\boldsymbol{n}_t}^{\left(t\right)}\left(i\right) \sum\limits_{j \in \mathcal S} p_{ij}\cfrac{\gamma_{\boldsymbol{n}_t+\boldsymbol{1}}^{\left(t+1\right)}\left(j\right)}{\phi_{\boldsymbol{n}_t+\boldsymbol 1}^{\left(t+1\right)}\left(j\right)} & \text{for } i \in  \mathcal S_{AR}^c,
    \\ \widehat{\alpha}_{\boldsymbol{n}_t}^{\left(t\right)}\left(i\right) \sum\limits_{j \in\mathcal S} p_{ij}\cfrac{\gamma_{\boldsymbol{n}_t^{\left(-i\right)}+\boldsymbol{1}}^{\left(t+1\right)}\left(j\right)}{\phi_{\boldsymbol{n}_t^{\left(-i\right)}+\boldsymbol{1}}^{\left(t+1\right)}\left(j\right)}  & \text{for } i \in \mathcal S_{AR}.
    \end{cases}
\end{align*}
\end{lemma}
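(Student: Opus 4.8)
The plan is to mimic Kim's backward recursion (Algorithm 2), but for the augmented Markov chain $\{\boldsymbol H_t\}$, exploiting two facts established earlier: that $\{\boldsymbol H_t\}$ is a Markov chain with the transition structure of Equation (\ref{eqn: transitions of H_t}), and that the conditional law of $X_t$ given $(\boldsymbol H_t,\boldsymbol X_{0:t-1})$ is the density $f^{\boldsymbol\theta}_{X_t|\boldsymbol H_t,\boldsymbol X_{0:t-1}}$ used in the forward algorithm, i.e.\ it does not depend on the earlier states $\boldsymbol H_{0:t-1}$. The base case is immediate: since $\boldsymbol x_{0:T}$ is the whole data set, $\gamma_{\boldsymbol n_T}^{(T)}(i)=\mathbb P^{\boldsymbol\theta}(\boldsymbol H_T=(\boldsymbol n_T,i)|\boldsymbol x_{0:T})=\widehat\alpha_{\boldsymbol n_T}^{(T)}(i)$.

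For the recursive step, I would begin from the law of total probability, writing $\gamma_{\boldsymbol n_t}^{(t)}(i)$ as a sum over the joint smoothed probabilities of $\boldsymbol H_t=(\boldsymbol n_t,i)$ and $\boldsymbol H_{t+1}=(\boldsymbol n_{t+1},j)$. The first key simplification is that, given $\boldsymbol H_t=(\boldsymbol n_t,i)$, the transition rules in Equation (\ref{eqn: transitions of H_t}) fix the counter vector of $\boldsymbol H_{t+1}$ deterministically: it equals $\boldsymbol n_t+\boldsymbol 1$ if $i\in\mathcal S_{AR}^c$ and $\boldsymbol n_t^{(-i)}+\boldsymbol 1$ if $i\in\mathcal S_{AR}$, with only the regime $j$ free to vary over $\mathcal S$. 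Hence the double sum collapses to a single sum over $j$, and the counter appearing in the time-$(t+1)$ terms is exactly the one named in each case of the lemma.

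Next I would factor each joint term as $\mathbb P^{\boldsymbol\theta}(\boldsymbol H_t=(\boldsymbol n_t,i)|\boldsymbol H_{t+1},\boldsymbol x_{0:T})\,\gamma_{\boldsymbol n_{t+1}}^{(t+1)}(j)$ and then apply the crucial conditional independence: given $\boldsymbol H_{t+1}$ and $\boldsymbol x_{0:t}$, the state $\boldsymbol H_t$ is independent of the future observations $\boldsymbol x_{t+1:T}$, so the conditioning on $\boldsymbol x_{0:T}$ reduces to conditioning on $\boldsymbol x_{0:t}$. A Bayes' Theorem rearrangement then turns this into the transition probability $p_{ij}$, the filtered probability $\widehat\alpha_{\boldsymbol n_t}^{(t)}(i)=\mathbb P^{\boldsymbol\theta}(\boldsymbol H_t=(\boldsymbol n_t,i)|\boldsymbol x_{0:t})$, and the prediction probability $\phi_{\boldsymbol n_{t+1}}^{(t+1)}(j)=\mathbb P^{\boldsymbol\theta}(\boldsymbol H_{t+1}=(\boldsymbol n_{t+1},j)|\boldsymbol x_{0:t})$ in the denominator, using that $\mathbb P^{\boldsymbol\theta}(\boldsymbol H_{t+1}|\boldsymbol H_t,\boldsymbol x_{0:t})=p_{ij}$. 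Collecting the terms recovers the two displayed cases.

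The main obstacle is justifying the conditional-independence step cleanly. Unlike in a standard HMM, $X_{t+1}$ does not depend only on $\boldsymbol H_{t+1}$: when $R_{t+1}$ is an AR(1) regime, $X_{t+1}$ depends on the last observed value of that regime, which may lie several steps in the past. The counter component of $\boldsymbol H_{t+1}$ records exactly how far back, and the independent-regimes assumption together with the Markov property of the AR(1) process ensures that, conditional on $\boldsymbol H_{t+1}$ and $\boldsymbol x_{0:t}$, the future observations depend on the past only through quantities that are already fixed. I would make this precise by first establishing the factorisation of the full joint density of $(\boldsymbol H_{0:T},\boldsymbol X_{0:T})$ into the initial term, the transition probabilities, and the conditional observation densities $f^{\boldsymbol\theta}_{X_s|\boldsymbol H_s,\boldsymbol X_{0:s-1}}$, and then reading the required conditional independence off that factorisation. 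Two minor points to verify along the way are that the relevant prediction probability $\phi_{\boldsymbol n_{t+1}}^{(t+1)}$ is strictly positive, so the division is well defined, and that the deterministic successor counter lies in $\mathcal S^{(t+1)}$; both follow from Lemma \ref{lemma: S^t} and the structure of the reachable states.
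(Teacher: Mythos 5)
Your proposal is correct and takes essentially the same approach as the paper's proof: the same base case, the same law-of-total-probability expansion over $\boldsymbol H_{t+1}$ collapsed by the deterministic counter transition, the same conditional independence of $\boldsymbol x_{t+1:T}$ and $\boldsymbol H_t$ given $\left(\boldsymbol H_{t+1},\boldsymbol x_{0:t}\right)$, and the same Bayes rearrangement producing $\widehat{\alpha}_{\boldsymbol n_t}^{\left(t\right)}\left(i\right)\, p_{ij}\,\gamma_{\boldsymbol n_{t+1}}^{\left(t+1\right)}\left(j\right)/\phi_{\boldsymbol n_{t+1}}^{\left(t+1\right)}\left(j\right)$. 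The only differences are bookkeeping ones: you factor the joint smoothed probability as a backward conditional times $\gamma^{\left(t+1\right)}$ and apply Bayes to the backward conditional, whereas the paper expands density ratios and cancels, and you are in fact slightly more careful than the paper in flagging that the conditional-independence step itself warrants justification via the joint-density factorisation.
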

\begin{proof}
Consider the event \(\boldsymbol{H}_t = \left(\boldsymbol n_t,i\right)\). By the definition of \(\boldsymbol{n}_t\), when \(i \in \mathcal S_{AR}\), then \(\boldsymbol{n}_{t+1}=\boldsymbol{n}_{t}^{\left(-i\right)} + \boldsymbol 1\), and when \(i \in \mathcal S_{AR}^c\), then \(\boldsymbol{n}_{t+1}=\boldsymbol{n}_{t} + \boldsymbol 1\). Thus, when \(\boldsymbol{H}_t=\left(\boldsymbol{n}_t,i\right)\) is known, then \(\boldsymbol{n}_{t+1}\) is also known. As a result, 
\begin{align}
\gamma_{\boldsymbol{n}_t}^{\left(t\right)}\left(i\right)\nonumber&=\mathbb P^{\boldsymbol{\theta}}\left(\boldsymbol{H}_t = \left(\boldsymbol n_t,i\right) | \boldsymbol x_{0:T}\right) \nonumber
\\&= \sum_{j \in \mathcal S } \sum_{\boldsymbol n_{t+1} \in \mathcal S^{\left(t+1\right)}} \mathbb P^{\boldsymbol{\theta}}\left(\boldsymbol{H}_t = \left(\boldsymbol n_t,i\right), \boldsymbol{H}_{t+1} = \left(\boldsymbol n_{t+1},j\right) | \boldsymbol x_{0:T}\right) \nonumber
\\
&= \begin{cases}
\sum\limits_{j \in \mathcal S} \mathbb P^{\boldsymbol{\theta}}\left(\boldsymbol{H}_t = \left(\boldsymbol n_t,i\right), \boldsymbol{H}_{t+1} = \left(\boldsymbol n_{t}^{\left(-i\right)} +\boldsymbol{1},j\right) | \boldsymbol x_{0:T}\right) & \text{for }i \in \mathcal S_{AR},
\\ \sum\limits_{j \in \mathcal S} \mathbb P^{\boldsymbol{\theta}}\left(\boldsymbol{H}_t = \left(\boldsymbol n_t,i\right), \boldsymbol{H}_{t+1} = \left(\boldsymbol n_{t} +\boldsymbol{1},j\right) | \boldsymbol x_{0:T}\right) & \text{for }i \in \mathcal S_{AR}^c,
\end{cases} \label{eqn: piecewise Gamma 1}
\end{align}
for \(t=0,1,\dots,T-1\), \(\boldsymbol n_t \in \mathcal S^{\left(t\right)}\) and \(i \in \mathcal S\). Since the following arguments are the same for both cases, \(i \in \mathcal S_{AR}\) and \(i \in \mathcal S_{AR}^c\), for notational convenience, let \(\boldsymbol{n}\) take the value \(\boldsymbol n_{t}^{\left(-i\right)} +\boldsymbol{1}\) when \(i \in \mathcal S_{AR}\) and the value \(\boldsymbol n_{t} +\boldsymbol{1}\) when \(i \in \mathcal S_{AR}^c\). The summands on the right hand side of (\ref{eqn: piecewise Gamma 1}) can be written as
\begin{align}
%
& \cfrac{f^{\boldsymbol{\theta}}_{\boldsymbol{H}_t , \boldsymbol{H}_{t+1}, \boldsymbol X_{t+1:T} | \boldsymbol X_{0:t}}\left(\left(\boldsymbol n_t,i\right), \left(\boldsymbol n,j\right), \boldsymbol x_{t+1:T} | \boldsymbol x_{0:t}\right)}{f_{\boldsymbol X_{t+1:T} | \boldsymbol X_{0:t}}^{\boldsymbol{\theta}}\left(\boldsymbol x_{t+1:T} | \boldsymbol x_{0:t}\right)} \nonumber
\\&=  \mathbb P^{\boldsymbol{\theta}}\left(\boldsymbol{H}_t = \left(\boldsymbol n_t,i\right)|\boldsymbol x_{0:t}\right) \mathbb P^{\boldsymbol{\theta}}\left(\boldsymbol{H}_{t+1} = \left(\boldsymbol n ,j\right)|\boldsymbol{H}_t = \left(\boldsymbol n_t,i\right), \boldsymbol x_{0:t}\right) \nonumber
\\&\qquad{}\times\cfrac{f^{\boldsymbol{\theta}}_{\boldsymbol X_{t+1:T}|\boldsymbol{H}_t, \boldsymbol{H}_{t+1} , \boldsymbol X_{0:t}}\left(\boldsymbol x_{t+1:T}|\left(\boldsymbol n_t,i\right), \left(\boldsymbol n,j\right), \boldsymbol x_{0:t}\right)}{f_{\boldsymbol X_{t+1:T} | \boldsymbol X_{0:t}}^{\boldsymbol{\theta}}\left(\boldsymbol x_{t+1:T} | \boldsymbol x_{0:t}\right)} \nonumber
\\&= \mathbb P^{\boldsymbol{\theta}}\left(\boldsymbol{H}_t = \left(\boldsymbol n_t,i\right)|\boldsymbol x_{0:t}\right) p_{ij} \nonumber
\\&\qquad{}\times\cfrac{f^{\boldsymbol{\theta}}_{\boldsymbol X_{t+1:T}|\boldsymbol{H}_t , \boldsymbol{H}_{t+1} , \boldsymbol X_{0:t}}\left(\boldsymbol x_{t+1:T}|\left(\boldsymbol n_t,i\right), \left(\boldsymbol n ,j\right), \boldsymbol x_{0:t}\right)}{f_{\boldsymbol X_{t+1:T} | \boldsymbol X_{0:t}}^{\boldsymbol{\theta}}\left(\boldsymbol x_{t+1:T} | \boldsymbol x_{0:t}\right)}, \label{eqn: a statement to continue 1}
\end{align}
where the last equality holds since \(\boldsymbol{H}_{t+1}\) is independent of \(\boldsymbol{x}_{0:t}\) given \(\boldsymbol{H}_t\).
Now, noting that \(\boldsymbol x_{t+1:T}\) is independent of \(\boldsymbol{H}_t\) given \(\boldsymbol{H}_{t+1}\) and \(\boldsymbol x_{0:t}\), then the right-hand side of (\ref{eqn: a statement to continue 1}) equals 
\begin{align*}
& \cfrac{\mathbb P^{\boldsymbol{\theta}}\left(\boldsymbol{H}_t = \left(\boldsymbol n_t,i\right)|\boldsymbol x_{0:t}\right) p_{ij}}{f^{\boldsymbol{\theta}}_{\boldsymbol X_{t+1:T} | \boldsymbol X_{0:t}}\left(\boldsymbol x_{t+1:T} | \boldsymbol x_{0:t}\right)} 
f^{\boldsymbol{\theta}}_{\boldsymbol X_{t+1:T}| \boldsymbol{H}_{t+1}, \boldsymbol X_{0:t}}\left(\boldsymbol x_{t+1:T}|\left(\boldsymbol n ,j\right), \boldsymbol x_{0:t}\right),
\\&= \cfrac{\mathbb P^{\boldsymbol{\theta}}\left(\boldsymbol{H}_t = \left(\boldsymbol n_t,i\right)|\boldsymbol x_{0:t}\right) p_{ij}}{f^{\boldsymbol{\theta}}_{\boldsymbol X_{t+1:T} | \boldsymbol X_{0:t}}\left(\boldsymbol x_{t+1:T} | \boldsymbol x_{0:t}\right)} 
\cfrac{f^{\boldsymbol{\theta}}_{\boldsymbol X_{t+1:T}, \boldsymbol{H}_{t+1}| \boldsymbol X_{0:t}}\left(\boldsymbol x_{t+1:T}, \boldsymbol{H}_{t+1} = \left(\boldsymbol n ,j\right)| \boldsymbol x_{0:t}\right)}{\mathbb P^{\boldsymbol{\theta}}\left(\boldsymbol{H}_{t+1} = \left(\boldsymbol n ,j\right) | \boldsymbol x_{0:t}\right)}  \nonumber
\\&= \cfrac{\mathbb P^{\boldsymbol{\theta}}\left(\boldsymbol{H}_t = \left(\boldsymbol n_t,i\right)|\boldsymbol x_{0:t}\right) p_{ij}}{f^{\boldsymbol{\theta}}_{\boldsymbol X_{t+1:T} | \boldsymbol X_{0:t}}\left(\boldsymbol x_{t+1:T} | \boldsymbol x_{0:t}\right)}\nonumber
\cfrac{f^{\boldsymbol{\theta}}_{\boldsymbol X_{t+1:T} | \boldsymbol X_{0:t}}\left(\boldsymbol x_{t+1:T} | \boldsymbol x_{0:t}\right)\mathbb P^{\boldsymbol{\theta}}\left(\boldsymbol{H}_{t+1} = \left(\boldsymbol n ,j\right)| \boldsymbol x_{0:T}\right)}{\mathbb P^{\boldsymbol{\theta}}\left(\boldsymbol{H}_{t+1} = \left(\boldsymbol n,j\right) | \boldsymbol x_{0:t}\right)}  \nonumber
\\&= \mathbb P^{\boldsymbol{\theta}}\left(\boldsymbol{H}_t = \left(\boldsymbol n_t,i\right)|\boldsymbol x_{0:t}\right) p_{ij} 
\cfrac{\mathbb P^{\boldsymbol{\theta}}\left(\boldsymbol{H}_{t+1} = \left(\boldsymbol n ,j\right)| \boldsymbol x_{0:T}\right)}{\mathbb P^{\boldsymbol{\theta}}\left(\boldsymbol{H}_{t+1} = \left(\boldsymbol n,j\right) | \boldsymbol x_{0:t}\right)}  \nonumber 
= \widehat{\alpha}_{\boldsymbol{n}_t}^{\left(t\right)}\left(i\right)  p_{ij} \cfrac{\gamma_{\boldsymbol{n}}^{\left(t\right)}\left(j\right)}{\phi_{\boldsymbol{n} }^{\left(t\right)}\left(j\right)}.
\end{align*}

Writing out \(\boldsymbol{n}\) explicitly for the two cases completes the proof.
\end{proof}

\begin{lemma}
The total complexity of the backward algorithm in Lemma \ref{lemma: mybkwd}, as measured by the total number of multiplications, is less than \(\mathcal O\left(M^2 T^{k+1}k^k\right).\)
\end{lemma}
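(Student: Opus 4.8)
The plan is to mirror the complexity analysis already carried out for the simple forward algorithm in Lemma \ref{lemma: complexity for simple fwd}, since the backward recursion of Lemma \ref{lemma: mybkwd} has the same combinatorial skeleton: at each time step we loop over the augmented states \(\left(\boldsymbol n_t, i\right)\) with \(\boldsymbol n_t \in \mathcal S^{\left(t\right)}\) and \(i \in \mathcal S\), and for each such state we form a sum over \(j \in \mathcal S\). First I would fix a time \(t \in \{0,\dots,T-1\}\) and count the multiplications needed to produce all the \(\gamma^{\left(t\right)}_{\boldsymbol n_t}\left(i\right)\). The efficient ordering is to precompute the ratios \(\gamma^{\left(t+1\right)}_{\boldsymbol m}\left(j\right)/\phi^{\left(t+1\right)}_{\boldsymbol m}\left(j\right)\) once, for every \(\boldsymbol m \in \mathcal S^{\left(t+1\right)}\) and \(j \in \mathcal S\), which costs \(|\mathcal S^{\left(t+1\right)}| M\) divisions. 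Then, for each of the \(|\mathcal S^{\left(t\right)}| M\) pairs \(\left(\boldsymbol n_t, i\right)\), the target counter vector \(\boldsymbol m\) --- equal to \(\boldsymbol n_t + \boldsymbol 1\) when \(i \in \mathcal S_{AR}^c\) and to \(\boldsymbol n_t^{\left(-i\right)} + \boldsymbol 1\) when \(i \in \mathcal S_{AR}\) --- is determined, so forming \(\sum_{j} p_{ij}\left(\text{ratio}\right)\) costs \(M\) multiplications and the final multiplication by \(\widehat\alpha^{\left(t\right)}_{\boldsymbol n_t}\left(i\right)\) costs one more, giving \(|\mathcal S^{\left(t\right)}| M\left(M+1\right)\) multiplications for the inner loop.

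A small point to check before the count goes through is that the map \(\left(\boldsymbol n_t, i\right) \mapsto \boldsymbol m\) always lands in \(\mathcal S^{\left(t+1\right)}\), so that the precomputed table of ratios covers every term the recursion asks for; this follows immediately from the transition structure in Equation (\ref{eqn: transitions of H_t}) and Lemma \ref{lemma: S^t}, and is exactly the observation (already used to set up the recursion) that \(\boldsymbol H_{t+1}\) is determined once \(\boldsymbol H_t\) is known. Collecting the above, the work at time \(t\) is bounded by \(|\mathcal S^{\left(t+1\right)}| M + |\mathcal S^{\left(t\right)}| M\left(M+1\right)\), and summing over \(t\) both summands are \(\mathcal O\!\left(M^2 \sum_t |\mathcal S^{\left(t\right)}|\right)\).

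Summing over \(t = 0,\dots,T-1\) therefore reduces the whole estimate to bounding \(\sum_t |\mathcal S^{\left(t\right)}|\). Here I would reuse the cardinality formula \(|\mathcal S^{\left(t\right)}| = \sum_{m=0}^{\min\left(t,k\right)} \binom{t}{m}\binom{k}{m}m!\) from Lemma \ref{lemma: S^t}, together with the bound \(\binom{t}{m} \le t^m/m!\) already exploited in Lemma \ref{lemma: complexity for simple fwd}, to get \(|\mathcal S^{\left(t\right)}| \le \left(k+1\right) T^k k^k / k!\) uniformly in \(t \le T\), and hence \(\sum_{t} |\mathcal S^{\left(t\right)}| = \mathcal O\!\left(T^{k+1} k^k\right)\). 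Multiplying by the \(M^2\) factor yields the claimed bound \(\mathcal O\!\left(M^2 T^{k+1} k^k\right)\), since the remaining constant \(\left(k+1\right)/k!\) is bounded.

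I do not expect a genuine obstacle here, because the recursion is structurally identical to (and in fact slightly cheaper than) the forward recursion whose complexity has already been established; the only things requiring care are the bookkeeping of which multiplications are reused through the precomputed ratio table and the verification that the counter update keeps every required state inside \(\mathcal S^{\left(t+1\right)}\). The arithmetic bounding \(\sum_t |\mathcal S^{\left(t\right)}|\) is the same chain of inequalities as in the forward case and can be quoted rather than redone.
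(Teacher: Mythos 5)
Your proposal is correct and follows essentially the same route as the paper's proof: precompute the ratios \(\gamma^{\left(t+1\right)}/\phi^{\left(t+1\right)}\) once per time step, count \(M\left(M+1\right)\) multiplications per pair \(\left(\boldsymbol n_t,i\right)\), and then bound \(\sum_t |\mathcal S^{\left(t\right)}|\) via the cardinality formula of Lemma \ref{lemma: S^t} and the inequality \(\binom{t}{m}\leq t^m/m!\), exactly as the paper does by reusing the arguments of Lemma \ref{lemma: complexity for simple fwd}. The only differences are cosmetic bookkeeping (you index the ratio table by \(\mathcal S^{\left(t+1\right)}\) rather than by the corresponding states in \(\mathcal S^{\left(t\right)}\), and you make explicit the observation that the counter update always lands in \(\mathcal S^{\left(t+1\right)}\)), neither of which changes the order of the bound.
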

\begin{proof}
First, for each \(t\in \{T-1,...,0\}\) we need to calculate the ratio \[\cfrac{\gamma_{\boldsymbol{n}_{t+1}}^{\left(t+1\right)}\left(j\right)}{\phi_{\boldsymbol{n}_{t+1}}^{\left(t+1\right)}\left(j\right)}\] for every corresponding \(\boldsymbol n_t\in \mathcal S^{\left(t\right)}\) and \(j \in \mathcal S\). This costs \(M|\mathcal S^{\left(t\right)}|\) multiplications. This quantity is independent of \(i\), thus only needs to be done once for a given \(t\) if we save the results. 

Now consider \(t\), \(i\) and \(\boldsymbol n_t\) fixed. The multiplication of \(p_{ij}\) and \[\cfrac{\gamma_{\boldsymbol{n}_{t+1}}^{\left(t+1\right)}\left(j\right)}{\phi_{\boldsymbol{n}_{t+1}}^{\left(t+1\right)}\left(j\right)}\] is done for every \(j\in \mathcal S\) which costs \(M\) multiplications. The sum over \(j\in \mathcal S\) results in a single term, which is then multiplied by the corresponding \( \widehat{\alpha}_{\boldsymbol{n}_t}^{\left(t\right)}\left(i\right) \), and this costs an additional 1 multiplication. We do this for all \(i\in \mathcal S\) and \(\boldsymbol n_t\in \mathcal S^{\left(t\right)}\), which costs \(\left(M+1\right)M|\mathcal S^{\left(t\right)}|\) multiplications. 

So, for a given \(t\) we execute \(M|\mathcal S^{\left(t\right)}|+\left(M+1\right)M|\mathcal S^{\left(t\right)}|\) multiplications. This is done for every \(t=0,\dots,T-1\), so the total number of multiplications is
\begin{align*}
\sum_{t=0}^{T-1}\left(M|\mathcal S^{\left(t\right)}|+\left(M+1\right)M|\mathcal S^{\left(t\right)}| \right)&= \left(M^2+2M\right)\sum_{t=0}^{T-1}|\mathcal S^{\left(t\right)}|
\\&\leq\left(M^2+2M\right)\cfrac{T\left(T-1\right)^{k}k^{k-1}\left(k+1\right)}{\left(k-1\right)!}
\\&=\mathcal O\left(M^2T^{k+1}k^k\right),
\end{align*}
where we have used similar arguments to Lemma \ref{lemma: complexity for simple fwd} to bound the complexity. \end{proof}


Of importance to the next section, note that we can obtain from the backward algorithm, the smoothed probabilities 
\[\mathbb P^{\boldsymbol{\theta}}\left(R_t = i, N_{t,i} = \ell |\boldsymbol{x}_{0:T}\right) = \sum\limits_{\substack{\boldsymbol n_t \in \mathcal S^{\left(t\right)}: \\ 
    n_{t,i} = \ell }} \mathbb P^{\boldsymbol{\theta}}\left(\boldsymbol{H}_t = \left(\boldsymbol n_t,i\right)|\boldsymbol{x}_{0:T}\right).\]

\section{A novel EM algorithm}\label{myEM}
Here we show how the output from the backward algorithm can be used to implement an exact, computationally feasible EM algorithm for MRS models with independent regimes. 

\subsection{The E-step}\label{myEstep}
Recall that the EM algorithm \citep{dempster1977} is an iterative procedure, alternating between an expectation step and a maximisation step. In the expectation step the function \(Q\left(\boldsymbol \theta, \boldsymbol \theta_n\right)\) is constructed as 
\begin{align} 
Q\left(\boldsymbol \theta, \boldsymbol \theta_n\right) &= \mathbb E [\log f_{\boldsymbol X_{0:T}, \boldsymbol R}^{\boldsymbol\theta} \left(\boldsymbol x_{0:T}, \boldsymbol R\right)|\boldsymbol{x}_{0:T};\boldsymbol{\theta}_n]\nonumber\\&= \mathbb E [\log f_{\boldsymbol X_{0:T}, \boldsymbol H_0,\dots,\boldsymbol{H}_T}^{\boldsymbol\theta} \left(\boldsymbol x_{0:T}, \boldsymbol H_0,\dots,\boldsymbol{H}_T\right)|\boldsymbol{x}_{0:T}; \boldsymbol{\theta}_n],\label{fullEMeqn0}
\end{align}
where \(\boldsymbol R = \left(R_0,\dots,R_T\right)\) is a sequence of the hidden Markov chain \(\{R_t\}\), and \(\left(\boldsymbol H_0,\dots,\boldsymbol{H}_T\right)\) is a sequence of the corresponding augmented hidden process \(\{\boldsymbol{H}_t\}\). The information contained in the sequences \(\boldsymbol R\) and \(\left(\boldsymbol H_0,\dots,\boldsymbol{H}_T\right)\) is entirely equivalent, but we opt for the latter representation to remain consistent with, and emphasise the place of, the work in the previous sections. In the M-step of the algorithm, the maximisers \(\argmax\limits_{\boldsymbol{\theta}\in \Theta} Q\left(\boldsymbol{\theta}, \boldsymbol{\theta}_n\right)\) are found. 

For MRS models \(Q\left(\boldsymbol{\theta},\boldsymbol{\theta}_n\right)\) can be written as 
\begin{align} 
Q\left(\boldsymbol \theta, \boldsymbol \theta_n\right) 
&= \mathbb E\left[ \log f_{\boldsymbol X_{0:T}| \boldsymbol H_0,\dots,\boldsymbol{H}_T}^{\boldsymbol{\theta}}\left(\boldsymbol{x}_{0:T}|\boldsymbol H_0,\dots,\boldsymbol{H}_T\right)|\boldsymbol{x}_{0:T};\boldsymbol{\theta}_n\right]\nonumber \\&\qquad{}+ \mathbb E\left[ \log \mathbb{P}^{\boldsymbol{\theta}}\left(\boldsymbol H_0,\dots,\boldsymbol{H}_T\right)|\boldsymbol{x}_{0:T}; \boldsymbol{\theta}_n\right]. \label{fullEMeqn1}
\end{align}
Using the augmented hidden Markov chain, \(\{\boldsymbol{H}_t\}_{t\in \mathbb N}\), (\ref{fullEMeqn1}) can be written in such a way that the function \(Q\) is computationally feasible. First note that, given \(\boldsymbol{H}_t\) and \(\boldsymbol X_{0:t-1}\), \(X_t\) is independent of \(\boldsymbol{H}_s\) for \(s \neq t\) which allows the function \(\log f^{\boldsymbol\theta}_{\boldsymbol X_{0:T}| \boldsymbol H_0,\dots,\boldsymbol H_T} \left(\boldsymbol x_{0:T}| \boldsymbol H_0,\dots,\boldsymbol H_T\right)\) to be written as
\begin{align}
&\log f^{\boldsymbol\theta}_{\boldsymbol X_{0:T}| \boldsymbol H_0,\dots,\boldsymbol{H}_T} \left(\boldsymbol x_{0:T}| \boldsymbol H_0,\dots,\boldsymbol{H}_T\right)  \nonumber
%
%
\\&= \log f^{\boldsymbol \theta}_{X_0|\boldsymbol{H}_0 }\left(x_0|\boldsymbol{H}_0 \right) + \sum_{t = 1}^T\log f^{\boldsymbol\theta}_{X_t|\boldsymbol{H}_t,\boldsymbol{X}_{0:t-1}} \left( x_t| \boldsymbol{H}_t,\boldsymbol{x}_{0:t-1} \right) \nonumber
\\&=\log \left\{\prod_{j \in \mathcal S}\prod_{\boldsymbol n_0 \in \mathcal S^{\left(0\right)}} f^{\boldsymbol \theta}_{X_0|\boldsymbol{H}_0}\left( x_0|\left(\boldsymbol{n}_0,j\right) \right)^{\mathbb I \left( \boldsymbol{H}_0 = \left(\boldsymbol n_{0}, j\right)\right)}\right\}  
\\&\qquad{}+  \sum_{t = 1}^T\log\left\{ \prod_{j \in \mathcal S}\prod_{\boldsymbol n_t \in \mathcal S^{\left(t\right)} } f_{X_t|\boldsymbol{H}_t, \boldsymbol{X}_{0:t-1}}^{\boldsymbol\theta} \left( x_t| \left(\boldsymbol{n}_t,j\right), \boldsymbol{x}_{0:t-1}\right)^{\mathbb I \left(\boldsymbol{H}_t = \left(\boldsymbol{n}_t,j\right)\right)} \right\}  \nonumber
%
%
%
\\ &=  \sum_{j \in \mathcal S}\sum_{\boldsymbol n_0 \in \mathcal S^{\left(0\right)}}{\mathbb I \left( \boldsymbol{H}_0 = \left(\boldsymbol{n}_0,j\right)\right)} \log f^{\boldsymbol \theta}_{X_0|\boldsymbol{H}_0}\left(x_0| \left(\boldsymbol{n}_0,j\right)\right)  \nonumber
\\&\qquad{}+ \sum_{t = 1}^T  \sum_{j \in \mathcal S}\sum_{\boldsymbol n_t \in \mathcal S^{\left(t\right)} } {\mathbb I \left(\boldsymbol{H}_t = \left(\boldsymbol{n}_t,j\right)\right)} \log  f^{\boldsymbol\theta}_{X_t|\boldsymbol{H}_t , \boldsymbol{X}_{0:t-1} }\left( x_t| \left(\boldsymbol{n}_t,j\right), \boldsymbol{x}_{0:t-1} \right) . \label{eqn: this simplifies to}
\end{align}
Since \(f^{\boldsymbol\theta}_{X_t|\boldsymbol{H}_t , \boldsymbol{X}_{0:t-1}}\left( x_t|  \left(\boldsymbol{n}_t,j\right), \boldsymbol{x}_{0:t-1} \right)  = f^{\boldsymbol\theta}_{X_t|N_{t,j},R_t,\boldsymbol X_{0:t-1}}\left( x_t | n_{t,j}, j, \boldsymbol{x}_{0:t-1} \right) \), and similarly for \(f^{\boldsymbol{\theta}}_{X_0|\boldsymbol{H_0}}\left(x_0|\left(\boldsymbol{n}_{t,j},j\right)\right)\), the expression (\ref{eqn: this simplifies to}) simplifies to
\begin{align}
&\sum_{j \in \mathcal S_{AR}}{\mathbb I \left( N_{0,j} = 1, R_0 = j \right)} \log  f^{\boldsymbol \theta}_{X_0| N_{0,j} ,R_0  }\left(x_0| 1, j \right) 
\\&{}+ \sum_{t = 1}^T  \sum_{j \in \mathcal S_{AR}}\sum_{ m =1 }^{t} {\mathbb I \left( N_{t,j} = m,R_t = j\right)} \log f^{\boldsymbol\theta}_{ X_t | N_{t,j} , R_t,\boldsymbol{X}_{0:t-1}}\left( x_t | m , j,\boldsymbol{x}_{0:t-1} \right) \nonumber
\\& + \sum_{j \in \mathcal S_{AR}^c}{\mathbb I \left( R_0 = j \right)} \log  f^{\boldsymbol \theta}_{X_0| R_0  }\left(x_0| j \right) \nonumber
\\&+ \sum_{t = 1}^T  \sum_{j \in \mathcal S_{AR}^c} {\mathbb I \left( R_t = j\right)} \log f^{\boldsymbol\theta}_{ X_t | R_t}\left( x_t | j\right). \label{takeExpofthis}
\end{align}
Taking the expectation of (\ref{takeExpofthis}) with respect to the distribution \(f^{\boldsymbol{\theta}_n}_{\boldsymbol H_0,\dots,\boldsymbol{H}_T|\boldsymbol{X}_{0:T}}\) (equivalently the distribution \(f^{\boldsymbol{\theta}_n}_{\boldsymbol{R}|\boldsymbol{X}_{0:T}}\)) gives 
\begin{align*}
    & \mathbb E \left[\log f^{\boldsymbol\theta}_{\boldsymbol X_{0:T}, \boldsymbol H_0,\dots,\boldsymbol{H}_T|\boldsymbol{X}_{0:T}} \left(\boldsymbol x_{0:T}, \boldsymbol H_0,\dots,\boldsymbol{H}_T\right)|\boldsymbol{x}_{0:T}; \boldsymbol{\theta}_n\right] \nonumber
    \\&=  \sum_{j \in \mathcal S_{AR}} \mathbb P^{\boldsymbol \theta_n} \left(  N_{0,j} = 1 , R_0 = j |\boldsymbol{x}_{0:T}\right) \log f^{\boldsymbol \theta}_{X_0| N_{0,j},  R_0 }\left(x_0| 1 , j \right)
    \\&{}  + \sum_{t = 1}^T  \sum_{j \in \mathcal S_{AR}}\sum_{m=1}^t {\mathbb P^{\boldsymbol \theta_n} \left( N_{t,j} = m, R_t = j |\boldsymbol{x}_{0:T}\right) } \log f^{\boldsymbol\theta}_{X_t|N_{t,j},R_t,\boldsymbol X_{0:t-1}} \left( x_t| m, j, \boldsymbol{x}_{0:t-1} \right) 
    \\&{}  +\sum_{j \in \mathcal S_{AR}^c} {\mathbb P^{\boldsymbol \theta_n} \left(  R_0 = j  |\boldsymbol{x}_{0:T}\right)} \log  f^{\boldsymbol \theta}_{X_0|R_0}\left(x_0| j \right)  
    \\&{} + \sum_{t = 1}^T  \sum_{j \in \mathcal S_{AR}^c}  {\mathbb P^{\boldsymbol \theta_n} \left(   R_t = j |\boldsymbol{x}_{0:T}\right) } \log  f^{\boldsymbol\theta}_{X_t|R_t} \left( x_t| j  \right).
\end{align*}

Using similar arguments, \(\mathbb E\left[\log\mathbb P^{\boldsymbol \theta} \left(\boldsymbol H_0,\dots,\boldsymbol{H}_T\right)|\boldsymbol x_{0:T}; \boldsymbol{\theta}_n\right]\) is found to be
\begin{align*}
    &\mathbb E\left[\log\mathbb P^{\boldsymbol \theta} \left(\boldsymbol H_0,\dots,\boldsymbol{H}_T\right)  \bigg|\boldsymbol x_{0:T};\boldsymbol{\theta}_n\right] 
   \\ &= \mathbb E\left[\log \left\{\prod_{i\in \mathcal S}\pi_i^{\mathbb I \left(R_0=i\right)}\prod_{i,j \in \mathcal S}p_{ij}^{\eta_{ij}}\right\}\bigg|\boldsymbol x_{0:T};\boldsymbol{\theta}_n\right] 
    %
    %
    \\&= \sum_{i \in \mathcal S} \mathbb P^{\boldsymbol \theta_n} \left(R_0 = i | \boldsymbol x_{0:T}\right)\log \pi_i + \sum_{i,j \in \mathcal S} \mathbb E\left[\eta_{ij}|\boldsymbol x_{0:T};\boldsymbol{\theta}_n\right] \log p_{ij},
\end{align*}
where \(\eta_{ij}\) is the random variable counting the number of transitions from state \(R_{t-1} = i\) to state \(R_t = j\) in the sequence \(\boldsymbol{R} \). The expectation \(\mathbb E\left[\eta_{ij}|\boldsymbol x_{0:T};\boldsymbol{\theta}_n\right]\) can be calculated as 
\begin{align*}
    \mathbb E\left[\eta_{ij}|\boldsymbol x_{0:T};\boldsymbol{\theta}_n\right] & = \mathbb E\left[\sum_{t=1}^T\mathbb I\left(R_{t-1} = i, R_t = j\right)\bigg|\boldsymbol x_{0:T};\boldsymbol{\theta}_n\right]
    %
    %
   \\& = \sum_{t=1}^T\mathbb P^{\boldsymbol{\theta}_n}\left(R_{t-1} = i, R_t = j|\boldsymbol x_{0:T}\right).
\end{align*}
So, the function \(Q\) is
\begin{align}
    &Q\left(\boldsymbol \theta, \boldsymbol \theta_n\right) \nonumber
    %
    %
    \\&=\sum_{j \in \mathcal S_{AR}}{\mathbb P^{\boldsymbol \theta_n} \left( N_{0,j} = 1, R_0 = j |\boldsymbol{x}_{0:T}\right)} \log f^{\boldsymbol \theta}_{X_0|N_{0,j},R_0}\left(x_0| 1 , j \right) \nonumber
    \\&{}+\sum_{j \in \mathcal S_{AR}^c} {\mathbb P^{\boldsymbol \theta_n} \left(  R_0 = j  |\boldsymbol{x}_{0:T}\right)} \log  f^{\boldsymbol \theta}_{X_0|R_0}\left(x_0| j \right) \nonumber
    \\& {} + \sum_{t = 1}^T  \sum_{j \in \mathcal S_{AR}}\sum_{m=1 }^t {\mathbb P^{\boldsymbol \theta_n} \left( N_{t,j} = m ,  R_t = j |\boldsymbol{x}_{0:T}\right) } \log f^{\boldsymbol\theta}_{X_t|N_{t,j},R_t,\boldsymbol X_{0:t-1}} \left( x_t| m, j, \boldsymbol{x}_{0:t-1} \right) \nonumber
    \\& {} + \sum_{t = 1}^T  \sum_{j \in \mathcal S_{AR}^c}  {\mathbb P^{\boldsymbol \theta_n} \left(   R_t = j |\boldsymbol{x}_{0:T}\right) } \log  f^{\boldsymbol\theta}_{X_t|R_t} \left( x_t| j \right) \nonumber
    \\& {} + \sum_{i \in \mathcal S} \mathbb P^{\boldsymbol \theta_n} \left(R_0 = i | \boldsymbol x_{0:T}\right)\log \pi_i + \sum_{i,j \in \mathcal S}\sum_{t=1}^T\mathbb P^{\boldsymbol{\theta}_n}\left(R_{t-1} = i, R_t = j|\boldsymbol x_{0:T}\right)\log p_{ij}.\label{eqn: Q function simple MRS}
    %
    %
    %
\end{align}

\begin{lemma}\label{lemma: joint probs}
    The joint probabilities are given by 
\begin{align}
    &\mathbb P^{\boldsymbol{\theta}_n}\left(R_{t-1} = i, R_t = j|\boldsymbol x_{0:T}\right)\nonumber
    \\&= \begin{cases}
         \mathbb P^{\boldsymbol{\theta}_n}\left(N_{t,i} = 1, R_t = j|\boldsymbol x_{0:T}\right) 
	\qquad\qquad\qquad\qquad\qquad\qquad\quad   \text{ when } i \in \mathcal S_{AR}, 
    \\  \sum\limits_{\boldsymbol n_t-\boldsymbol{1} \in \mathcal S^{\left(t-1\right)}} \mathbb P^{\boldsymbol{\theta}_n}\left(R_t = j, \boldsymbol{N}_{t} = \boldsymbol{n}_{t}|\boldsymbol x_{0:T}\right) \\{}\qquad\times\cfrac{ p_{ij}^{\left(n\right)}\mathbb P^{\boldsymbol{\theta}_n}\left(R_{t-1} = i,\boldsymbol{N}_{t-1} =\boldsymbol{n}_{t}-\boldsymbol{1}| \boldsymbol x_{0:t-1}\right)}{\sum\limits_{k \in \mathcal S_{AR}^c}p_{k j}^{\left(n\right)}\mathbb P^{\boldsymbol{\theta}_n}\left(R_{t-1}=k,\boldsymbol{N}_{t-1} =\boldsymbol{n}_{t}-\boldsymbol{1}|\boldsymbol{x}_{0:t-1}\right)}, \label{eqn: take care here}
\text{ }\text{ when } i \in \mathcal S_{AR}^c.
	\end{cases}
\end{align}
\end{lemma}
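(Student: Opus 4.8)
The plan is to split the argument into the two cases that appear in the statement, according to whether the predecessor regime \(i\) is an AR(1) regime or not, and to treat each using the probabilistic meaning of the counters \(\boldsymbol N_t\) together with the Markov structure of \(\{\boldsymbol H_t\}\) and the smoothed probabilities delivered by the backward algorithm (Lemma~\ref{lemma: mybkwd}).

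For the case \(i \in \mathcal S_{AR}\), I would first observe that the counter \(N_{t,i}\) records precisely the number of steps since regime \(i\) was last visited, so by its very definition the event \(\{R_{t-1}=i\}\) coincides with the event \(\{N_{t,i}=1\}\) whenever \(i\) is an AR(1) regime: if \(R_{t-1}=i\) then the last visit before time \(t\) is at \(t-1\), giving \(N_{t,i}=1\), and conversely. Consequently \(\mathbb P^{\boldsymbol\theta_n}(R_{t-1}=i, R_t=j \mid \boldsymbol x_{0:T}) = \mathbb P^{\boldsymbol\theta_n}(N_{t,i}=1, R_t=j \mid \boldsymbol x_{0:T})\), which is exactly the claimed formula, and this quantity is directly available as the marginal \(\sum_{\boldsymbol n_t \in \mathcal S^{(t)}:\, n_{t,i}=1} \gamma_{\boldsymbol n_t}^{(t)}(j)\) of the smoothed probabilities. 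This case therefore needs essentially no computation beyond invoking the definition of the counter.

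For the case \(i \in \mathcal S_{AR}^c\) the regime \(i\) carries no counter, so I would instead recover the joint probability by conditioning on the full augmented state. Since a transition out of an i.i.d.\ regime advances every counter by one, the only predecessor of \(\boldsymbol H_t = (\boldsymbol n_t, j)\) through regime \(i\) is \(\boldsymbol H_{t-1}=(\boldsymbol n_t-\boldsymbol 1, i)\); hence the law of total probability gives \(\mathbb P^{\boldsymbol\theta_n}(R_{t-1}=i,R_t=j\mid\boldsymbol x_{0:T}) = \sum_{\boldsymbol n_t-\boldsymbol 1 \in \mathcal S^{(t-1)}} \mathbb P^{\boldsymbol\theta_n}(\boldsymbol H_{t-1}=(\boldsymbol n_t-\boldsymbol 1,i),\boldsymbol H_t=(\boldsymbol n_t,j)\mid\boldsymbol x_{0:T})\). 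I would then factor each summand as \(\mathbb P^{\boldsymbol\theta_n}(\boldsymbol H_t=(\boldsymbol n_t,j)\mid\boldsymbol x_{0:T})\,\mathbb P^{\boldsymbol\theta_n}(\boldsymbol H_{t-1}=(\boldsymbol n_t-\boldsymbol 1,i)\mid\boldsymbol H_t=(\boldsymbol n_t,j),\boldsymbol x_{0:T})\), recognising the first factor as \(\gamma_{\boldsymbol n_t}^{(t)}(j) = \mathbb P^{\boldsymbol\theta_n}(R_t=j,\boldsymbol N_t=\boldsymbol n_t\mid\boldsymbol x_{0:T})\). For the second factor I would argue, exactly as in the proof of Lemma~\ref{lemma: mybkwd}, that given \(\boldsymbol H_t\) and \(\boldsymbol x_{0:t-1}\) the past state \(\boldsymbol H_{t-1}\) is independent of the future observations \(\boldsymbol x_{t:T}\), so the conditioning on \(\boldsymbol x_{0:T}\) collapses to conditioning on \(\boldsymbol x_{0:t-1}\); a single application of Bayes' theorem, using \(\mathbb P^{\boldsymbol\theta_n}(\boldsymbol H_t=(\boldsymbol n_t,j)\mid\boldsymbol H_{t-1}=(\boldsymbol n_t-\boldsymbol 1,i),\boldsymbol x_{0:t-1})=p_{ij}^{(n)}\) and expanding the prediction probability \(\mathbb P^{\boldsymbol\theta_n}(\boldsymbol H_t=(\boldsymbol n_t,j)\mid\boldsymbol x_{0:t-1})\) over the admissible i.i.d.\ predecessors \(k\in\mathcal S_{AR}^c\), then yields precisely the ratio in~(\ref{eqn: take care here}).

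The main obstacle is the conditional-independence step in the second case: justifying that, given \(\boldsymbol H_t\) and \(\boldsymbol x_{0:t-1}\), the earlier augmented state \(\boldsymbol H_{t-1}\) carries no further information about \(\boldsymbol x_{t:T}\). This needs slightly more care than in a plain hidden Markov model because the observations carry autoregressive dependence within each regime; however the same chain of conditional-independence identities used in Lemma~\ref{lemma: mybkwd} applies, since the future observations depend on the past only through \(\boldsymbol H_t\) and the already-conditioned values \(\boldsymbol x_{0:t-1}\). Once that reduction is in place the remaining manipulations are routine Bayes'-rule bookkeeping, and matching the denominator to \(\sum_{k\in\mathcal S_{AR}^c} p_{kj}^{(n)}\mathbb P^{\boldsymbol\theta_n}(R_{t-1}=k,\boldsymbol N_{t-1}=\boldsymbol n_t-\boldsymbol 1\mid\boldsymbol x_{0:t-1})\) only requires noting that i.i.d.\ regimes are the sole predecessors that advance all counters by one.
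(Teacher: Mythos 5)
Your proposal is correct and follows essentially the same route as the paper's proof: the identity $\{R_{t-1}=i\}=\{N_{t,i}=1\}$ for $i\in\mathcal S_{AR}$, and, for $i\in\mathcal S_{AR}^c$, marginalising over the counter values, dropping $\boldsymbol x_{t:T}$ from the conditioning via the same conditional-independence argument, and applying Bayes' rule with the prediction probability expanded over predecessors $k\in\mathcal S_{AR}^c$. Your phrasing in terms of the pair $(\boldsymbol H_{t-1},\boldsymbol H_t)$ rather than $(\boldsymbol N_t,R_{t-1},R_t)$ is only a cosmetic difference, since these describe the same events when $i\in\mathcal S_{AR}^c$.
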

This proof follows similar arguments to those in \cite{kim1994}, which develops algorithms for MRS models with dependent regimes. 
\begin{proof}
For the case \(i \in \mathcal S_{AR}\), note that \(N_{t,i}=1\) if and only if \(R_{t-1}=i\) and we are done.

When \(i\in \mathcal S_{AR}^c\) all counters in \(\boldsymbol{n}_t\) are different from 1, so \(\boldsymbol{n}_t-\boldsymbol{1}\in \mathcal S^{\left(t-1\right)}\). Thus
\begin{align}
    &\mathbb P^{\boldsymbol{\theta}_n}\left(R_{t-1} = i, R_t = j|\boldsymbol x_{0:T}\right) \nonumber
\\    &=\sum\limits_{\boldsymbol n_{t}-\boldsymbol{1} \in \mathcal S^{\left(t-1\right)}} \mathbb P^{\boldsymbol{\theta}_n}\left( \boldsymbol{N}_{t} = \boldsymbol{n}_{t}, R_{t-1} = i, R_t = j|\boldsymbol x_{0:T}\right)\nonumber
    \\&= \sum\limits_{\boldsymbol n_t-\boldsymbol{1} \in \mathcal S^{\left(t-1\right)}} \mathbb P^{\boldsymbol{\theta}_n}\left(\boldsymbol{N}_{t} = \boldsymbol{n}_{t}, R_t = j |\boldsymbol x_{0:T}\right)\mathbb P^{\boldsymbol{\theta}_n}\left(R_{t-1} = i| \boldsymbol{N}_{t} = \boldsymbol{n}_{t}, R_t = j, \boldsymbol x_{0:T}\right)\nonumber
    \\&= \sum\limits_{\boldsymbol n_t -\boldsymbol{1}\in \mathcal S^{\left(t-1\right)}} \mathbb P^{\boldsymbol{\theta}_n}\left( \boldsymbol{N}_{t} = \boldsymbol{n}_{t} , R_t = j |\boldsymbol x_{0:T}\right)\nonumber\\&\qquad{}\times\mathbb P^{\boldsymbol{\theta}_n}\left(R_{t-1} = i| \boldsymbol{N}_{t} = \boldsymbol{n}_{t}, R_t = j,  \boldsymbol x_{0:t-1}\right).\label{eqn: continue from here 1}
\end{align}
The last equality 
holds since, 
    %
    %
given \(R_t\) and \(\boldsymbol{N}_t\), then \(\boldsymbol{x}_{t:T}\) is independent of \(R_{t-1}\). Focusing on the right-most term in Equation (\ref{eqn: continue from here 1}),
\begin{align*}
    &\mathbb P^{\boldsymbol{\theta}_n}\left(R_{t-1} = i| \boldsymbol{N}_{t} = \boldsymbol{n}_{t}, R_t = j, \boldsymbol x_{0:t-1}\right)\nonumber
    \\&= \cfrac{\mathbb P^{\boldsymbol{\theta}_n}\left(R_{t} = j| \boldsymbol{N}_{t} =\boldsymbol{n}_{t}, R_{t-1}=i, \boldsymbol x_{0:t-1}\right)\mathbb P^{\boldsymbol{\theta}_n}\left(R_{t-1} = i|\boldsymbol{N}_{t} =\boldsymbol{n}_{t}, \boldsymbol x_{0:t-1}\right)}{\mathbb P^{\boldsymbol{\theta}_n}\left(R_t=j|\boldsymbol{N}_{t} =\boldsymbol{n}_{t},\boldsymbol{x}_{0:t-1}\right)}\nonumber
    \\&= \cfrac{ p_{ij}^{\left(n\right)}\mathbb P^{\boldsymbol{\theta}_n}\left(\boldsymbol{N}_{t} =\boldsymbol{n}_{t}, R_{t-1} = i | \boldsymbol x_{0:t-1}\right)}{\mathbb P^{\boldsymbol{\theta}_n}\left(\boldsymbol{N}_{t} =\boldsymbol{n}_{t}, R_t=j |\boldsymbol{x}_{0:t-1}\right)},
\end{align*}
where \(p_{ij}^{\left(n\right)}\) is the parameter \(p_{ij}\) in \(\boldsymbol{\theta}_n\); the second equality holds since, given \(R_{t-1}\), then \(R_t\) is independent of \(\boldsymbol{N}_t\) and \(\boldsymbol{X}_{0:t-1}\). 

Now, notice that 
\[\mathbb P^{\boldsymbol{\theta}_n}\left( \boldsymbol{N}_{t} =\boldsymbol{n}_{t} , R_{t-1} = i | \boldsymbol x_{0:t-1}\right) = \mathbb P^{\boldsymbol{\theta}_n}\left( \boldsymbol{N}_{t-1} =\boldsymbol{n}_{t}-\boldsymbol{1} , R_{t-1} = i | \boldsymbol x_{0:t-1}\right),\]
since \(i \in \mathcal S_{AR}^c\), and that
    %
    %
\[\mathbb P^{\boldsymbol{\theta}_n}\left(\boldsymbol{N}_{t} =\boldsymbol{n}_{t} , R_t=j |\boldsymbol{x}_{0:t-1}\right)=\sum\limits_{k\in \mathcal S_{AR}^c}p_{kj}^{\left(n\right)}\mathbb P^{\boldsymbol{\theta}_n}\left( \boldsymbol{N}_{t-1} =\boldsymbol{n}_{t}-\boldsymbol{1} , R_{t-1}=k |\boldsymbol{x}_{0:t-1}\right),\]
with the sum in the denominator being over \(k\in \mathcal S_{AR}^c\) since only when \(k\in \mathcal S_{AR}^c\) is \(\boldsymbol{n}_t-\boldsymbol{1}\in \mathcal S^{(t-1)}\) defined; this completes the proof. 
\end{proof}

\subsection{The M-step}\label{myMstep}
Next, the maximisers, \(\boldsymbol\theta_{n+1} = \argmax\limits_{\boldsymbol\theta \in \Theta} Q\left(\boldsymbol\theta,\boldsymbol\theta_n\right)\), are needed. The maximisers for the parameters of each regime are generally problem specific, but the maximisers for the parameters \(p_{ij}\),  \(i,j \in \mathcal S\), can be derived in general. By the work of \cite{hamilton1990},
\begin{align*}
    p_{ij}^{\left(n+1\right)} = \cfrac{\sum\limits_{t=1}^T \mathbb P^{\boldsymbol \theta_n}\left(R_t = j, R_{t-1} = i | \boldsymbol x_{0:T}\right)}{\sum\limits_{t=1}^T \mathbb P^{\boldsymbol \theta_n}\left(R_{t-1} = i | \boldsymbol x_{0:T}\right)}.
\end{align*}
However, note that to get this analytic update for the \(p_{ij}^{\left(n+1\right)}\) parameters, terms involving \(\pi_j\) in Equation (\ref{eqn: Q function simple MRS}) have been treated as if they are unrelated to \(p_{ij}\), \(i,j\in \mathcal S\). However, this is not true when \(\boldsymbol \pi\) is specified as the stationary distribution of the process \(\{R_t\}\), but holds for other cases, such as when \(\boldsymbol \pi\) is some predetermined distribution, or when \(\boldsymbol \pi\) is specified as a parameter to be inferred. Nonetheless, this simplification is appropriate if we assume that, as the sample size grows, the contribution of terms involving \(R_0\) become insignificant. 

\subsection{Model-specific M-step updates}\label{application:M-step}
In electricity price models, it is common to specify \emph{spike} or \emph{drop} regimes as either shifted-Gamma, shifted-log-normal, or occasionally a Gaussian distribution. Here we derive M-step updates for these regimes.
\begin{corollary}\label{lemma: M-step for i.i.d. processes}
Suppose Regime \(i\) is i.i.d.~\(N\left(\mu_i, \sigma_i^2\right)\). The M-step updates \(\mu_i^{\left(n+1\right)}\) and \(\left(\sigma^{\left(n+1\right)}_i\right)^{2}\), for \(n\geq 0\), are 
\begin{align*}
    \mu_i^{\left(n+1\right)} &= \cfrac{\sum\limits_{t=0}^{T}\mathbb P^{\boldsymbol \theta_n}\left(R_t = i|\boldsymbol x_{0:T}\right)x_t}{\sum\limits_{t=0}^{T}\mathbb P^{\boldsymbol \theta_n}\left(R_t = i|\boldsymbol x_{0:T}\right)},
    \\ \left(\sigma_i^{\left(n+1\right)}\right)^{2} &= \cfrac{\sum\limits_{t=0}^{T}\mathbb P^{\boldsymbol \theta_n}\left(R_t = i|\boldsymbol x_{0:T}\right)\left(x_t-\mu_i^{\left(n+1\right)}\right)^2}{\sum\limits_{t=0}^{T}\mathbb P^{\boldsymbol \theta_n}\left(R_t = i|\boldsymbol x_{0:T}\right)}.
\end{align*}
\end{corollary}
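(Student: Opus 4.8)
The plan is to start from the expression for \(Q(\boldsymbol\theta,\boldsymbol\theta_n)\) in Equation (\ref{eqn: Q function simple MRS}) and isolate the terms depending on the parameters \(\mu_i\) and \(\sigma_i^2\) of Regime \(i\). Since Regime \(i\) is i.i.d., we have \(i\in\mathcal S_{AR}^c\), so the only summands in \(Q\) that involve these parameters are the \(t=0\) contribution \(\mathbb P^{\boldsymbol\theta_n}(R_0=i|\boldsymbol x_{0:T})\log f^{\boldsymbol\theta}_{X_0|R_0}(x_0|i)\) and the \(t=1,\dots,T\) contribution \(\sum_{t=1}^T\mathbb P^{\boldsymbol\theta_n}(R_t=i|\boldsymbol x_{0:T})\log f^{\boldsymbol\theta}_{X_t|R_t}(x_t|i)\). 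The transition parameters \(p_{ij}\), the initial-distribution terms, and the parameters of every other regime do not involve \(\mu_i\) or \(\sigma_i^2\); hence maximising \(Q\) over \((\mu_i,\sigma_i^2)\) decouples from the rest, and it suffices to maximise the sum of these two regime-\(i\) terms alone.

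Next I would merge the \(t=0\) and \(t\geq1\) terms. Because Regime \(i\) is i.i.d., the densities \(f^{\boldsymbol\theta}_{X_0|R_0}(x_0|i)\) and \(f^{\boldsymbol\theta}_{X_t|R_t}(x_t|i)\) are the same \(N(\mu_i,\sigma_i^2)\) density with no dependence on lagged observations, so the two contributions combine into a single sum over \(t=0,\dots,T\). Writing \(w_t:=\mathbb P^{\boldsymbol\theta_n}(R_t=i|\boldsymbol x_{0:T})\) for the smoothed probabilities delivered by the backward algorithm of Lemma \ref{lemma: mybkwd}, substituting the Gaussian density, and discarding the additive constant \(-\tfrac12\log(2\pi)\), the objective to be maximised is
\[\sum_{t=0}^T w_t\left[-\tfrac12\log\sigma_i^2-\cfrac{(x_t-\mu_i)^2}{2\sigma_i^2}\right].\]

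Finally I would impose first-order conditions. Differentiating with respect to \(\mu_i\) and setting the result to zero gives \(\sum_{t=0}^T w_t(x_t-\mu_i)=0\), which rearranges to the stated weighted mean \(\mu_i^{(n+1)}\). Differentiating with respect to \(\sigma_i^2\) and setting to zero gives \(\sum_{t=0}^T w_t\big[(x_t-\mu_i)^2-\sigma_i^2\big]=0\); evaluating this at \(\mu_i=\mu_i^{(n+1)}\) yields the stated weighted variance \((\sigma_i^{(n+1)})^2\). This is exactly the familiar weighted maximum-likelihood computation for a Gaussian, so no substantive obstacle arises; the only points needing care are verifying that these critical points are genuine maximisers (the objective is strictly concave in \(\mu_i\) for fixed \(\sigma_i^2>0\), and the profiled objective in \(\sigma_i^2\) after substituting \(\mu_i^{(n+1)}\) has a unique interior maximiser), and confirming that the \(t=0\) and \(t\geq1\) contributions combine cleanly precisely because the regime is i.i.d.
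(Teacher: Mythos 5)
Your proposal is correct and follows essentially the same route as the paper's proof: isolate the regime-\(i\) terms of \(Q\left(\cdot,\boldsymbol\theta_n\right)\), differentiate, solve the first-order conditions, and verify maximality. The only cosmetic difference is in how maximality in \(\sigma_i^2\) is checked (you profile out \(\mu_i\) and note the unique interior critical point with the correct boundary behaviour, while the paper compares \(Q\) values directly via the inequality \(1-1/y\leq\log y\)); both are standard and equivalent in substance.
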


\begin{proof}
The results holds after differentiating the \(Q\left(\cdot,\boldsymbol\theta_n\right)\) and solving for zeros. That \(\mu_i^{\left(n+1\right)}\) is a maximiser is shown by the second derivative test. Furthermore, \(\left(\sigma_i^{\left(n+1\right)}\right)^{2}\) can be shown to be a maximiser by comparing the value of \(Q\left(\boldsymbol\theta,\boldsymbol\theta_n\right)\) when \(\sigma_i^2 = \left(\sigma_i^{\left(n+1\right)}\right)^{2}\) to the value of \(Q\left(\boldsymbol\theta,\boldsymbol\theta_n\right)\) evaluated at any other value of \(\sigma_i^2\), and utilising the inequality \(1-{1}/{y} \leq \log y\).
\end{proof}

    %



At time \(t\), if \(R_t=i\) is a shifted-Gamma regime so that \(X_t -q_i \sim Gam\left(\mu_i, \sigma_i^2\right)\), where \(q_i\) is known, the M-step is not completely analytic. However, the dimension of the maximisation problem can be reduced from 2-dimensional to 1-dimensional via the following corollary. 

\begin{corollary}\label{lemma: M-step for gamma i.i.d. processes}
Suppose Regime \(i\) follows an i.i.d.~shifted-Gamma distribution, that is, if \(X_t\) is from Regime \(i\), then \(\left(X_t-q_i\right)\sim Gam\left(\mu_i, \sigma_i^2\right)\), and suppose the parameter \(q_i\) is known. The M-step update for the scale parameter, \(\left(\sigma_i^{\left(n+1\right)}\right)^2\), as a function of \(\mu_i^{\left(n+1\right)}\), is 
\begin{align*}
    \left(\sigma_i^{\left(n+1\right)}\left(\mu_i^{\left(n+1\right)}\right)\right)^2 &= \mu_i^{\left(n+1\right)}\cfrac{\sum\limits_{t=0}^{T}\mathbb P^{\boldsymbol \theta_n}\left(R_t = i|\boldsymbol x_{0:T}\right)\left(x_t-q_i\right)}{\sum\limits_{t=0}^{T}\mathbb P^{\boldsymbol \theta_n}\left(R_t = i|\boldsymbol x_{0:T}\right)}.
    %
\end{align*}
The update for \(\mu_i\) is then found by finding 
\begin{align*}
    \mu_i^{\left(n+1\right)} = \argmax_{\mu_i \in \left(0,\infty\right)}& \Bigg\{-\mu_i \log \left(\sigma_i^{\left(n+1\right)}\left(\mu_i\right)\right)^2\sum\limits_{t=0}^{T}\mathbb P^{\boldsymbol \theta_n}\left(R_t = i|\boldsymbol x_{0:T}\right) 
    	\\&{}- \log\Gamma\left(\mu_i\right)\sum\limits_{t=0}^{T}\mathbb P^{\boldsymbol \theta_n}\left(R_t = i|\boldsymbol x_{0:T}\right) 
    \\&{} + \left(\mu_i-1\right)\sum\limits_{t=0}^{T}\mathbb P^{\boldsymbol \theta_n}\left(R_t = i|\boldsymbol x_{0:T}\right)\log\left(x_t-q_i\right) 
    	\\&{}- \mu_i\sum\limits_{t=0}^{T}\mathbb P^{\boldsymbol \theta_n}\left(R_t = i|\boldsymbol x_{0:T}\right)\Bigg\} ,
\end{align*}
where \(\Gamma\left(\cdot\right)\) is the Gamma function. 
\end{corollary}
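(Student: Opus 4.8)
The plan is to extract from the $Q$ function in Equation~(\ref{eqn: Q function simple MRS}) the only terms that depend on the shifted-Gamma parameters $\mu_i$ and $\sigma_i^2$, and then to maximise by first concentrating out $\sigma_i^2$, exactly as in the proof of Corollary~\ref{lemma: M-step for i.i.d. processes} but with the Gamma density replacing the Gaussian one. Since Regime $i$ is i.i.d., the only relevant contribution is $\sum_{t=0}^{T}\mathbb P^{\boldsymbol\theta_n}(R_t=i\mid\boldsymbol x_{0:T})\log f^{\boldsymbol\theta}_{X_t\mid R_t}(x_t\mid i)$. Writing $w_t:=\mathbb P^{\boldsymbol\theta_n}(R_t=i\mid\boldsymbol x_{0:T})$, $W:=\sum_{t=0}^{T}w_t$ and $y_t:=x_t-q_i$, and inserting the shifted-Gamma log-density
\[
\log f^{\boldsymbol\theta}_{X_t\mid R_t}(x_t\mid i)=-\log\Gamma(\mu_i)-\mu_i\log(\sigma_i^2)+(\mu_i-1)\log y_t-\frac{y_t}{\sigma_i^2},
\]
the $(\mu_i,\sigma_i^2)$-dependent part of $Q$ becomes $Q_i=-W\log\Gamma(\mu_i)-\mu_i W\log(\sigma_i^2)+(\mu_i-1)\sum_t w_t\log y_t-\tfrac{1}{\sigma_i^2}\sum_t w_t y_t$.

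First I would hold $\mu_i$ fixed and solve the stationarity equation $\partial Q_i/\partial(\sigma_i^2)=0$. Because $Q_i$ depends on $\sigma_i^2$ only through $\log(\sigma_i^2)$ and $1/\sigma_i^2$, this condition is linear in $1/\sigma_i^2$ after clearing denominators and therefore solves in closed form, yielding the expression for $(\sigma_i^{(n+1)})^2$ as an explicit function of $\mu_i$. A sign check of $\partial^2 Q_i/\partial(\sigma_i^2)^2$ at the stationary point confirms it is the maximiser in $\sigma_i^2$ for each fixed $\mu_i$.

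Next I would substitute this concentrated value $\sigma_i^2(\mu_i)$ back into $Q_i$. The key simplification is that the data term $-\sum_t w_t y_t/\sigma_i^2(\mu_i)$ collapses, the $\sigma_i^2$ denominator cancelling to leave a term proportional to $\mu_i$; maximising $Q_i$ jointly is thus equivalent to maximising over $\mu_i\in(0,\infty)$ exactly the one-dimensional profiled objective written in the statement. This delivers the promised reduction from a two-dimensional to a one-dimensional optimisation.

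The main obstacle is the $\mu_i$-update itself. Differentiating the profiled objective produces the digamma function $\psi(\mu_i)=\Gamma'(\mu_i)/\Gamma(\mu_i)$, so the stationarity equation in $\mu_i$ is transcendental and has no closed-form root; this is precisely why the update can only be presented as an $\argmax$ to be computed numerically. To finish rigorously I would show this $\argmax$ is well defined: the profiled objective is strictly concave (its second derivative is negative on $(0,\infty)$, using the standard inequality $\psi'(\mu)>1/\mu$), it tends to $-\infty$ as $\mu_i\to0^+$ and as $\mu_i\to\infty$, and the weighted Jensen inequality $\sum_t w_t\log y_t\le W\log\!\big(\sum_t w_t y_t/W\big)$ forces the unique stationary point to lie in the interior. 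Concavity together with these boundary limits guarantees a unique maximiser, completing the proof.
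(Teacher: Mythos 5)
Your method is exactly the paper's: isolate the Regime-$i$ terms of $Q$, differentiate with respect to $\sigma_i^2$ holding $\mu_i$ fixed, confirm a maximum by the second derivative test, and substitute back to get the one-dimensional profiled objective in $\mu_i$. The paper's recorded proof is a single sentence saying precisely this; your extra material (strict concavity via $\psi'(\mu)>1/\mu$, the boundary limits, and Jensen's inequality to place the maximiser in the interior) goes beyond it and is a welcome addition, since it shows the $\argmax$ is well defined.

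However, you gloss over a point at which your computation, done honestly, does not deliver the corollary's first display. With $w_t=\mathbb P^{\boldsymbol\theta_n}\left(R_t=i|\boldsymbol x_{0:T}\right)$, $W=\sum_t w_t$ and $S_y=\sum_t w_t\left(x_t-q_i\right)$, the stationarity condition you describe is
\[
0=\frac{\partial Q_i}{\partial\left(\sigma_i^2\right)}=-\frac{\mu_i W}{\sigma_i^2}+\frac{S_y}{\left(\sigma_i^2\right)^2}
\quad\Longrightarrow\quad
\sigma_i^2\left(\mu_i\right)=\frac{1}{\mu_i}\,\frac{S_y}{W},
\]
i.e.\ the concentrated scale is the weighted mean \emph{divided} by $\mu_i$, whereas the statement prints it \emph{multiplied} by $\mu_i^{\left(n+1\right)}$. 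So your claim that this step ``yield[s] the expression for $(\sigma_i^{(n+1)})^2$'' in the statement cannot be literally true; the printed display appears to contain a typo ($\mu_i$ in place of $1/\mu_i$). Your own collapsing argument confirms this: the data term $-S_y/\sigma_i^2\left(\mu_i\right)$ reduces to $-\mu_i W$, matching the term $-\mu_i\sum_t w_t$ in the printed profiled objective, only if $\sigma_i^2\left(\mu_i\right)=S_y/\left(\mu_i W\right)$; with the printed formula it would instead reduce to $-W/\mu_i$ and the stated $\mu_i$-objective would be wrong. In short, your approach is the paper's approach and is sound, but a complete write-up should carry out this one-line computation explicitly and note that it corrects, rather than reproduces, the formula as printed.
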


\begin{proof}
The result follows after differentiating \(Q\) with respect to \(\sigma_i^2\), and solving for the stationary point, which is a maximum by the second derivative test. 
\end{proof}

\begin{corollary}\label{lemma: M-step for log-normal i.i.d. processes}
Suppose Regime \(i\) follows i.i.d.~shifted-log-normal dynamics, that is, if \(X_t\) is from Regime \(i\), then \(\log\left(X_t-q_i\right)\sim N\left(\mu_i, \sigma_i^2\right)\), and suppose the parameter \(q_i\) is known. The M-step updates are
\begin{align*}
    \mu_i^{\left(n+1\right)} &= \cfrac{\sum\limits_{t=0}^{T}\mathbb P^{\boldsymbol \theta_n}\left(R_t = i|\boldsymbol x_{0:T}\right)\log\left(x_t-q_i\right)}{\sum\limits_{t=0}^{T}\mathbb P^{\boldsymbol \theta_n}\left(R_t = i|\boldsymbol x_{0:T}\right)},
   \\ \left(\sigma_i^{\left(n+1\right)}\right)^2 &= \cfrac{\sum\limits_{t=0}^{T}\mathbb P^{\boldsymbol \theta_n}\left(R_t = i|\boldsymbol x_{0:T}\right)\left(\log\left(x_t-q_i\right)-\mu_i^{\left(n+1\right)}\right)^2}{\sum\limits_{t=0}^{T}\mathbb P^{\boldsymbol \theta_n}\left(R_t = i|\boldsymbol x_{0:T}\right)}.
\end{align*}

\end{corollary}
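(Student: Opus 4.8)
The plan is to isolate the terms of $Q(\boldsymbol\theta,\boldsymbol\theta_n)$ in Equation (\ref{eqn: Q function simple MRS}) that depend on the parameters $\mu_i$ and $\sigma_i^2$, and then maximise over these. Since Regime $i$ is i.i.d., it lies in $\mathcal S_{AR}^c$, so the only contributions involving $\mu_i$ and $\sigma_i^2$ come from the two sums containing $\log f^{\boldsymbol\theta}_{X_t|R_t}(x_t|i)$ (namely the $R_0=i$ term and the $R_t=i$ terms for $t=1,\dots,T$), each weighted by the smoothed probability $\mathbb P^{\boldsymbol\theta_n}(R_t=i|\boldsymbol x_{0:T})$. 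Writing $w_t := \mathbb P^{\boldsymbol\theta_n}(R_t=i|\boldsymbol x_{0:T})$, the relevant part of $Q$ is $\sum_{t=0}^T w_t \log f^{\boldsymbol\theta}_{X_t|R_t}(x_t|i)$.

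Next I would substitute the shifted-log-normal density. Since $\log(X_t-q_i)\sim N(\mu_i,\sigma_i^2)$, a change of variables gives
\[
\log f^{\boldsymbol\theta}_{X_t|R_t}(x_t|i) = -\log(x_t-q_i) - \tfrac12\log(2\pi\sigma_i^2) - \cfrac{\left(\log(x_t-q_i)-\mu_i\right)^2}{2\sigma_i^2},
\]
valid for $x_t>q_i$. The key observation is that the Jacobian term $-\log(x_t-q_i)$ and the constant $\tfrac12\log(2\pi)$ do not depend on $\mu_i$ or $\sigma_i^2$, so upon differentiation they vanish. Consequently, writing $y_t:=\log(x_t-q_i)$, the objective is exactly the weighted Gaussian log-likelihood in $(\mu_i,\sigma_i^2)$ on the transformed data $y_t$ with weights $w_t$; that is, the maximisation reduces to precisely the problem solved in Corollary \ref{lemma: M-step for i.i.d. processes} with each $x_t$ replaced by $y_t=\log(x_t-q_i)$.

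Given this reduction, I would obtain the updates by differentiating $\sum_{t=0}^T w_t\left[-\tfrac12\log\sigma_i^2 - (y_t-\mu_i)^2/(2\sigma_i^2)\right]$. Setting the derivative with respect to $\mu_i$ to zero and solving yields $\mu_i^{(n+1)}=\sum_t w_t y_t/\sum_t w_t$, which is the stated update, and the second-derivative test confirms this is a maximiser. Setting the derivative with respect to $\sigma_i^2$ to zero and solving gives $(\sigma_i^{(n+1)})^2 = \sum_t w_t (y_t-\mu_i^{(n+1)})^2 / \sum_t w_t$. That this is a global maximiser in $\sigma_i^2$ follows exactly as in the proof of Corollary \ref{lemma: M-step for i.i.d. processes}, by comparing $Q$ at the candidate value against any other value and invoking the inequality $1-1/y\le\log y$.

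I do not anticipate a genuine obstacle here: the entire argument is the weighted-Gaussian M-step already established for the normal regime. The only point requiring care is the bookkeeping of the change of variables---confirming that the parameter-independent Jacobian term drops out and that the support restriction $x_t>q_i$ (guaranteed for observations actually assigned positive smoothed probability to Regime $i$) causes no difficulty---after which the result is immediate from Corollary \ref{lemma: M-step for i.i.d. processes}.
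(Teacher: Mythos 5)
Your proposal is correct and follows essentially the same route as the paper: the paper's proof simply states that the argument is the same as for the Gaussian i.i.d.\ regime (differentiate \(Q\), solve for zeros, second-derivative test for \(\mu_i\), and the comparison argument via \(1-1/y\leq\log y\) for \(\sigma_i^2\)), which is exactly the reduction you carry out after noting that the Jacobian term \(-\log(x_t-q_i)\) is parameter-free. Your write-up is more explicit about the change of variables and the transformed data \(y_t=\log(x_t-q_i)\), but it is the same argument.
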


\begin{proof}
The proof is similar to the proof of Corollary \ref{lemma: M-step for i.i.d. processes}. 
\end{proof}

Note that Corollaries \ref{lemma: M-step for gamma i.i.d. processes} and  \ref{lemma: M-step for log-normal i.i.d. processes} assume the parameter \(q_i\) is known. This is necessary for the shifted-log-normal distribution \citep{hill1963} and the shifted-Gamma distribution when the shape parameter \(\mu_i\) is less than 1 \citep{johnson1994}. Furthermore, for the shifted-Gamma distribution, \cite{johnson1994} observe that related issues arise when \(\mu\) is near 1, and advise against maximum likelihood estimation of \(q_i\) when \(\mu<2.5\). Simulations suggest this is also good advice when fitting MRS models with shifted-Gamma regimes \citep{lewis2018}. 

In electricity price modelling literature it is common to specify a `base regime' as an AR(1) process. In existing literature the AR(1) regimes are assumed to evolve at every time \(t\) but are only observed when in that regime. Another possibility is that the AR(1) processes evolve only when they are observed, that is, define \(\tau_i\left(t\right) = \sum\limits_{\ell=0}^t\mathbb I\left(R_\ell=i\right)\), and the AR(1) process in Regime~\(i\) as \(\{B_{\tau_i\left(t\right)}\}_{\tau_i\left(t\right)\in \mathbb N}\). Our algorithms are applicable to both specifications; however, for simplicity, here we treat the former only. For more details on the latter specification, see \cite{lewis2018}.

\begin{corollary}
If Regime \(i\) is an AR(1) regime of an MRS model, the M-step of the EM algorithm can be executed as follows. The updates \(\alpha_i^{\left(n+1\right)}\) and \(\left(\sigma_i^{\left(n+1\right)}\right)^2\) as functions of \(\phi_i^{\left(n+1\right)}\) are
\begin{align}
    \alpha_i^{\left(n+1\right)}&\left(\phi_i^{\left(n+1\right)}\right) 
    = \cfrac{\sum\limits_{t=0}^T\sum\limits_{m=1}^{t+1} \mathbb P^{\boldsymbol{\theta}_n}\left(R_t=i,N_{t,i}=m|\boldsymbol x_{0:T}\right)B_{t,m}^{\left(i\right)}}{\sum\limits_{t=0}^T\sum\limits_{m=1}^{t+1} \mathbb P^{\boldsymbol{\theta}_n}\left(R_t=i,N_{t,i}=m|\boldsymbol x_{0:T}\right)A_{t,m}^{\left(i\right)}}, \nonumber
    \\
    \sigma_i^{\left(n+1\right)}&\left(\phi_i^{\left(n+1\right)}\right)^2 
    = \cfrac{\sum\limits_{t=0}^T\sum\limits_{m=1}^{t+1} \mathbb P^{\boldsymbol{\theta}_n}\left(R_t=i,N_{t,i}=m|\boldsymbol x_{0:T}\right) 
	C_{t,m}^{\left(i\right)}
    }{
    \sum\limits_{t=0}^T \mathbb P^{\boldsymbol{\theta}_n}\left( R_t = i |\boldsymbol x_{0:T}\right)}, \nonumber
    \\\text{where } \nonumber
    \\A_{t,m}^{\left(i\right)} &= {\left(\cfrac{1-\left(\phi_i^{\left(n+1\right)}\right)^{m}}{1-\phi_i^{\left(n+1\right)}}\right)}{\left(\cfrac{1+\left(\phi_i^{\left(n+1\right)}\right)}{1+\left(\phi_i^{\left(n+1\right)}\right)^{m}}\right)}, \nonumber
    \\
	B_{t,m}^{\left(i\right)} &= \left(x_t-\left(\phi_i^{\left(n+1\right)}\right)^{m}x_{t-m}\right)\cfrac{1+\left(\phi_i^{\left(n+1\right)}\right)}{1+\left(\phi_i^{\left(n+1\right)}\right)^{m}}, \nonumber
\\        C_{t,m}^{\left(i\right)} &= \cfrac{\left(x_t-\alpha_i^{\left(n+1\right)}\left(\phi_i^{\left(n+1\right)}\right)\left(\cfrac{1-\left(\phi_i^{\left(n+1\right)}\right)^{m}}{1-\phi_i^{\left(n+1\right)}}\right)-\left(\phi_i^{\left(n+1\right)}\right)^{m}x_{t-m}\right)^2}{\left(\cfrac{1-\left(\phi_i^{\left(n+1\right)}\right)^{2m}}{1-\left(\phi_i^{\left(n+1\right)}\right)^{2}}\right)}. \nonumber
\end{align}
The M-step update for \(\phi_i^{\left(n+1\right)}\) is given by
\begin{align*}
    \phi_i^{\left(n+1\right)} & = \argmax_{\phi_i \in \left(-1,1\right)} g\left(\phi_i\right) \\&= \argmax_{\phi_i \in \left(-1,1\right)}  \sum_{t = 0}^T  \sum_{m=1 }^{t+1}  {\mathbb P^{\boldsymbol \theta_n} \left( R_t=i,N_{t,i}=m|\boldsymbol{x}_{0:T}\right) L_{t,m}\left(\phi_i,\sigma_i^{\left(n+1\right)}\right)},
\end{align*}
where
\begin{align*}
    L_{t,m}\left(\phi_i,\sigma_i^{\left(n+1\right)}\right) &= \frac{1}{2}\log\left\{{\cfrac{1-\phi_i^2}{1-\phi_i^{2m}}}\right\} 
    -\log\left\{{\sigma_i^{\left(n+1\right)}\left(\phi_i\right)}\right\}.
\end{align*}
\end{corollary}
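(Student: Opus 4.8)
The plan is to recognise that, inside the function $Q$ of Equation (\ref{eqn: Q function simple MRS}), the only terms depending on the AR(1) parameters $\alpha_i,\phi_i,\sigma_i$ are those in the double sum $\sum_{t}\sum_{m}\mathbb P^{\boldsymbol\theta_n}(R_t=i,N_{t,i}=m\mid\boldsymbol x_{0:T})\log f^{\boldsymbol\theta}_{X_t\mid N_{t,i},R_t,\boldsymbol X_{0:t-1}}(x_t\mid m,i,\boldsymbol x_{0:t-1})$, whose smoothed weights are supplied by the backward algorithm via the marginalisation noted at the end of Section~\ref{mybkwd}. First I would make the conditional density explicit. Conditioning on $N_{t,i}=m$ means the last observation generated by Regime $i$ was $x_{t-m}$, so iterating the recursion $B^i_s=\alpha_i+\phi_iB^i_{s-1}+\sigma_i\varepsilon^i_s$ exactly $m$ times from $B^i_{t-m}=x_{t-m}$ gives $B^i_t=\alpha_i\frac{1-\phi_i^{m}}{1-\phi_i}+\phi_i^{m}x_{t-m}+\sigma_i\sum_{\ell=0}^{m-1}\phi_i^{\ell}\varepsilon^i_{t-\ell}$. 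Hence $X_t$ is Gaussian with mean $\alpha_i\frac{1-\phi_i^{m}}{1-\phi_i}+\phi_i^{m}x_{t-m}$ and variance $\sigma_i^2\frac{1-\phi_i^{2m}}{1-\phi_i^{2}}$, and its log-density is a routine expression in these two quantities.

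Next I would concentrate out $\alpha_i$ and $\sigma_i^2$, which enter analytically. Differentiating the weighted sum with respect to $\alpha_i$ and setting the derivative to zero yields a linear equation in $\alpha_i$, because $\alpha_i$ appears only linearly inside the squared residual. The key algebraic simplification is that the coefficient picked up from the inverse-variance factor times the derivative of the mean collapses, namely $\frac{1-\phi_i^{2}}{1-\phi_i^{2m}}\cdot\frac{1-\phi_i^{m}}{1-\phi_i}=\frac{1+\phi_i}{1+\phi_i^{m}}$; this produces exactly the quantities $A^{(i)}_{t,m}$ and $B^{(i)}_{t,m}$, giving $\alpha_i^{(n+1)}(\phi_i^{(n+1)})$ as the stated ratio. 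Differentiating with respect to $\sigma_i^2$ and setting the result to zero is even more direct: the stationary condition equates $\sigma_i^2\sum_{t}\mathbb P^{\boldsymbol\theta_n}(R_t=i\mid\boldsymbol x_{0:T})$ to $\sum_{t,m}$-weighted standardised squared residuals, which are precisely the $C^{(i)}_{t,m}$ terms, yielding $\left(\sigma_i^{(n+1)}(\phi_i^{(n+1)})\right)^2$; a second-derivative check confirms these are maximisers for fixed $\phi_i$.

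Finally I would substitute $\alpha_i^{(n+1)}(\phi_i)$ and $\left(\sigma_i^{(n+1)}(\phi_i)\right)^2$ back into the weighted sum to obtain a profile objective in $\phi_i$ alone. The useful observation here is that, at the optimal $\sigma_i^2$, the aggregate of the standardised-residual terms equals the constant $\tfrac12\sum_{t}\mathbb P^{\boldsymbol\theta_n}(R_t=i\mid\boldsymbol x_{0:T})$, so it drops out of the maximisation; what remains is the sum of the $L_{t,m}(\phi_i,\sigma_i^{(n+1)})$ terms, collecting the $\tfrac12\log\!\frac{1-\phi_i^{2}}{1-\phi_i^{2m}}$ pieces and the $-\log\sigma_i^{(n+1)}(\phi_i)$ piece. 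This reduces the M-step to the one-dimensional maximisation of $g$ over $(-1,1)$. The main obstacle is not any single derivative but the fact that the dependence on the counter $m$ makes every coefficient a nonlinear function of $\phi_i$, so $\phi_i$ cannot be solved in closed form and the profiling step is essential; care is also needed with the boundary term $m=t+1$ (Regime $i$ never previously visited), which must be interpreted via the stationary/initial law of the AR(1) process, recovered as the $\phi_i^{m}\to0$ limit of the same formulae.
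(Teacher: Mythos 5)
Your proposal is correct and follows essentially the same route as the paper's proof: differentiate \(Q\) with respect to \(\alpha_i\) and \(\sigma_i^2\) for fixed \(\phi_i\), verify the stationary points are maximisers, then substitute the profiled maximisers back into \(Q\) and collect the \(\phi_i\)-dependent terms to obtain \(g\), maximised numerically over \((-1,1)\) by the stationarity assumption. You in fact supply details the paper leaves implicit — the explicit conditional Gaussian density obtained by iterating the AR(1) recursion \(m\) steps, the simplification \(\frac{1-\phi_i^{2}}{1-\phi_i^{2m}}\cdot\frac{1-\phi_i^{m}}{1-\phi_i}=\frac{1+\phi_i}{1+\phi_i^{m}}\), the cancellation of the standardised-residual sum at the optimal \(\sigma_i^2\), and the treatment of the \(m=t+1\) boundary case — but these are elaborations of, not departures from, the paper's argument.
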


\begin{proof}
Differentiate \(Q\) with respect to \(\alpha_i\) and \(\sigma_i^2\) and solve for when the derivative is zero. The parameter \(\alpha_i\) can be shown to be a maximiser by the second derivative test, and \(\sigma_i^2\) can be seen to be a maximiser using the same argument as used in the proof of Corollary \ref{lemma: M-step for i.i.d. processes}. Next, substitute the maximisers, \(\alpha_i^{\left(n+1\right)}\left(\phi_i^{\left(n+1\right)}\right)\) and \(\sigma_i^{\left(n+1\right)}\left(\phi_i^{\left(n+1\right)}\right)^2 \), into (\ref{eqn: Q function simple MRS}) and collect all terms involving \(\phi_i\), to give the function \(g\). That we need to search for the global maximiser of \(g\) on the interval \(\left(-1,1\right)\) only comes from the fact that we have assumed Regime \(i\) is a stationary or mean-reverting process, in which case \(|\phi_i|<1\) is a necessary condition.
\end{proof}

\paragraph{Remark 4.1: Truncation}
The forward and backward algorithms can be computationally costly when \(T\) and/or \(k\) are large, so it may be preferential (or necessary) to truncate the problem. We suggest that the memory of each of the AR(1) processes be truncated. That is, for all \(\boldsymbol n_t\) such that \(n_{t,i}>D-1\) for some \(i\in \mathcal S_{AR}\), we let 
\(f_{X_t|\boldsymbol H_t, \boldsymbol X_{0:t-1}}\left(x_t|\left(\boldsymbol n_t,i\right), \boldsymbol x_{0:t-1}\right) = \widetilde f_{X_t|R_t, \boldsymbol X_{0:t-1}}\left(x_t|i, \boldsymbol x_{0:t-1}\right),\)
where \(\widetilde f_{X_t|R_t, \boldsymbol X_{0:t-1}}\left(x_t|i, \boldsymbol x_{0:t-1}\right)\) is a density that does not depend on \(\boldsymbol n_t\). An appropriate choice of \(\widetilde f\) will be problem-specific. This truncation is equivalent to truncating the state space of \(\boldsymbol H_t\) so that \(N_{t,i}\in \{1,\dots,D\}\), and adjusting the transitions of \(\boldsymbol H_t\) so that the counters remain at \(D\), rather than continuing to increase as they would in the original process. It can be shown that the complexities of the truncated algorithms are \(\mathcal O(M^2D^kTk^k)\). For processes that decay to stationary between times at which they are observed (such as the AR(1) processes in independent-regime MRS models used in this paper, see Section \ref{application:M-step}), then taking \(\widetilde f_{X_t|R_t, \boldsymbol X_{0:t-1}}\left(x_t|i, \boldsymbol x_{0:t-1}\right)\) as the stationary distribution in Regime \(i\) is a logical choice. Furthermore, \(D\) should be chosen large enough so that there is a low probability that \(\{R_t\}\) ever visits any specific state for more than \(D\) consecutive transitions and of course \(D>k\).

\section{Simulation studies}\label{simulations}
We perform a simulation study to examine the properties of the algorithms. The models used in the simulations are the following:
\begin{align}
	X_t &= 
		\begin{cases}
			B_{t} & \text{for }R_t=1,\\
			S_{t} & \text{for }R_t=2,
		\end{cases} \tag{Model 1}
\end{align}
where \(\{B_{t}\}\) is an AR(1) process defined by \(B_{t} = 0+0.75 B_{_{t}-1} + \varepsilon_{t}\), \linebreak\(\varepsilon_{t}\sim\) i.i.d.~\(N(0,1)\), \(S_{t}\sim\) i.i.d~N\(\left(0,1\right)\) and \(\{R_t\}\) is a Markov chain with state space \(\mathcal S=\{1,2\}\), transition matrix entries \(p_{11}=p_{22}=0.9\), and initial distribution \(\left(0.5,0.5\right)\); 
\begin{align}
	X_t &= 
		\begin{cases}
			B_{t}^{\left(1\right)} & \text{for }R_t=1,\\
			B_{t}^{\left(2\right)} & \text{for }R_t=2,
		\end{cases} \tag{Model 2}
\end{align}
where \(\{B_{t}^{\left(i\right)}\}\) are independent AR(1) process defined by \(B_{t}^{\left(1\right)} = 0+0.9 B_{t-1}^{\left(1\right)} + \varepsilon^{\left(1\right)}_{t}\) and \(B_{t}^{\left(2\right)} = 0+0.4 B_{t-1}^{\left(2\right)} + \varepsilon_{t}^{\left(2\right)}\), \(\varepsilon_{t}^{\left(i\right)}\sim\) i.i.d.~\(N(0,1)\) \(i=1,2\), and \(\{R_t\}\) is a Markov chain with state space \(\mathcal S=\{1,2\}\), transition matrix entries \(p_{11}=p_{22}=0.6\), and initial distribution \(\left(0.5,0.5\right)\).  

\begin{figure}[htbp]
\begin{center}
\includegraphics[width=\textwidth]{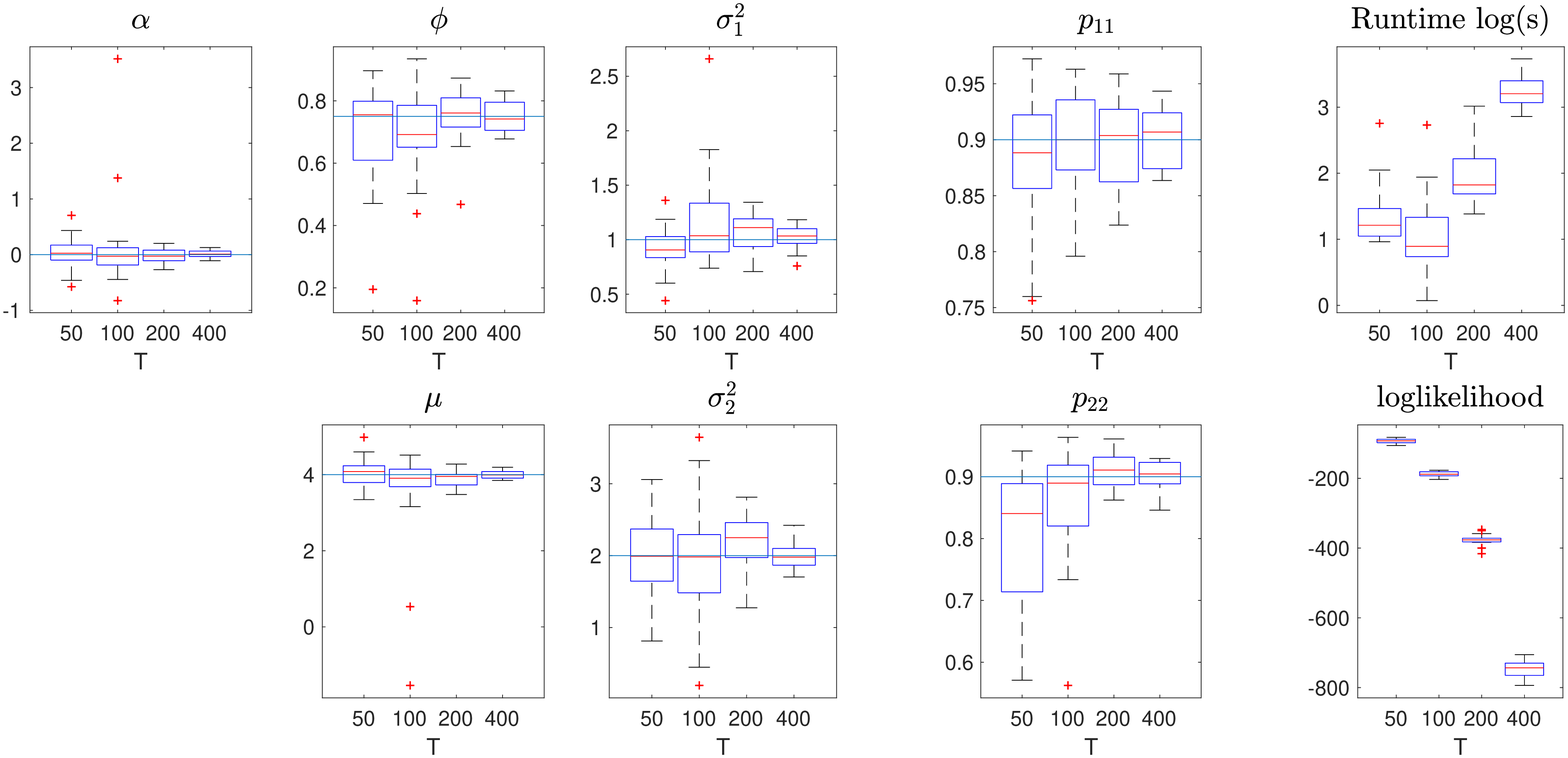}%
\vspace{0.1cm}%
\hrule%
\vspace{0.1cm}%
\includegraphics[width=\textwidth]{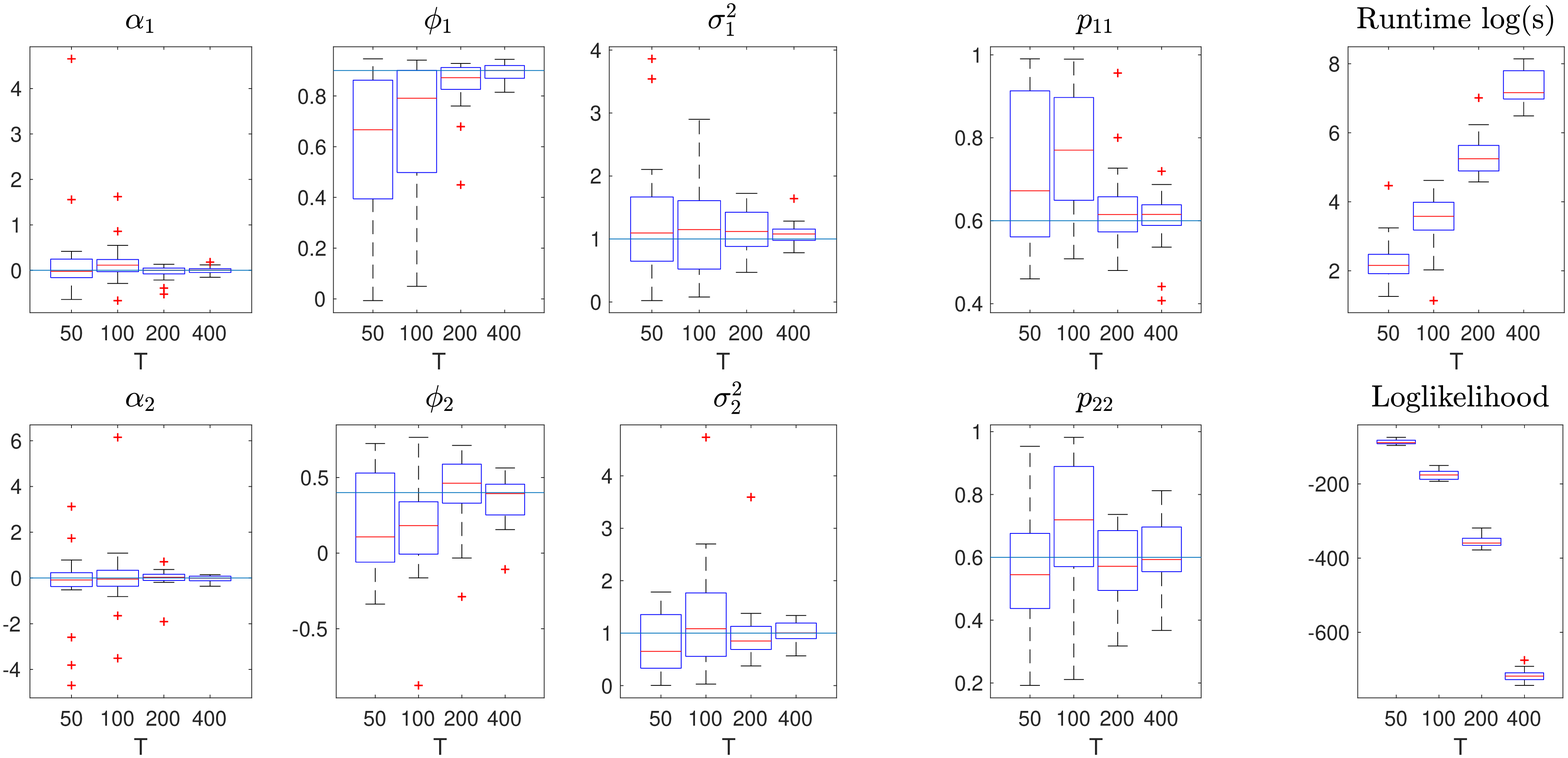}
\caption{Box plots of MLEs, runtime (in log-seconds) and loglikelihood for Models 1 (top) and 2 (bottom) found via the EM algorithm. Each boxplot contains 20 independently simulated data sets. The x-axis is the length of the simulated data set, \(T=50,100,200\) and \(400\). The true parameter values are marked as horizontal lines. For both models the MLE appears to be consistent. For Model 2 (bottom), the two regimes have similar behaviour with only the parameters \(\phi_i\) differing between them by 0.25. With a data set of length \(T=400\) the algorithm appears to be able to differentiate the two regimes.}
\label{fig:bias}
\end{center}
\end{figure}%
To investigate the bias and consistency of the MLE, 20 independent realisations of Models 1 and 2 were simulated for \(T=50,100,200\) and \(400\), and the EM algorithm used to find the MLE. The terminating criteria for the algorithms was to stop when either the increase in the likelihood, or the step-size, as measure by \mbox{\(|\boldsymbol{\theta}_{n+1} - \boldsymbol{\theta}_{n}|_\infty\)} was less that \(1.5\times10^-8\). To attempt to avoid local maxima, the EM algorithm was initialised from 100 randomised values centred around the true parameters. Of the corresponding 100 terminating points of the EM algorithm, the parameters that achieved the highest loglikelihood value were kept. Figure~\ref{fig:bias} shows box plots of the 20 terminating points of the EM algorithm, one for each simulation, as well as the log-runtime and value of the loglikelihood. For both models the MLE appears to be converging to the true parameter value as sample sizes increase. Generally, there appears to be a much larger variation in the MLEs for Model 2 than for Model 1. This could be because the inference problem for Model 2 is harder, as the regimes in Model 2 are more similar than they are in Model 1, or because of the nature of the hidden Markov chain is such that there is a lower probability of remaining in each regime (\(p_{11}=p_{22}=0.6\) for Model 2, compared to \(p_{11}=p_{22}=0.9\) for Model 1), or both. Regarding the run time, for data sets of length 100 and greater, the empirical results suggest the complexity is approximately \(\mathcal O\left(T^{1.02}\right)\) for Model 1 and \(\mathcal O\left(T^{1.70}\right)\) for Model 2, which agrees with, and is significantly better than, our theoretical upper bound.

\begin{figure}[htbp]
\begin{center}
\includegraphics[width=\textwidth]{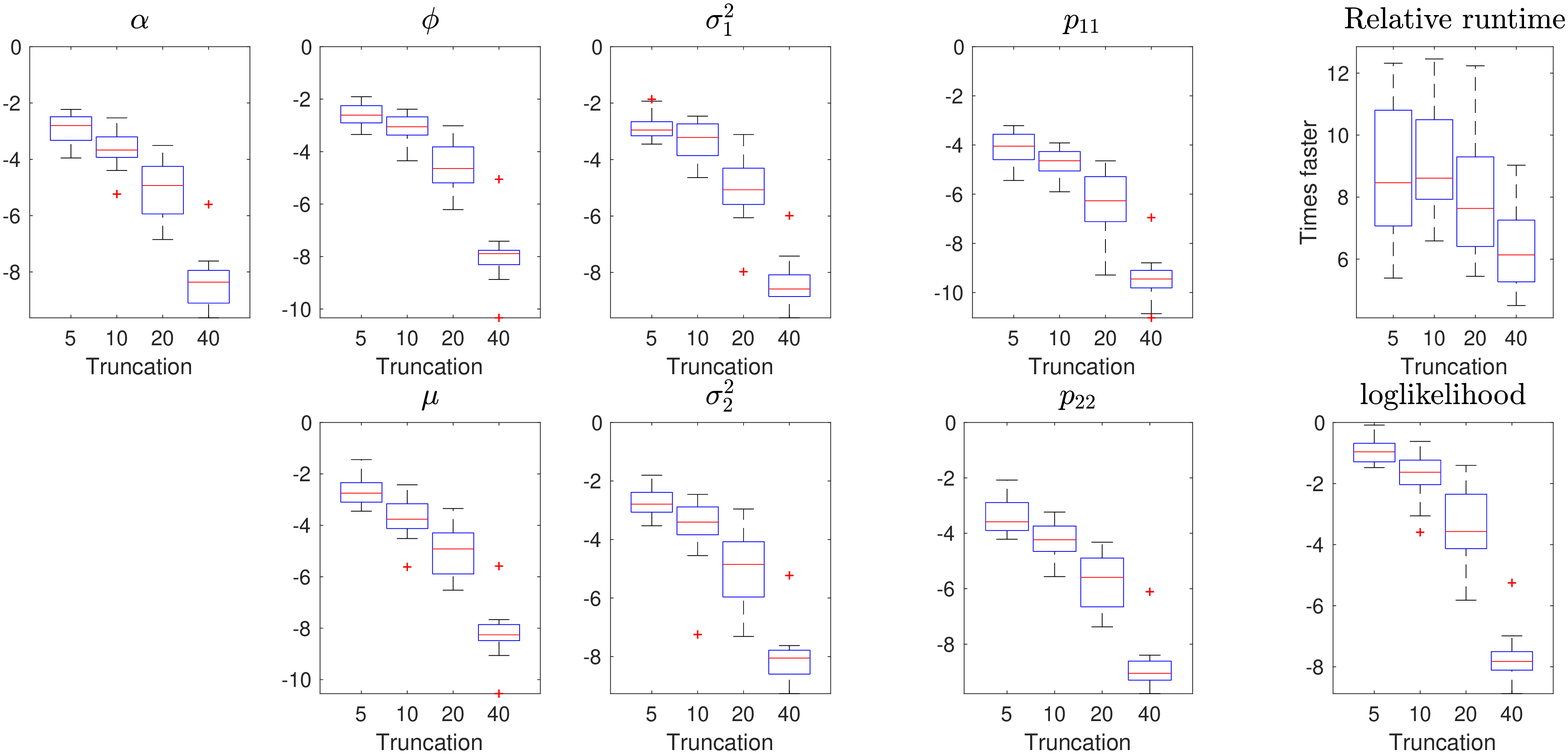}%
\vspace{0.1cm}%
\hrule%
\vspace{0.1cm}%
\includegraphics[width=\textwidth]{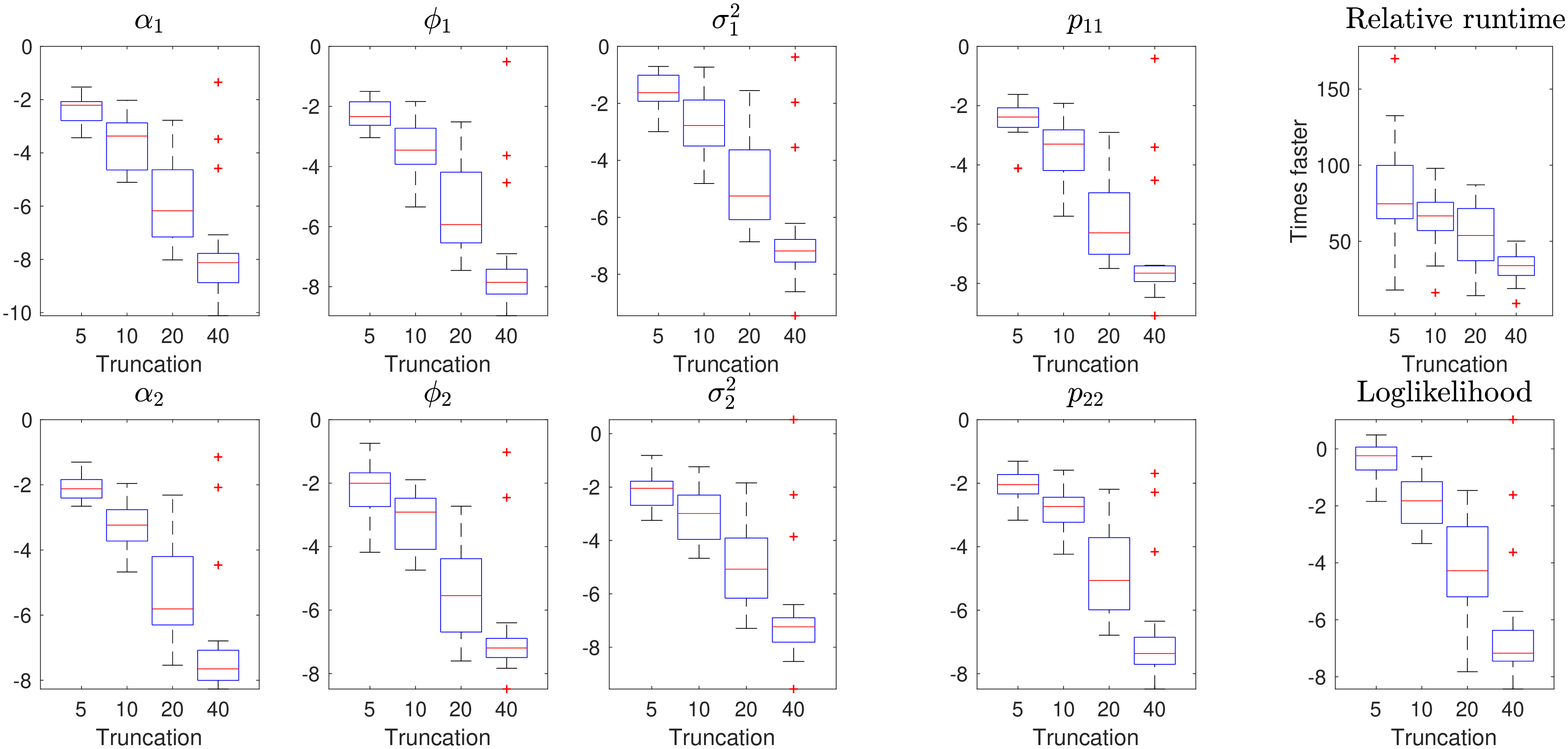}
\caption{Box plots of the \(\log_{10}\) of the absolute value of the difference between the MLEs and loglikelihood values found by the full algorithm and the truncated algorithm, and plots of the ratio of the runtime of the truncated algorithm compared to the full algorithm, for Model 1 (top) and Model 2 (bottom). Each boxplot contains 20 independently simulated data sets. The x-axis is the maximum memory of the counters, \(D=5,10,20,40\). For plots corresponding to parameter estimates or loglikelihood values, the y-axis is the \(\log_{10}\) error between the MLE found by the full algorithm and the truncated algorithm, for the runtime plots the y-axis is multiples of the runtime of the full algorithm. For both models, a truncation level of \(D=40\) gives a relative error of the order approximately \(10^{-8}\) which is the order of the stopping criterion of the EM algorithm.}
\label{fig:trunc}
\end{center}
\end{figure}

To investigate the truncation method, we used the same data sets simulated above for \(T=400\) and applied our truncated algorithm to this data with various truncation levels, \(D=5,10,20,40\). The termination criteria for the EM algorithm was to stop when either the increase in the likelihood, or the step-size, as measure by \mbox{\(|\boldsymbol{\theta}_{n+1} - \boldsymbol{\theta}_{n}|_\infty\)} was less that \(1.5\times10^{-8}\). Figure~\ref{fig:trunc} plots the \(\log_{10}\) of the absolute value of the difference between the parameters recovered by the truncated algorithm and the full algorithm. Figure~\ref{fig:trunc} also shows \(\log_{10}\) of the absolute value of the difference in the loglikelihoods achieved by the truncated algorithm and the full algorithm. Figure  \ref{fig:trunc} also plots of the relative runtime of the truncated algorithm compared to the full algorithm, that is, the ratio of the time taken for the truncated algorithm compared to the full algorithm. The run times for the truncated algorithms are significantly lower than the full algorithm. For these data sets the truncated algorithm performs reasonably well, even when limiting the memory of the counters to just 5 time steps. For a truncation level of 40, the errors are of the order \(10^{-6}\) to \(10^{-10}\) for both models, which is similar to the stopping criteria for the EM algorithm which is of the order \(10^{-8}\). For Model 1 the truncation appears to have a more significant negative effect on the parameter estimates compared to Model 2. This could be because the probability of staying in each regime is higher in Model 1 than in Model 2, and therefore the probability of remaining in one regime for more that \(D\) consecutive transitions is lower in Model 2. It is likely that the error due to truncation of the algorithm is insignificant compared to statistical error of parameter estimates. %

To investigate the convergence properties of the EM algorithm we used five of the simulated data from above with \(T=400\) and used the truncated EM algorithm, with a truncation level of \(D=40\) to search for maxima. For each of the five simulated datasets we ran the EM algorithm 50 times, sampling initial values for the algorithm independently each time and used the same terminating criteria as before. The sampling distributions for the initial parameter values of the algorithm are summarised in Table \ref{table: init distns}. For Model 2 we have ordered the terminating values of the EM algorithm such that \(\phi_1>\phi_2\) for identifiability. 

Observing Figure \ref{fig:randstart} we see that, for most of the simulations and most of the starting values, the EM algorithm finds a single maxima. In Figure \ref{fig:randstart} the black cross represents the point which achieved the highest maximum. We refer to this point as the optimal parameter value. The proportion of times that the algorithm converged to the optimal parameter value is reported in Table \ref{table: nnear}. Simulation 4 of Model 1 is outstanding since only 14\% of initial values resulted in the algorithm finding the optimal value. In Figure \ref{fig:randstart} there are some instances when the algorithm terminates at a suboptimal point; in particular simulated datasets 3 and 4 for Model 1. As shown in. Figure \ref{fig:randstart}, for Model 1 the optimal parameter set appears to reasonably estimate the true parameters for all simulated dataset. 

Estimating Model 2 is much harder since the regimes are both very similar. Figure \ref{fig:randstart} shows that the optimal parameter set estimates the true parameters reasonably for simulated datasets 2-5, but not for simulated dataset 1. The behaviour of the algorithm for simulated dataset 1 for Model 2 is particularly interesting. For this dataset, the algorithm terminates at one of two distinct locations. One of these terminating points is at \(p_{11}=1\) and hence Regime 1 is absorbing. This means the algorithm has converged to a point where all but the first observation are captured by Regime 1. As \(p_{11}\) tends to \(1\) the algorithm is able to send \(\sigma_2^2\to0\) to achieve arbitrarily large values of the likelihood and the model is unidentifiable. To prevent this behaviour, we suggest that the parameters \(\sigma_i^2\) and \(p_{ij}\) are restricted so that they are away from the boundary. Indeed, for simulated dataset 1 of Model 2, if we take the terminating values of the algorithm which lie away from the boundary, then we get reasonable estimates of the true parameters. For more discussion see \cite{lewis2018}.
\begin{table}[bt]
\centering
\caption{Distributions used to sample initial parameter values for the EM algorithm.}\label{table: init distns}
\begin{tabular}{l|cccccc}
Parameter & \(\alpha_i\) & \(\phi_i\) & \(\sigma_i^2\) & \(\mu\) & \(p_{11}\) & \(p_{22}\)\\\hline
Distribution & \(U(-1,1)\) & \(U(-1,1)\) & \(U(0,4)\) & \(U(0,8)\) & \(U(0,1)\) & \(U(0,1)\)
\end{tabular}
\end{table}
\begin{figure}[htbp]
\begin{center}
\includegraphics[width=\textwidth, trim = {120 0 120 0}, clip]{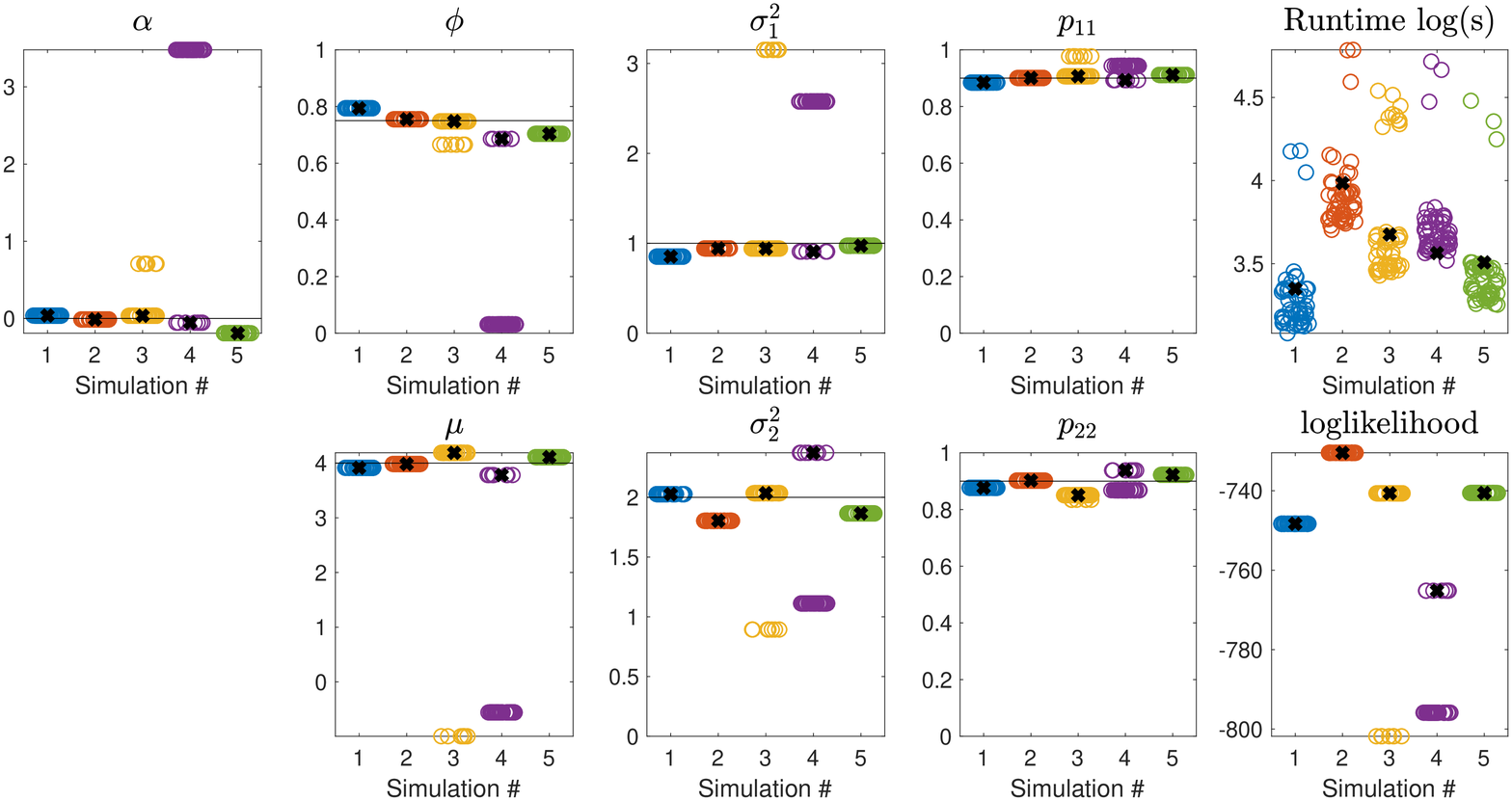}%
\vspace{0.1cm}%
\hrule%
\vspace{0.1cm}%
\includegraphics[width=\textwidth, trim = {120 0 120 0}, clip]{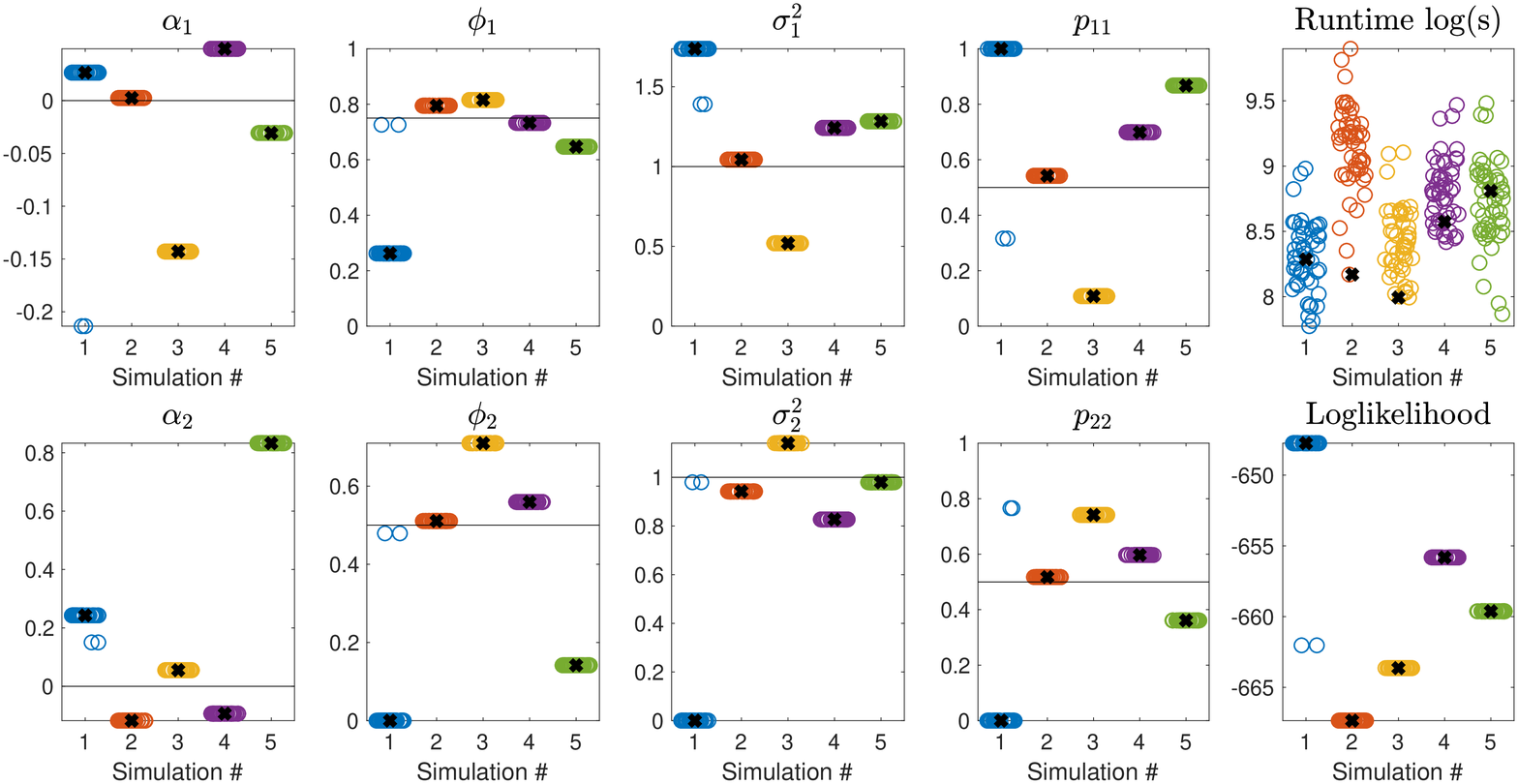}
\caption{Scatter plots of terminating values of the truncated version of the EM algorithm with truncation parameter \(D=40\) for Model 1 (top) and Model 2 (bottom), given random initial values sampled according to the distributions in Table \ref{table: init distns}. For each model 5 datasets of length \(T=400\) were simulated, as indicated on the \(x\)-axis. For each simulation the truncated EM algorithm was run 50 times, initialising the algorithm at independently sampled random initial values each time. The algorithm was run until either the increase in the loglikelihood, or the difference between successive parameter values was less than \(1.5\times10^{-8}\) and the terminating values recorded. The black cross corresponds to values associated with the highest loglikelihood value. Each circle corresponds to a terminating value of the EM algorithm. The points have been `jittered' so they do not all lie on top of each other. }
\label{fig:randstart}
\end{center}
\end{figure}
\begin{table}[bt]
\centering
\caption{Proportion of terminating values which lie within 0.0001 of the optimal parameter value, as measured by the \emph{sup norm}, \(|\cdot|_\infty\).}\label{table: nnear}
\begin{tabular}{l|ccccc}
Simulation & 1 & 2 & 3 & 4 & 5\\\hline
Model 1 & 1 & 1 & 0.86 & 0.14 & 1 \\\hline
Model 2 & 0.96 & 1 & 1 & 1 & 1
\end{tabular}
\end{table}

\section{An application to South Australian wholesale electricity market}\label{application}
The dataset consists of 81,792 half-hourly spot prices from the South Australian electricity market (available at the AEMO website \citep{aemo}) for the period  00:00 hours, \(1^{\text{st}}\) of January 2013, to 23:30 hours, \(31^{\text{st}}\) of September 2017. Note that this dataset contains a period of 14 days over which the market was suspended from 4:00pm, on the \(28^{\text{th}}\) of September until 10:30pm on the \(11^{\text{th}}\) of October. During this period prices were set by AEMO. We ignore this fact in our modelling and include them in the data set anyway.

Following a common practice in the literature we model daily average prices and thus we have a dataset of 1,704 daily average price observations to which we fit our model. The data that we model is plotted in Figure~\ref{fig:classification}.
To model the South Australian wholesale electricity market, we break the price process up in to two components, \(P_t = S_t+X_t\), where \(P_t\) is the price on day \(t\), \(S_t\) is a deterministic trend component, and \(X_t\) is a stochastic component which is to be modelled by an independent regime MRS model. 

Electricity spot prices exhibit seasonality on daily, weekly, and longer scales. To capture this multi-scale seasonality, the trend component consists of two parts: a short-term component, \(g_t\), and a long-term component, \(h_t\), so \(S_t=g_t+h_t\). We model the long-term component, \(h_t\), using wavelet filtering since it has been shown to perform well for this application \citep{janczura2013b}, and use Daubechies 24 wavelets and a level 6 approximation \citep{janczura2013b}. 
We use the short-term component, \(g_t\), to capture the mean price for different days of the week and indicator functions to model this: \begin{align*}g_t &= \beta_\text{Mon}\mathbb I \left(t \in \text{Mon}\right) + \beta_\text{Tue}\mathbb I \left(t \in \text{Tue}\right) + \dots+ \beta_\text{Sun}\mathbb I \left(t \in \text{Sun}\right),\end{align*}
where \(\beta_\text{Mon}\), \(\beta_\text{Tue},\) \dots , \(\beta_\text{Sun}\) are the mean deviations from the long-term trend price on Monday, Tuesday, \dots , Sunday, respectively. 

Following \cite{janczura2013b} we use the RFP (recursive filter on prices) method to estimate the seasonal component in the presence of extreme observations. The method first uses the raw price series to estimate the trend model, then removes this from the data. Next, the standard deviation of these altered prices is estimated, and any prices that are more than three standard deviations from the current estimate of the trend are replaced with the value of the estimated trend at that point. The procedure then re-estimates the trend component on the original data set with spikes removed. 

We consider the following four models for the stochastic component: 
\begin{align}
	X_t = 
	\begin{cases}
		B_{t} & \mbox{if } R_t = 1, \\
		S_t & \mbox{if } R_t = 2,
	\end{cases}\tag{M 1}
\end{align}
where \(B_t=\alpha + \phi B_{t-1} + \sigma_1 \varepsilon_t\) is an AR(1) process with \(\varepsilon_t\sim\) i.i.d.~N\(\left(0,1\right)\), and \(S_t-q_3\), where \(q_3\) is a shifting parameter, follows either a Gamma distribution (M1-Gamma), or a log-normal distribution with parameters \(\mu_2\) and \(\sigma_2^2\) (M1-LN). The last two models introduce a `drop' regime as well: 
\begin{align}
	X_t = 
	\begin{cases}
		B_{t} & \mbox{if } R_t = 1, \\
		S_t & \mbox{if } R_t = 2, \\
		D_t & \mbox{if } R_t = 3,
	\end{cases}\tag{M 2}
\end{align}
where \(B_{t}\) and \(S_t\) are as above, and \(-D_t+q_1\), where \(q_1\) is a shifting parameter, follows a log-normal distribution with parameters \(\mu_3\) and \(\sigma_3^2\). The use of the shifting parameters \(q_1\) and \(q_3\) was proposed by \cite{janczura2009}. As previosuly mentioned, estimating the shifting parameters of these distribution is known to be a difficult task \citep{johnson1994}, so we fix \(q_1\) and \(q_3\) as the first and third quantiles of the detrended data, as suggested by \cite{janczura2012}.

The models were fitted to the detrended data using the truncated EM algorithm with truncation level \(D = 56\) (8 weeks). The Bayesian Information Criterion (BIC) values of these models are reported in Table~\ref{table:BIC}. From the BIC values we choose model M1-LN, with an AR(1) base regime and a log-normal spike regime. The parameters for this model are reported in Table~\ref{table:parameters}.  Of course, for a rigorous treatment of this modelling problem, model assumptions should be checked and we should not rely solely on the BIC. Using the smoothed probabilities obtained while fitting model M1-LN, the prices can be classified into which regime is most likely. This is shown in Figure~\ref{fig:classification}, where prices are highlighted in red if \(\mathbb P^{\widehat{\boldsymbol \theta}}\left(R_t=2|\boldsymbol x_{0:T}\right)>0.5\), where \(\widehat{\boldsymbol\theta}\) is the MLE.

\begin{table}[bt]
\centering
\caption{BIC values of models M1-LN, M1-Gamma, M2-LN, and M2-Gamma, fitted to the SA electricity market data.}\label{table:BIC}
\begin{tabular}{l|cccc}
{Model} & {M1-LN} & {M1-Gamma} & {M2-LN} & {M2-Gamma}\\\hline
{BIC}     &15572     &   15681          &   15575   &   15706	   \\   
\end{tabular}
\end{table}

\begin{table}[bt]
\centering
\caption{Parameter estimates of model M1-LN, which has the lowest BIC, fitted to the SA electricity market data.}\label{table:parameters}
 \begin{tabular}{c|ccccc} 
			\(B_t\) & \(\alpha\) & \(\phi\) & \(\sigma_1^2\) & & \(p_{11}\) \\ 
			& -3.257 & 0.6830 & 213.26 && 0.9140  \\ \hline
			\(S_t\) & \(q_3\) & \(\mu_2\) & \(\sigma_2^2\)& & \(p_{22}\) \\ 
			& 7.106 & 3.751 & 1.268 & & 0.3945  \\ 
\end{tabular}
\end{table}

\begin{figure}[htbp]
\begin{center}
\includegraphics[width = 1\textwidth]{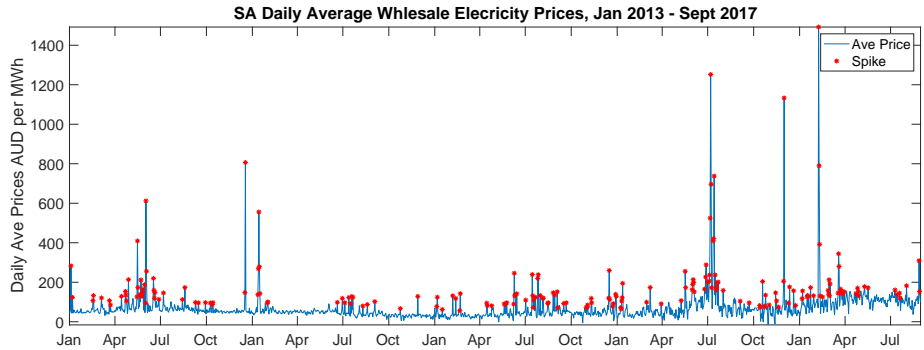}
\caption{SA daily average wholesale electricity price for the period 1-Jan-2013 until 30-Sept-2017. Prices that have greater probability of being from Regime 2 in Model M1-LN are highlighted in red. }
\label{fig:classification}
\end{center}
\end{figure}

\section{Conclusions}\label{conclusion}
In this paper we have developed novel techniques for independent-regime MRS models. Specifically, we consider models that are a collection of independent AR(1) processes, where only one process is observed at each time \(t\), and which regime is observed is determined by a hidden Markov chain. We develop forward, backward and EM algorithms for these models, and show that the methods we develop here can outperform the existing method of inference used in the electricity price modelling literature, the EM-like algorithm \citep{janczura2012}. 

The construction of these methods relies on the idea of augmenting the hidden Markov chain with a set of counters, which keep track of the number of transitions since the last visit to AR(1) regimes. The forward algorithm can be used to evaluate the likelihood and filtered and prediction probabilities, which are used as inputs to the backward algorithm. The backward algorithm is used to evaluate the smoothed probabilities. Together, the forward-backward procedure executes the E-step of the EM algorithm. We showed that the complexity of the forward and backward algorithms is \(\mathcal O\left(M^2T^{k+1}k^k\right)\), where \(M\) is the number of regimes in the model, \(T\) the length of the observed sequence, and \(k\) the number of AR(1) regimes in the model. This complexity may be impractically large if \(T\) or \(k\) are large, so we introduce an approximation where the memory of AR(1) processes in the model is truncated. These truncated methods are \(\mathcal O\left(M^2D^kTk^k\right)\), where \(D\) is the memory of the counters. 

Simulation suggests that the MLE found via the EM algorithm is consistent, and that, even for models with regimes with similar characteristics (such as Model 2), the MLE is still a reasonable estimator when the sample size is large enough (400 observations in this case). Simulations also suggest that the truncation method is a reasonable approximation to the full likelihood, while improving runtime and memory requirements significantly. As is typical for hill-climbing algorithms, there is the possibility of the algorithm terminating at sub-optimal values depending on the initial values of the algorithm. We explored some of this behaviour via a simulation study and found, at most, two local maxima of the likelihood function. Of course, the behaviour of the algorithm is going to be model- and data-specific. This simulation study also highlighted possible identifiability issues which can arise. However, these can be rectified by restricting parameters away from the boundary. 

Lastly, we apply our methods to estimate four models for the South Australian wholesale electricity market and find that a 2-regime model with an AR(1) base regime and shifted log-normal spike regime is best as measured by the BIC. We also demonstrate how prices can be classified into regimes using the smoothed probabilities.



\bibliography{Bibliography}

\begin{thebibliography}{27}
\providecommand{\natexlab}[1]{#1}
\providecommand{\url}[1]{\texttt{#1}}
\expandafter\ifx\csname urlstyle\endcsname\relax
  \providecommand{\doi}[1]{doi: #1}\else
  \providecommand{\doi}{doi: \begingroup \urlstyle{rm}\Url}\fi

\bibitem[Albert(1991)]{albert1991}
P.~Albert.
\newblock A two-state {M}arkov mixture model for a time series of epileptic
  seizure counts.
\newblock \emph{Biometrics}, 47\penalty0 (4):\penalty0 1371—1381, December
  1991.
\newblock ISSN 0006-341X.

\bibitem[Alvaro et~al.(2002)Alvaro, J., and Pablo]{escribano2002}
E.~Alvaro, I.~P. J., and V.~Pablo.
\newblock Modelling electricity prices: International evidence.
\newblock \emph{Oxford Bulletin of Economics and Statistics}, 73\penalty0
  (5):\penalty0 622--650, 2002.

\bibitem[{Australian Energy Market Operator}(2018)]{aemo}
{Australian Energy Market Operator}.
\newblock Data dashboard, 2018.
\newblock Accessed: 2018-02-17.

\bibitem[Baum and Eagon(1967)]{baum1967}
L.~E. Baum and J.~A. Eagon.
\newblock An inequality with applications to statistical estimation for
  probabilistic functions of {M}arkov processes and to a model for ecology.
\newblock \emph{Bulletin of the American Mathematical Society}, 73\penalty0
  (3):\penalty0 360--363, 05 1967.

\bibitem[Baum and Petrie(1966)]{baum1966}
L.~E. Baum and T.~Petrie.
\newblock Statistical inference for probabilistic functions of finite state
  {M}arkov chains.
\newblock \emph{The Annals of Mathematical Statistics}, 37\penalty0
  (6):\penalty0 1554--1563, 12 1966.
\newblock \doi{10.1214/aoms/1177699147}.

\bibitem[Baum et~al.(1970)Baum, Petrie, Soules, and Weiss]{baum1970}
L.~E. Baum, T.~Petrie, G.~Soules, and N.~Weiss.
\newblock A maximization technique occurring in the statistical analysis of
  probabilistic functions of {M}arkov chains.
\newblock \emph{The Annals of Mathematical Statistics}, 41\penalty0
  (1):\penalty0 164--171, 02 1970.
\newblock \doi{10.1214/aoms/1177697196}.

\bibitem[Dempster et~al.(1977)Dempster, Laird, and Rubin]{dempster1977}
A.~P. Dempster, N.~M. Laird, and D.~B. Rubin.
\newblock Maximum likelihood from incomplete data via the {EM} algorithm.
\newblock \emph{Journal of the Royal Statistical Society. Series B
  (Methodological)}, 39\penalty0 (1):\penalty0 1--38, 1977.
\newblock ISSN 00359246.

\bibitem[Deng(2000)]{deng2000}
S.~Deng.
\newblock Stochastic models of energy commodity prices and their applications:
  Mean-reversion with jumps and spikes.
\newblock Working Paper PWP-073, University of California Energy Institute,
  2000.

\bibitem[Ethier and Mount(1998)]{ethier1998}
R.~G. Ethier and T.~D. Mount.
\newblock Estimating the volatility of spot prices in restructured electricity
  markets and the implications for option values.
\newblock PSerc Working Paper, Cornell University, 1998.

\bibitem[Hamilton(1989)]{hamilton1989}
J.~D. Hamilton.
\newblock A new approach to the economic analysis of nonstationary time series
  and the business cycle.
\newblock \emph{Econometrica}, 57\penalty0 (2):\penalty0 357--384, 1989.
\newblock ISSN 00129682, 14680262.

\bibitem[Hamilton(1990)]{hamilton1990}
J.~D. Hamilton.
\newblock Analysis of time series subject to changes in regime.
\newblock \emph{Journal of Econometrics}, 45\penalty0 (1):\penalty0 39--70,
  1990.
\newblock ISSN 0304-4076.

\bibitem[Hill(1963)]{hill1963}
B.~M. Hill.
\newblock The three-parameter lognormal distribution and {B}ayesian analysis of
  a point-source epidemic.
\newblock \emph{Journal of the American Statistical Association}, 58\penalty0
  (301):\penalty0 72--84, 1963.

\bibitem[Huisman and de~Jong(2003)]{huisman2003}
R.~Huisman and C.~de~Jong.
\newblock Option pricing for power prices with spikes.
\newblock \emph{Energy Power Risk Management}, 7\penalty0 (11):\penalty0
  12--16, 2003.

\bibitem[Huisman and Mahieu(2003)]{huisman2003regime}
R.~Huisman and R.~Mahieu.
\newblock Regime jumps in electricity prices.
\newblock \emph{Energy {E}conomics}, 25\penalty0 (5):\penalty0 425--434, 2003.

\bibitem[Janczura and Weron(2009)]{janczura2009}
J.~Janczura and R.~Weron.
\newblock Regime-switching models for electricity spot prices: Introducing
  heteroskedastic base regime dynamics and shifted spike distributions.
\newblock In \emph{2009 6th International Conference on the European Energy
  Market}, pages 1--6, May 2009.

\bibitem[Janczura and Weron(2010)]{janczura2010}
J.~Janczura and R.~Weron.
\newblock An empirical comparison of alternate regime-switching models for
  electricity spot prices.
\newblock \emph{Energy {E}conomics}, 32\penalty0 (5):\penalty0 1059--1073,
  2010.
\newblock ISSN 0140-9883.

\bibitem[Janczura and Weron(2012)]{janczura2012}
J.~Janczura and R.~Weron.
\newblock Efficient estimation of {M}arkov regime-switching models: An
  application to electricity spot prices.
\newblock \emph{Advances in Statistical Analysis}, 96\penalty0 (3):\penalty0
  385--407, 2012.

\bibitem[Janczura et~al.(2013)Janczura, Tr{\"u}ck, Weron, and
  Wolff]{janczura2013b}
J.~Janczura, S.~Tr{\"u}ck, R.~Weron, and R.~C. Wolff.
\newblock Identifying spikes and seasonal components in electricity spot price
  data: A guide to robust modeling.
\newblock \emph{Energy {E}conomics}, 38:\penalty0 96--110, 2013.

\bibitem[Johnson et~al.(1994)Johnson, Kotz, and Balakrishnan]{johnson1994}
N.~L. Johnson, S.~Kotz, and N.~Balakrishnan.
\newblock \emph{Continuous univariate distributions}.
\newblock New York Wiley, 2nd edition, 1994.
\newblock ISBN 0471584959.

\bibitem[Kim(1994)]{kim1994}
C.-J. Kim.
\newblock Dynamic linear models with {M}arkov-switching.
\newblock \emph{Journal of Econometrics}, 60\penalty0 (1-2):\penalty0 1--22,
  January-February 1994.

\bibitem[Levinson et~al.(1983)Levinson, Rabiner, and Sondhi]{levinson1983}
S.~E. Levinson, L.~R. Rabiner, and M.~M. Sondhi.
\newblock An introduction to the application of the theory of probabilistic
  functions of a {M}arkov process to automatic speech recognition.
\newblock \emph{The Bell System Technical Journal}, 62\penalty0 (4):\penalty0
  1035--1074, April 1983.

\bibitem[Lewis(2018)]{lewis2018}
A.~Lewis.
\newblock \emph{Inference of Markovian-regime-switching models with application
  to South Australian electricity prices}.
\newblock Master's thesis. The University of Adelaide, 2018.

\bibitem[Thyer and Kuczera(2000)]{thyer2000}
M.~Thyer and G.~Kuczera.
\newblock Modeling long-term persistence in hydroclimatic time series using a
  hidden state {M}arkov model.
\newblock \emph{Water Resources Research}, 36\penalty0 (11):\penalty0
  3301--3310, 2000.

\bibitem[Weron(2014)]{weron2014}
R.~Weron.
\newblock Electricity price forecasting: A review of the state-of-the-art with
  a look into the future.
\newblock \emph{International Journal of Forecasting}, 30\penalty0
  (4):\penalty0 1030--1081, 2014.

\bibitem[Weron et~al.(2003)Weron, Bierbrauer, and Tr\"{u}ck]{weron2004}
R.~Weron, M.~Bierbrauer, and S.~Tr\"{u}ck.
\newblock {Modeling electricity prices: jump diffusion and regime switching}.
\newblock HSC Research Reports HSC/03/01, Hugo Steinhaus Center, Wroclaw
  University of Technology, 2003.

\bibitem[Yu(2016)]{yu2016}
S.-Z. Yu.
\newblock \emph{Hidden Semi-Markov Models}.
\newblock Elsevier, Boston, 1st edition, 2016.

\bibitem[Zucchini and Guttorp(1991)]{zucchini1991}
W.~Zucchini and P.~Guttorp.
\newblock A hidden {M}arkov model for space-time precipitation.
\newblock \emph{Water Resources Research}, 27\penalty0 (8):\penalty0
  1917--1923, 1991.

\end{thebibliography}

\end{document}